\newcommand{\RR}{{\mathbb R}}
\newcommand{\CC}{{\mathbb C}}
\newcommand{\beq}{\begin{equation}}
\newcommand{\eeq}{\end{equation}}
\newcommand{\ba}{\begin{array}}
\newcommand{\ea}{\end{array}}
\newcommand{\bea}{\begin{eqnarray}}
\newcommand{\eea}{\end{eqnarray}}
\newtheorem{theorem}{Theorem}%[section]
\newtheorem{lemma}{Lemma}[section]
\newtheorem{proposition}{Proposition}
\newtheorem{definition}{Definition}%[section]
\newtheorem{remark}{Remark}%[section]
\numberwithin{equation}{section}
\begin{document}

\begin{center}
{\large   \bf The Cauchy problem for the Pavlov equation with large data} 
 
\vskip 15pt

{\large  Derchyi Wu }

\vskip 8pt

{\texttt\tiny{
 Institute of Mathematics, Academia Sinica, 
Taipei, Taiwan}}

\vskip 5pt

{\today}

\end{center}

\begin{abstract}
The Pavlov equation is one of the simplest integrable systems of vector fields arising from  various problems of mathematical physics and differential geometry which are intensively studied in recent literature. In this report,  solving a nonlinear Riemann-Hilbert problem via a Newtonian iteration scheme, we complete the inverse scattering theory and prove a short time unique solvability of the Cauchy problem of  the Pavlov equation   with large initial data. %For  sufficiently small initial data, our approach yields a global unique  existence of the Cauchy problem of  the Pavlov equation. %This result generalizes our previous result with small data constraints.
\end{abstract}

\vskip10pt 
\noindent

%%%%%%%%%%%%%%%%%%%%%%%%%%%%%%%%%%%%%%%%%%%%%%%%%%%%%%%%%%%%%%%%%%%%%%%%%%%%%%
\section{Introduction}\label{S:introduction}
%%%%%%%%%%%%%%%%%%%%%%%%%%%%%%%%%%%%%%%%%%%%%%%%%%%%%%%%%%%%%%%%%%%%%%%%
Integrable dPDEs (dispersionless partial differential equations), including the dispersionless Kadomtsev-Petviashvili equation \cite{Lin}, \cite{Timman}, \cite{ZK}, the first and second heavenly equations of Plebanski \cite{Plebanski}, the dispersionless 2D Toda (or Boyer-Finley) equation \cite{BF}, \cite{FP}, and the Pavlov equation \cite{D02}, \cite{Ferapontov}, \cite{Pavlov}, are defined by a commutation $[L ,M]=0$ of pairs of one-parameter families of vector fields.  They arise in various problems of mathematical physics and are intensively studied  recently.

Due to lack of dispersion, integrable dPDEs may or may not exhibit a gradient catastrophe at finite time.  Since the Lax operators are vector fields, the  Fourier transform theory used in soliton theory for proving the existence of  eigenfunctions fails and the inverse problem is intrinsically 
nonlinear for integrable dPDEs, unlike the $\bar\partial$-problem formulated for general  soliton equations \cite{ABF}, \cite{BC81}.  At last, no explicit regular localized solutions, like solitons or lumps, exist for integrable dPDEs. 
Therefore, it is important to solve the inverse scattering problem for integrable dPDEs.  A formal inverse scattering theory has been recently constructed, including i) to solve their Cauchy problem, ii) obtain the longtime behavior of solutions, iii) costruct distinguished classes of exact implicit solutions, iv) establish if, due to the lack of dispersion, the nonlinearity of the PDE is ``strong enough'' to cause the gradient catastrophe of localized multidimensional disturbances, and v) to study analytically the breaking mechanism \cite{MS0}-\cite{MS10}.%\cite{MS0}, \cite{MS1}, \cite{MS2}, \cite{MS3}, \cite{MS4}, \cite{MS5}, \cite{MS6}, \cite{MS7}, \cite{MS8}, \cite{MS9}, \cite{MS10}.

The Pavlov equation,
\begin{equation}\label{E:pavlov-LD}
\begin{array}{c}
v_{xt}+v_{yy}+v_xv_{xy}-v_yv_{xx}=0,\\
v=v(x,y,t)\in\mathbb R,~~x,y,t\in\mathbb R, 
\end{array}
\end{equation}arising in the study of integrable hydrodynamic chains  \cite{Pavlov}, and in differential geometry as a particular example of Einstein - Weyl metric \cite{D02}, is the simplest integrable dPDE available in the literature \cite{Pavlov}, \cite{Ferapontov}, \cite{D02}. 
 It was first derived in \cite{Duna1} as a conformal symmetry of the second heavenly equation. In our previous work \cite{GSW}, 
we solve the forward problem  via a  Beltrami-type equation, a first order PDE,  and a shifted Riemann-Hilbert problem and the inverse problem  by a nonlinear integral equation under a small data constraint. More precisely, we justify the complex eigenfunction $\Phi(x,z,\lambda)$, defined by
\beq\label{E:pavlov-ce}
\begin{array}{rl}
\partial_y\Phi(x,y,\lambda)+\left(\lambda+v_x\right)\partial_x\Phi(x,y,\lambda)=0,&\quad x,\,y\in\mathbb R,\,\lambda\in\mathbb C^\pm\\
\Phi(x,y,\lambda)-\left(x-\lambda y\right)\ \to\ 0,&\quad |x|,\,|y|\to\infty,
\end{array}
\eeq
is link to the real eigenfunction $\varphi(x,y,\lambda)$, defined by
\begin{equation} \label{E:pavlov-lax}
\begin{array}{rl}
\partial_y\varphi+\left(\lambda+v_x\right)\partial_x\varphi=0,&\quad\textit{ $x,\,y\in\mathbb R$, $\lambda\in\mathbb R$}\\
\varphi-(x-\lambda y )\to 0, & \quad\textit{ $y\to -\infty$}
\end{array}
\end{equation}
by the equation
\bea
\Phi^{-}(x,y,\lambda)
&=&\varphi(x,y,\lambda)+\chi^-(\varphi(x,y,\lambda),\lambda)\label{E:psi-phi+-pavlov}%\\
%\Phi^+(x,y,\lambda)&=&\overline{\Phi^-(x,y,\lambda)},\label{E:psi-phi-pavlov}%&=&\varphi'(x,y,\lambda)+\chi^-(\varphi'(x,y,\lambda),\lambda),\nonumber
\eea where $\chi^-(\xi,\lambda)$ satisfies a shifted Riemann-Hilbert problem 
\beq
\label{E:shifted-RH-pavlov}
\begin{split}
\sigma(\xi,\lambda)+\chi^+(\xi+\sigma(\xi,\lambda),\lambda)-\chi^-(\xi,\lambda)=0,&\quad \xi\in \RR,\\
\sigma(\xi,\lambda)=\lim_{y\to\infty}\left(\varphi(\xi,y,\lambda)-\xi\right).
\end{split}
\eeq  
Therefore, the inverse scattering theory reduces to deriving  uniform $\xi$-, $\lambda$-asymptotic estimates of $\chi(\xi,\lambda)$ in the direct problem, and solving the associated Riemann-Hilbert problem via the nonlinear integral equation
\begin{equation}\label{E:nonlinear-inv-1}
\psi(x,y,t,\lambda)+\chi^-_R(\psi(x,y,t,\lambda),\lambda)=x-\lambda y-\lambda^2 t-\frac 1\pi\int_R\frac{\chi^-_I(\psi(x,y,t,\zeta),\zeta)}{\zeta-\lambda}d\zeta
\end{equation} in the inverse problem. %One special phenomenon is:  regularity for the Pavlov solution only can be derived for $t> 0$. 
However, for technical reasons, we have to impose two strong assumptions, the compact support condition and small data constraint on the initial data $v(x,y,0)$, to make the above resolution scheme possible \cite{GSW}.

In this report, we aim to solving a large data problem. The difficulties are to   find an effective scheme to solve the nonlinear inverse problem with large data and to establish a Fredholm alternative for the linearization. To do it, we transform the conjunction formula \eqref{E:psi-phi+-pavlov} to   
\begin{equation}\label{E:nrh-in}
\begin{array}{c}
\Phi^+(x,y,\lambda)=\Phi^-(x,y,\lambda)+R(\Phi^-(x,y,\lambda),\lambda).%\\
% (x,y)\notin \Pi\times [-C,\ C],
 \end{array}
 \end{equation}and the inverse problem \eqref{E:nonlinear-inv-1} to the nonlinear Riemann-Hilbert problem 
\begin{equation}\label{E:nrh-in-normalize}
\begin{array}{cl}
\Psi^+(x,y,t,\lambda)=\Psi^-(x,y,t,\lambda)+R(\Psi^-,\lambda), &\lambda\in\RR,\\
\partial_{\overline\lambda}\Psi(x,y,t,\lambda)=0, &\lambda\in\CC^\pm,\\
\Psi(x,y,t,\lambda)-(x-\lambda y-\lambda^2 t)\to 0, &|\lambda|\to\infty,\\
\Psi(x,y,0,\lambda)=\Phi(x,y,\lambda).&
 \end{array}
 \end{equation}\cite{MS3}-\cite{MS7}. Furthermore, we adopt a Newtonian iteration approach to study  \eqref{E:nrh-in-normalize}. The linearization of the Newtonian process 
 turns out to be a non homogeneous Riemann-Hilbert problem of which a Fredholm theory has been well investigated \cite{Ga66}. 
Consequently,  
an index zero condition on the linearization $1+\partial_\zeta R$, vital for a Fredholm theory, and a deformation property of $R(\Phi^-(x,y,\lambda),\lambda)$, needed in the Newtonian iteration scheme, should be justified to make the Newtonian iteration approach feasible.
 
The dispersion term  $x-\lambda y-\lambda^2 t$  causes the estimates, necessary for the above Newtonian iteration approach, growing  inevitably unbounded when  $|y|\to\infty$ or $t$ gets larger. {Precisely, the obstruction to the global solvability is the non existence of $R(\omega^-+x-\lambda y-\lambda^2t,\lambda)$ when $t$ gets larger.} Hence only  a local solvability of the nonlinear Riemann-Hilbert problem is achieved in general. Moreover, without compact support constraints, the quadratic dispersion term  destroys the $L^1(\mathbb R,d\lambda)$ property of $\partial R/\partial t$ as $t>0$. So only a local solvability of  the Lax pair of the Pavlov equation 
%\begin{equation}\label{E:Lax-in}
%\begin{array}{c}
%\partial_y\Psi+\left(\lambda+v_x\right)\partial_x \Psi=0,\\
%\partial_t\Psi+\left(\lambda^2+\lambda v_x-v_y\right)\partial_x\Psi=0,\\ 
%v(x,y,0)=v_0(x,y),
%\end{array}
%\end{equation} 
holds. To derive a local solvability of the Cauchy problem of the Pavlov equation, we still need to impose the compact support condition \cite{GSW}.

%\corr{Note that our approach yields a global unique existence of the Cauchy problem of  the Pavlov equation for  sufficiently small initial data.}
 
The contents of the paper are as follows. In Section \ref{S:forward-problem} and \ref{S:forward-problem-2}, removing the compact support condition on the initial data $v(x,y)$ and using the $L^p(\RR,d\xi)$ - Hilbert transform theory, we %generalize the solvability of the  Beltrami-type equation  and the shifted Riemann-Hilbert problem in \cite{GSW} and 
derive  
$\lambda$-asymptotics of the real eigenfunction $\varphi(x,y,\lambda)$ and the solution to the shifted Riemann-Hilbert problem $\chi(\xi,\lambda)$ %, and the complex eigenfunction $\Phi^\pm(x,y,\lambda)$ 
via  uniform $L^p(\RR,d\xi)$ estimates on them. In Section \ref{S:NRH-dp}, we derive the conjunction formula \eqref{E:nrh-in} between complex eigenfunctions  $\Phi^\pm(x,y,\lambda)$ and define $R(\zeta,\lambda)$ as the spectral data. An index zero condition on $1+\partial_\zeta R$ and a deformation property of $R(\Phi^-(x,y,\lambda),\lambda)$ are justified as well. In Section \ref{S:NRH}, via a Newtonian iteration method and using the $L^p(\RR,d\lambda)$ - Hilbert transform theory, % and $\lambda$-asymptotics of spectral data $R$, 
we solve the short time unique existence of the nonlinear Riemann-Hilbert problem \eqref{E:nrh-in-normalize} with prescribed initial data. The short time unique existence of the Lax pair  and of the Cauchy problem of the Pavlov equation \eqref{E:pavlov-LD} are established in Secion \ref{S:lax-cauchy}.

\noindent
{\bf Acknowledgments}. The author would  like to thank  S. Manakov and P. Santini for their pioneer contribution in the IST of dispersionless PDEs and  S. X. Chen of Fudan university for helpful discussion in the Newtonian iteration method. The author feel very grateful to P. Grinevich for many brilliant inputs in discussion. The author was  partially supported by NSC 103-2115-M-001 -003 -2.
%%%%%%%%%%%%%%%%%%%%%%%%%%%%%%%%%%%%%%%%%%%%%%%%%%%%%%%%%%%%%%%%%%%%%%%
\section{The forward problem I: the real eigenfunctions }\label{S:forward-problem}
%%%%%%%%%%%%%%%%%%%%%%%%%%%%%%%%%%%%%%%%%%%%%%%%%%%%%%%%%%%%%%%%%%%%%%%%%%%%%%
The forward problem of the Pavlov equation with compactly supported initial data has been solved in \cite{GSW}. Removing the compact support restriction, we prove the existence of real eigenfunction $\varphi(x,y,\lambda)$ and derive uniform estimates of scattering data $\sigma(\xi,\lambda)$  
in this section.

Throughout this report, 
\begin{eqnarray*}
&&\mathfrak S =\{f:\mathbb R^2\to\mathbb R| \textit{ $f(x,y)$ is Schwartz in $x$, $y$}\},\\ %, and}\\
%&&\hskip1in\textit{ }\},\
&&L^p(\mathbb R, d\lambda) =\{f:\mathbb R\to\mathbb C|\ |f|_{L^p(\mathbb R, d\lambda)}= (\int_{\mathbb R}| f(\lambda)|^pd\lambda)^\frac 1p<\infty\},\\
&&H^p(\mathbb R, d\lambda) =\{f:\mathbb R\to\mathbb C|\ f,\ \partial_\lambda f\in L^p(\mathbb R, d\lambda)\}.
\end{eqnarray*}
Consider, for each fixed $\lambda\in\RR$,  
\begin{eqnarray}
\partial_y\varphi_\pm+(\lambda+ v_x)\partial_x\varphi_\pm=0,&&\textit{for $x$, $y\in \mathbb R$,}\label{E:pavlov-lax-r}\\
\varphi_\pm(x,y,\lambda)-\xi\to 0, &&\textsl{as $y\to \pm\infty$,}\label{E:pavlov-lax-bdry-r}
\end{eqnarray}
where
$\xi=x-\lambda y$. The solvability and uniqueness of the boundary value problem of the first order partial differential equation (\ref{E:pavlov-lax-r}), (\ref{E:pavlov-lax-bdry-r})  is shown by solving the ordinary differential equation \cite{GSW}
\beq
\frac {d x}{dy}=\lambda+v_x(x,y), \ \ x=x(y;x_0,y_0,\lambda),  \ \ x(y_0;x_0,y_0,\lambda)=x_0, 
\eeq
or, equivalently, 
\beq
\label{eq:def-h}
\begin{array}{c}
\frac {d h}{dy}=v_x(h+\lambda y,y), \\
 h=h(y;\xi_0,y_0,\lambda)= x(y;x_0,y_0,\lambda) -\lambda y, \\ 
 h(y_0;\xi_0,y_0,\lambda)=x_0-\lambda y_0=\xi_0.
 \end{array}
\eeq
Hence
\bea
h(y';x-\lambda y,y,\lambda)\label{E:h}
& =&\xi +\int_{y}^{y'} v_x\big(h(y'';x-\lambda y,y,\lambda)+ \lambda y'',y''\big)dy'',\\
\varphi_\pm(x,y,\lambda)\label{E:phi}
&= &h(\pm\infty;x-\lambda y,y,\lambda)\\
&=&\xi +\int_{y}^{\pm\infty} v_x\big(h(y'';x-\lambda y,y,\lambda)+ \lambda y'',y''\big)dy'',\nonumber
\eea and 
\begin{gather}
|\partial_{y'}^\beta\partial_\lambda^\alpha\partial_{\xi}^{\mu} \left(h(y';\xi,y,\lambda)-\xi\right)|\le C_{\mu,\alpha,\beta}.\label{E:h-mu-x}
\end{gather}

The \textbf{\emph{real eigenfunction}} of the Pavlov equation is defined by 
\beq\label{E:real-eigenfunction-definition}
\varphi(x,y,\lambda)=\varphi_-(x,y,\lambda)
\eeq and the {\textit{\textbf{scattering data}} $\sigma(\xi,\lambda)$ is defined as
\beq\label{E:definition-sigma}
\varphi_+(x,y,\lambda)=\varphi_-(x,y,\lambda)+\sigma(\varphi_-(x,y,\lambda),\lambda),
\eeq where
\begin{equation}\label{E:sigma}
\begin{split}
\sigma(\xi,\lambda)=&\,h(\infty;\xi,-\infty,\lambda)-\xi\\
=&\,\int_{-\infty}^{\infty} v_x\big(h(y'';\xi,-\infty,\lambda)+ \lambda y'',y''\big)dy''
\end{split}
\end{equation}

\begin{lemma}\label{L:sigma}
Suppose $v\in\mathfrak S$. Then
\begin{gather}
0<C_1<1+\partial_\xi\sigma(\xi,\lambda)<C_2.\label{E:positive}
\end{gather}
\end{lemma}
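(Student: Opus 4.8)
The plan is to estimate $\partial_\xi\sigma$ directly from the integral formula \eqref{E:sigma}. Differentiating $\sigma(\xi,\lambda)=\int_{-\infty}^\infty v_x\big(h(y'';\xi,-\infty,\lambda)+\lambda y'',y''\big)\,dy''$ in $\xi$ gives
\begin{equation}\label{E:dsigma-plan}
1+\partial_\xi\sigma(\xi,\lambda)=1+\int_{-\infty}^\infty v_{xx}\big(h(y'';\xi,-\infty,\lambda)+\lambda y'',y''\big)\,\partial_\xi h(y'';\xi,-\infty,\lambda)\,dy''.
\end{equation}
So the statement reduces to controlling $\partial_\xi h$, which in turn satisfies a linear ODE obtained by differentiating \eqref{eq:def-h}: writing $H(y'')=\partial_\xi h(y'';\xi,-\infty,\lambda)$ one has $dH/dy''=v_{xx}(h+\lambda y'',y'')\,H$ with $H\to 1$ as $y''\to-\infty$ (this normalization coming from \eqref{E:h-mu-x} and the boundary condition). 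Hence $H$ is an exponential of an integral of $v_{xx}$ along the characteristic.

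First I would solve this linear ODE explicitly: for $y'\in\RR$,
\begin{equation}\label{E:H-explicit-plan}
\partial_\xi h(y';\xi,-\infty,\lambda)=\exp\left(\int_{-\infty}^{y'} v_{xx}\big(h(y'';\xi,-\infty,\lambda)+\lambda y'',y''\big)\,dy''\right),
\end{equation}
the integral being finite and uniformly bounded because $v\in\mathfrak S$ makes $v_{xx}$ Schwartz and, by \eqref{E:h-mu-x}, the shift $h(y'';\xi,-\infty,\lambda)+\lambda y''-\xi$ stays bounded so the integrand decays. Next I observe that the full quantity $1+\partial_\xi\sigma$ is precisely the limit $y'\to+\infty$ of \eqref{E:H-explicit-plan}: indeed, differentiating $\varphi_+(x,y,\lambda)=h(\infty;x-\lambda y,y,\lambda)$ via the flow, or more simply noting that the right side of \eqref{E:dsigma-plan} telescopes against \eqref{E:H-explicit-plan} through $\frac{d}{dy''}\exp(\int_{-\infty}^{y''}v_{xx}\,\cdot\,) = v_{xx}\exp(\cdots)$, one gets
\begin{equation}\label{E:sigma-exp-plan}
1+\partial_\xi\sigma(\xi,\lambda)=\exp\left(\int_{-\infty}^{\infty} v_{xx}\big(h(y'';\xi,-\infty,\lambda)+\lambda y'',y''\big)\,dy''\right).
\end{equation}
Since the exponent is a real number bounded in absolute value by a constant $M$ depending only on finitely many Schwartz seminorms of $v$ (using \eqref{E:h-mu-x} to bound the characteristic displacement and the rapid decay of $v_{xx}$ in both variables to make the $y''$-integral converge uniformly in $\xi,\lambda$), the bounds \eqref{E:positive} follow with $C_1=e^{-M}$ and $C_2=e^{M}$.

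The main obstacle is the uniform convergence of the exponent integral in \eqref{E:sigma-exp-plan} with respect to $\lambda$: along the characteristic the point $h(y'';\xi,-\infty,\lambda)+\lambda y''$ wanders with $\lambda$, so one cannot simply bound $v_{xx}(h+\lambda y'',y'')$ by an $L^1_{y''}$ function independent of $\lambda$ via decay in the first argument alone. The resolution is to exploit decay in the \emph{second} argument $y''$: since $v\in\mathfrak S$, for any $N$ we have $|v_{xx}(x,y'')|\le C_N(1+|y''|)^{-N}$ uniformly in $x$, which gives an $L^1_{y''}$ bound uniform in $\xi$ and $\lambda$, hence uniform convergence and the desired $M$. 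I would also need to justify interchanging $\partial_\xi$ with the integral in \eqref{E:dsigma-plan}, which is routine given the uniform bounds \eqref{E:h-mu-x} on $\partial_\xi h$ and the Schwartz decay of $v_{xx}$; and to confirm the boundary normalization $\partial_\xi h\to 1$ as $y''\to-\infty$, immediate from \eqref{eq:def-h} since $h(y_0;\xi_0,y_0,\lambda)=\xi_0$ gives $\partial_{\xi_0}h|_{y''=y_0}=1$ and $y_0\to-\infty$ in the definition of $\sigma$.
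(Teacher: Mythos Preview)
Your proposal is correct and follows essentially the same route as the paper: differentiate the characteristic ODE \eqref{eq:def-h} in the initial datum to obtain the linear equation $\partial_{y'}(\partial_\xi h)=v_{xx}(h+\lambda y',y')\,\partial_\xi h$, solve it as an exponential, and bound the exponent using the Schwartz decay of $v$. The paper works with $\partial_x h(y';x-\lambda y,y,\lambda)$ and a Gronwall-type two-sided inequality, whereas you pass directly to $y_0=-\infty$ and record the exact identity $1+\partial_\xi\sigma=\exp\bigl(\int_{-\infty}^{\infty} v_{xx}(h+\lambda y'',y'')\,dy''\bigr)$; this is a mild sharpening but not a different argument, and your explicit remark that uniformity in $\lambda$ comes from decay of $v_{xx}$ in its \emph{second} variable is exactly the point the paper's appeal to ``the Schwartz condition'' encodes.
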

\begin{proof} 
Comparing \eqref{E:h} and \eqref{E:sigma}, to prove \eqref{E:positive}, it suffices to show that 
\begin{equation}\label{E:posi}
0<c_1\le \partial_xh(y';x-\lambda y, y,\lambda)\le c_2
\end{equation}
for  two constants $c_1$ and $c_2$. Taking derivatives on \eqref{eq:def-h}, one obtains
\begin{gather}
\frac \partial{\partial y'}\frac {\partial h(y';x-\lambda y,y,\lambda)}{\partial x}=v_{xx}(h+\lambda y', y')\frac {\partial h(y';x-\lambda y,y,\lambda)}{\partial x},\label{E:gronwall}\\
\frac {\partial h(y;x-\lambda y,y,\lambda)}{\partial x}=1.\label{E:gronwall-b}
\end{gather}
Therefore, 
\[
\begin{array}{c}
-\int_{\mathbb R} |u_{xx}(h+\lambda y'', y'')|dy''\le \log \frac {\partial h(y';x-\lambda y,y,\lambda)}{\partial x} \le \int_{\mathbb R} |u_{xx}(h+\lambda y'', y'')|dy''
\end{array}
\]and 
\begin{equation}\label{E:positive-h}
e^{-\int_{\mathbb R} |u_{xx}(h+\lambda y'', y'')|dy''}\le  \frac {\partial h(y';x-\lambda y,y,\lambda)}{\partial x} \le e^{\int_{\mathbb R} |u_{xx}(h+\lambda y'', y'')|dy''}
\end{equation}
So  
\beq
0<C_1<\partial_\xi h(y',\xi,y,\lambda)<C_2,\label{E:positive-h-0}
\eeq
 by the Schwartz condition and \eqref{E:posi} is proved.
\end{proof}

\begin{proposition}\label{P:direct-sigma-asy-0}
Suppose $v\in\mathfrak S$ and $p\ge 1$. Then the scattering data $\sigma$  satisfies 
\begin{gather}
|{\partial_\lambda^\nu}\partial_\xi^{\mu}\sigma(\xi,\lambda)|_{L^\infty}\le\frac{C}{1+|\lambda |^{2+\mu+\nu}},\label{E:l1-condition-lambda-tau-sigma-0-new-infty} \\
|{\partial_\lambda^\nu}\partial_\xi^{\mu}\sigma(\xi,\lambda)|_{L^p(d\xi)}\le\frac{C}{1+|\lambda |^{2+\mu+\nu-\frac 1p}}.\label{E:l1-condition-lambda-tau-sigma-0-new} 
\end{gather} 
\end{proposition}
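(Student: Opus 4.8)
The starting point is the representation \eqref{E:sigma} of $\sigma$ along the characteristic flow $h(\cdot;\xi,-\infty,\lambda)$ of \eqref{eq:def-h}, combined with the fact that $v$ and all of its derivatives vanish at infinity. Writing $H(y)=h(y;\xi,-\infty,\lambda)+\lambda y$, so that $H'(y)=\lambda+v_x(H(y),y)$ by \eqref{eq:def-h}, one has $\frac{d}{dy}v(H(y),y)=v_x(H,y)\bigl(\lambda+v_x(H,y)\bigr)+v_y(H,y)$, hence $\lambda\,v_x(H,y)=\frac{d}{dy}v(H(y),y)-\bigl(v_x^2+v_y\bigr)(H,y)$. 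Since $h-\xi$ is bounded by \eqref{E:h-mu-x}, one has $|H(y)|\to\infty$ as $y\to\pm\infty$ whenever $\lambda\neq0$, so $v(H(y),y)\to0$ by $v\in\mathfrak S$; integrating in $y$ and using \eqref{E:sigma} gives the exact identity
\begin{equation}\label{E:sketch-id}
\lambda\,\sigma(\xi,\lambda)=-\int_{\RR}\bigl(v_x^2+v_y\bigr)\bigl(h(y;\xi,-\infty,\lambda)+\lambda y,\ y\bigr)\,dy .
\end{equation}
The plan is to extract a \emph{second} power of $\lambda^{-1}$ from \eqref{E:sketch-id} by rescaling, and then to read off all the stated bounds, including those on the derivatives, by differentiating under the integral.

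Substitute $u=\xi+\lambda y$ in \eqref{E:sketch-id} and set $p(u;\xi,\lambda):=h\bigl((u-\xi)/\lambda;\xi,-\infty,\lambda\bigr)+(u-\xi)$, which solves $p(u)=u+\frac1\lambda\int_{-\infty}^{u}v_x\bigl(p(s),(s-\xi)/\lambda\bigr)\,ds$; this yields
\begin{equation}\label{E:sketch-id2}
\sigma(\xi,\lambda)=\frac{-\operatorname{sgn}(\lambda)}{\lambda^{2}}\int_{\RR}\bigl(v_x^2+v_y\bigr)\bigl(p(u;\xi,\lambda),\ (u-\xi)/\lambda\bigr)\,du .
\end{equation}
The crucial structural point of \eqref{E:sketch-id2} is that $\xi$ now enters only through the slot $(u-\xi)/\lambda$ and through $p$. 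From \eqref{E:h-mu-x}, Lemma \ref{L:sigma} and the integral equation for $p$ one obtains, for $|\lambda|\ge\lambda_0$ and uniformly in $u,\xi$, the bootstrap bounds $p(u;\xi,\lambda)-u=O(|\lambda|^{-1})$, $\partial_\xi^{j}p=O(|\lambda|^{-1})$ for $j\ge1$, and $\partial_\lambda^{k}p=O\bigl((1+|\xi|)^{k}|\lambda|^{-k-1}\bigr)$ for $k\ge1$, with analogous bounds for mixed derivatives; moreover $v_x^2+v_y\in\mathfrak S$.

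Now differentiate \eqref{E:sketch-id2}. Because $v_x^2+v_y$ and all its derivatives are Schwartz and $p(u)=u+O(|\lambda|^{-1})$, the integrand of \eqref{E:sketch-id2} and of every integral produced by $\partial_\xi^{\mu}\partial_\lambda^{\nu}$ is $\le C_N\bigl(1+|u|+|\xi/\lambda|\bigr)^{-N}$ for every $N$; each $\partial_\xi$ falls either on the slot $(u-\xi)/\lambda$ (a factor $-\lambda^{-1}$ and a $\partial_y$, keeping things Schwartz) or on $p$ (a factor $\partial_\xi p=O(|\lambda|^{-1})$ and a $\partial_x$), so each $\partial_\xi$ costs one factor $|\lambda|^{-1}$; each $\partial_\lambda$ falls on the prefactor $\lambda^{-2}$, on the slot $(u-\xi)/\lambda$ (a factor $-(u-\xi)\lambda^{-2}$, and $\int|u-\xi|\,(\text{Schwartz in }u)\,du\le C(1+|\xi|)$, which against the decay factor is $\le C|\lambda|\bigl(1+|\xi/\lambda|\bigr)^{-N+1}$), or on $p$ ($\partial_\lambda p=O((1+|\xi|)|\lambda|^{-2})$), so each $\partial_\lambda$ also costs one factor $|\lambda|^{-1}$ while lowering the $\xi$‑decay exponent by at most one unit. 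Hence, for $|\lambda|\ge\lambda_0$ and every $N$,
\begin{equation*}
\bigl|\partial_\lambda^{\nu}\partial_\xi^{\mu}\sigma(\xi,\lambda)\bigr|\le C_N\,|\lambda|^{-(2+\mu+\nu)}\bigl(1+|\xi|/|\lambda|\bigr)^{-N},
\end{equation*}
while for $|\lambda|\le\lambda_0$ the representation \eqref{E:sigma} with $v\in\mathfrak S$ and \eqref{E:h-mu-x} gives $\bigl|\partial_\lambda^{\nu}\partial_\xi^{\mu}\sigma(\xi,\lambda)\bigr|\le C_N(1+|\xi|)^{-N}$. Taking the supremum in $\xi$ gives \eqref{E:l1-condition-lambda-tau-sigma-0-new-infty}, and taking the $L^{p}(d\xi)$ norm produces the extra factor $|\lambda|^{1/p}$ (from $\int(1+|\xi/\lambda|)^{-Np}d\xi=|\lambda|\int(1+|r|)^{-Np}dr$, $Np>1$), which is \eqref{E:l1-condition-lambda-tau-sigma-0-new}.

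\noindent\textbf{The main difficulty.} The technical heart is the set of bootstrap estimates on $p(u;\xi,\lambda)$ and all of its $\xi$‑ and $\lambda$‑derivatives and the way they feed back into the differentiation of \eqref{E:sketch-id2}: the $\partial_\lambda$‑derivatives naively generate unbounded factors of $u-\xi$ (or powers of $1+|\xi|$) that are tamed only on the effective support $|u|\lesssim1$, $|\xi|\lesssim|\lambda|$ of the integrand — i.e.\ only by the very $\xi/\lambda$‑decay being proved — so one really runs a coupled induction on $\mu+\nu$ in which the $|\lambda|$‑decay and the $\xi$‑decay improve together, all constants kept uniform in $\xi$. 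A naive term‑by‑term integration by parts directly from \eqref{E:sigma} does not suffice, since individual terms of apparent size $|\lambda|^{-2}$ occur whose leading parts cancel; organising the argument around \eqref{E:sketch-id}–\eqref{E:sketch-id2}, where these cancellations are built in, is what makes the power counting clean.
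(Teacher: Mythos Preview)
Your argument is correct and takes a genuinely different route from the paper. The paper inverts the map $y'\mapsto x=h(y';\xi,-\infty,\lambda)+\lambda y'$ via the implicit function theorem to write $\sigma(\xi,\lambda)=-H_1(\xi,\infty,\lambda)/\lambda$, then passes to the rescaled variables $\hat\xi=\xi/\lambda$, $\hat\lambda=1/\lambda$ and shows, using Hadamard's lemma, that $\hat H_1(\hat\xi,\infty,\hat\lambda)/\hat\lambda$ is smooth in $(\hat\xi,\hat\lambda)$ near $\hat\lambda=0$; the $L^\infty$ bounds then drop out of the chain-rule identities $\partial_\xi=\hat\lambda\partial_{\hat\xi}$, $\partial_\lambda=-\hat\lambda^2\partial_{\hat\lambda}-\hat\lambda\hat\xi\partial_{\hat\xi}$, and the $L^p$ bound from the Jacobian $d\xi=|\lambda|\,d\hat\xi$ via Minkowski. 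You instead keep the integral representation, first extracting one power of $\lambda^{-1}$ by the exact identity $\lambda\sigma=-\int(v_x^2+v_y)(H,y)\,dy$ (an integration by parts along the characteristic that the paper does not use), and then the second power by the substitution $u=\xi+\lambda y$, after which you differentiate under the integral and bootstrap estimates on $p(u;\xi,\lambda)$. Both arguments ultimately exploit the same scaling structure $\sigma(\xi,\lambda)\sim\lambda^{-2}F(\xi/\lambda,\lambda^{-1})$: the paper makes this manifest by a change of coordinates and appeals to smoothness, which keeps the higher-order case clean and avoids the inductive bookkeeping; your approach is more explicit and self-contained, with the cancellation built into \eqref{E:sketch-id}, at the price of a messier term-by-term induction on $\mu+\nu$. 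Two minor remarks: the boundary term $v(H(y),y)\to0$ in fact holds for all $\lambda$ (including $\lambda=0$) simply because $v$ is Schwartz in $y$; and your stated bound $\partial_\lambda p=O((1+|\xi|)|\lambda|^{-2})$ is weaker than necessary, since writing $|s-\xi|=|\lambda|\,|(s-\xi)/\lambda|$ and absorbing one power of the second Schwartz factor gives $\partial_\lambda p=O(|\lambda|^{-2})$ uniformly in $\xi$, which streamlines the subsequent bookkeeping.
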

\begin{proof} The method in the proof of Proposition 3.2 in \cite{GSW} can be applied to the non-compact case as well. Indeed, 
%It suffices to prove the lemma for $|\lambda|\gg 1$.  %The proof is then splitted into three steps.
 % Besides, proofs for  $\lambda\gg 1$ and $\lambda\ll 1$ are identical, so  we assume $\lambda\gg 1$ in the following proof. 
let
\beq\label{E:change}
x= h(y';\xi,-\infty,\lambda)+\lambda y'=\xi+\lambda y'+\int^{y'}_{-\infty} v_{x}dy''.
\eeq 
For  $\lambda\gg 1$, from the implicit function theorem, there exist $y'=H(\xi,x,\lambda)$ and $H_1(\xi,x,\lambda)$ so that
\beq
H(\xi,x,\lambda)=\frac {-\xi+x}\lambda+\frac {H_1(\xi,x,\lambda)}{\lambda^2}.\label{E:exp}
\eeq 
So
\beq\label{E:sigma-2}
\sigma(\xi,\lambda)=- \frac{H_1(\xi,\infty,\lambda)}\lambda.
\eeq
From \eqref{E:change}, one has
\beq\label{E:sig-x}
\frac{\partial y'}{\partial x}=\frac 1{\lambda+v_x(x,H(\xi,x,\lambda))},
\eeq or, equivalently
\begin{gather}
\frac{\partial H_1(\xi,x,\lambda)}{\partial x}=-\frac{v_x(x,\frac{-\xi+x}\lambda+\frac {H_1}{\lambda^2})}{1+{v_x(x,\frac{-\xi+x}\lambda+\frac {H_1}{\lambda^2})}/\lambda},\label{E:sigma-h-1}\\
H_1(\xi,-\infty,\lambda)=0.\label{E:sigma-h-0}
\end{gather}
Define
\begin{gather*}
\hat \lambda=\frac 1\lambda,\quad\hat\xi=\frac\xi\lambda,\quad
\hat H_1(\hat\xi,x,\hat\lambda)=H_1(\frac {\hat\xi}{\hat\lambda},x,\frac 1{\hat\lambda}).
\end{gather*}
So \eqref{E:sigma-h-1} and \eqref{E:sigma-h-0} turn into
\begin{gather}
\frac{\partial \hat H_1(\hat\xi,x,\hat\lambda)}{\partial x}=-\frac{v_x(x,-\hat\xi+\hat \lambda x+\hat\lambda^2\hat H_1(\hat\xi,x,\hat\lambda))}{1+\hat\lambda v_x(x,-\hat\xi+\hat \lambda x+\hat\lambda^2\hat H_1(\hat\xi,x,\hat\lambda))},\label{E:sigma-h-1-hat}\\
\hat H_1(\hat\xi,-\infty,\hat\lambda)=0.\label{E:sigma-h-0-hat}
\end{gather}
For $|\hat\lambda|<\frac 1{2\max|v_x(x,y)|}$, the right hand side of \eqref{E:sigma-h-1-hat} is smooth in $\hat\xi$, $\hat\lambda$. Expanding it at $\hat\lambda=0$ and using the boundary condition \eqref{E:sigma-h-0-hat}, we obtain
\beq\label{E:sigma-h-1-hat-0}
\begin{split}
&\hat H_1(\hat\xi,\infty,\hat\lambda)\\
=\,&-\int_{-\infty}^\infty\frac{v_x(x,-\hat\xi+\hat \lambda x+\hat\lambda^2\hat H_1(\hat\xi,x,\hat\lambda))}{1+\hat\lambda v_x(x,-\hat\xi+\hat \lambda x+\hat\lambda^2\hat H_1(\hat\xi,x,\hat\lambda))}dx\\
=\,&-\int_{-\infty}^\infty v_x(x,-\hat\xi)dx+\mathcal O(\hat\lambda)\\
=\,&\mathcal O(\hat\lambda)
\end{split}
\eeq by the mean value theorem, fundamental theorem of calculus, and $v\in\mathfrak{G}$. From the Hadamard's lemma it follows
\[
\frac{\sigma(\hat\xi/\hat\lambda,1/\hat\lambda)}{\hat\lambda^2}=-\frac{\hat H_1(\hat\xi,\infty,\hat\lambda)}{\hat\lambda}
\]
is regular in $\hat\xi$, $\hat\lambda$ for $|\hat\lambda|<\frac 1{2\max|v_x(x,y)|}$. Consequently, by
\beq\label{E:change-o-v}
\partial_\lambda=-\hat\lambda^2\partial_{\hat\lambda}-\hat\lambda\hat\xi\partial_{\hat\xi},\quad\partial_\xi=\hat\partial_{\hat\xi},
\eeq
we obtain
\begin{gather}
|{\partial_\lambda^\nu}\partial_\xi^{\mu}\sigma(\xi,\lambda)|_{L^\infty}\le\frac{C}{1+|\lambda |^{2+\mu+\nu}},\label{E:l1-condition-lambda-tau-sigma-0-new-prep} 
\end{gather}
if $\lambda\gg 1$. By analogy, one can show \eqref{E:l1-condition-lambda-tau-sigma-0-new-prep} if $\lambda\ll -1$. Thus \eqref{E:l1-condition-lambda-tau-sigma-0-new-infty} is proved. Moreover, \eqref{E:l1-condition-lambda-tau-sigma-0-new} follows from \eqref{E:sigma-h-1-hat-0}, \eqref{E:change-o-v},  and the Minkowski inequality.

\end{proof}

%%%%%%%%%%%%%%%%%%%%%%%%%%%%%%%%%%%%%%%%%%%%%%%%%%%%%%%%%%%%%%%%%%%%%%%
\section{The forward problem II: the complex eigenfunctions }\label{S:forward-problem-2}
%%%%%%%%%%%%%%%%%%%%%%%%%%%%%%%%%%%%%%%%%%%%%%%%%%%%%%%%%%%%%%%%%%%%%%%%%%%%%%

With the positivity property (\ref{E:positive}), one has the unique solvability of the shifted Riemann-Hilbert problem \cite{Ga66}
\beq
\label{E:shifted-RH}
\begin{split}
\sigma(\xi,\lambda)+\chi^+(\xi+\sigma(\xi,\lambda),\lambda)-\chi^-(\xi,\lambda)=0,&\quad \xi\in \RR,\\
\hskip1in\partial_{\bar\xi}\chi=0,&\quad \xi\in\CC^\pm,\\
\chi\to 0,\ |\xi|\to\infty,&\quad \xi\in\CC.
\end{split}
\eeq 

Applying \eqref{E:l1-condition-lambda-tau-sigma-0-new-infty}, \eqref{E:l1-condition-lambda-tau-sigma-0-new}, and the boundedness of the Hilbert transform, we show that
\begin{proposition}\label{P:shifted-RH-0-xi-lambda} 
If $v\in\mathfrak S$ and $p> 1$, then the shifted Riemann-Hilbert problem (\ref{E:shifted-RH}) admits a unique bounded solution $\chi$ satisfying
\begin{eqnarray}
%|\chi^\pm(\xi,\lambda)|_{\alpha}&\le &\frac C{1+|\lambda|^{2}},\label{E:chi-xi}\\
%&&|\chi^\pm_i(\xi,x_2,\lambda)|_{L^p(\RR,d\xi)}\le \frac C{1+|\lambda|^{2-\frac 1p}},\label{E:chi-xi-1}\\
|{\partial_\lambda^\nu}\partial_\xi^{\mu}\chi^-(\xi,\lambda)|_{ L^\infty}&\le&\frac {C}{1+|\lambda|^{2+\mu+\nu-\frac 1p}},\quad \forall \xi\in\RR,\ \ \forall \lambda\in \RR,\label{E:nonlinear-scattering-data-asym-0-revised-1-plus}\\
|{\partial_\lambda^\nu}\partial_\xi^{\mu}\chi(\xi,\lambda)|_{ L^\infty}&\le&\frac {C}{1+|\lambda|^{2+\mu+\nu-\frac 1p}},\quad \forall\xi\in\CC^\pm,\ \ \forall\lambda\in \RR.\label{E:nonlinear-scattering-data-asym-0-revised-2-plus}
\end{eqnarray} 
\end{proposition}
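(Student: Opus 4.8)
The plan is to recast the shifted Riemann--Hilbert problem \eqref{E:shifted-RH} as a singular integral equation for the jump of $\chi$, to push the decay estimates \eqref{E:l1-condition-lambda-tau-sigma-0-new-infty}--\eqref{E:l1-condition-lambda-tau-sigma-0-new} of $\sigma$ through the solution operator of that equation, and finally to upgrade the resulting $L^p(\RR,d\xi)$ bounds to the pointwise bounds \eqref{E:nonlinear-scattering-data-asym-0-revised-1-plus}--\eqref{E:nonlinear-scattering-data-asym-0-revised-2-plus} by a one--dimensional Sobolev embedding and the maximum principle.

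First, by \eqref{E:positive} the shift $w_\lambda(\xi):=\xi+\sigma(\xi,\lambda)$ is, for every $\lambda\in\RR$, an orientation--preserving bi-Lipschitz homeomorphism of $\RR$ with constants depending only on $C_1,C_2$, and since $v\in\mathfrak S$ all higher $\xi$- and $\lambda$-derivatives of $\sigma$ are bounded uniformly in $\lambda$ by Proposition \ref{P:direct-sigma-asy-0}. Writing the (unique, by \cite{Ga66}) solution of \eqref{E:shifted-RH} as the Cauchy integral $\chi(\xi,\lambda)=\frac1{2\pi i}\int_\RR\frac{J(\xi',\lambda)}{\xi'-\xi}d\xi'$, $\xi\in\CC^\pm$, of its jump $J:=\chi^+-\chi^-$ on $\RR$, the Plemelj formulae $\chi^\pm=\pm\frac12 J+CJ$ (with $C$ the Cauchy principal--value operator) turn the boundary relation in \eqref{E:shifted-RH} into $\mathcal A_\lambda J=-\sigma(\cdot,\lambda)$, where $\mathcal A_\lambda J:=\frac12(J\circ w_\lambda+J)+(CJ)\circ w_\lambda-CJ$. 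Since the Hilbert transform (hence $C$) is bounded on $L^p(\RR,d\xi)$ and on $H^p(\RR,d\xi)$ for $p>1$, and precomposition with $w_\lambda$ is bounded on these spaces with norm depending only on $C_1,C_2$ and finitely many derivatives of $\sigma$, the operators $\mathcal A_\lambda$ are bounded there uniformly in $\lambda$; the positivity \eqref{E:positive} makes the index of \eqref{E:shifted-RH} zero, so by the Gakhov theory \cite{Ga66} $\mathcal A_\lambda$ is invertible, and this invertibility is quantitatively stable in $\lambda$ (continuity of the Fredholm inverse for $|\lambda|$ bounded; a perturbation of the identity for $|\lambda|$ large, since $|\sigma(\cdot,\lambda)|$ and its derivatives tend to $0$ by \eqref{E:l1-condition-lambda-tau-sigma-0-new-infty}), so that $\|\mathcal A_\lambda^{-1}\|_{L^p\to L^p}+\|\mathcal A_\lambda^{-1}\|_{H^p\to H^p}\le M$ uniformly in $\lambda\in\RR$.

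With this I would prove $|\partial_\xi^\mu\partial_\lambda^\nu\chi^\pm(\cdot,\lambda)|_{H^p(\RR,d\xi)}\le C/(1+|\lambda|^{2+\mu+\nu-1/p})$ by induction on $\mu+\nu$. For $\mu=\nu=0$ this is immediate from $\mathcal A_\lambda J=-\sigma$, the uniform bound on $\mathcal A_\lambda^{-1}$, the $L^p(d\xi)$-estimate \eqref{E:l1-condition-lambda-tau-sigma-0-new} of $\sigma$ and $\partial_\xi\sigma$, and $\chi^\pm=\pm\frac12 J+CJ$. For the inductive step, applying $\partial_\xi^\mu\partial_\lambda^\nu$ to \eqref{E:shifted-RH} and using the chain rule for $w_\lambda$ shows that $\partial_\xi^\mu\partial_\lambda^\nu\chi$ solves a shifted Riemann--Hilbert problem with the \emph{same} shift $w_\lambda$ and data $\partial_\xi^\mu\partial_\lambda^\nu\sigma+R_{\mu,\nu}$, where $R_{\mu,\nu}$ is a finite sum of products, each containing at least one factor $\partial_\xi^a\partial_\lambda^b\sigma$ with $a+b\ge1$ and one factor $(\partial_\xi^{a'}\partial_\lambda^{b'}\chi^+)\circ w_\lambda$ with $a'+b'\le\mu+\nu$. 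Estimating the $\sigma$-factors in $L^\infty$ by \eqref{E:l1-condition-lambda-tau-sigma-0-new-infty}, removing $\circ w_\lambda$ by the uniform change--of--variable bound coming from \eqref{E:positive}, and invoking the induction hypothesis on the $\chi$-factors of order $<\mu+\nu$ (the order-$(\mu+\nu)$ $\chi$-factors carry an extra decaying $\sigma$-factor, which for $|\lambda|$ large is absorbed into the left--hand side while for $|\lambda|$ bounded the asserted bound is trivial by the smoothness of $\chi$), every term of $R_{\mu,\nu}$ is bounded in $H^p(\RR,d\xi)$ by $C/(1+|\lambda|^{2+\mu+\nu-1/p})$; together with $|\partial_\xi^\mu\partial_\lambda^\nu\sigma|_{H^p(d\xi)}\le C/(1+|\lambda|^{2+\mu+\nu-1/p})$ and the uniform bound on $\mathcal A_\lambda^{-1}$, this closes the induction.

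Finally, the one--dimensional Sobolev embedding $H^p(\RR)\hookrightarrow L^\infty(\RR)$, $p>1$, converts these $H^p(\RR,d\xi)$-bounds into \eqref{E:nonlinear-scattering-data-asym-0-revised-1-plus}; and since for fixed $\lambda$ each $\partial_\xi^\mu\partial_\lambda^\nu\chi(\cdot,\lambda)$ is holomorphic in $\CC^\pm$, continuous up to $\RR$, and decays at infinity (the $\xi$-decay being inherited from that of $J$), the maximum modulus principle yields \eqref{E:nonlinear-scattering-data-asym-0-revised-2-plus} from \eqref{E:nonlinear-scattering-data-asym-0-revised-1-plus}. The main obstacle is the uniform--in--$\lambda$ control of the solution operator of the shifted Riemann--Hilbert problem --- quantifying the stability of Gakhov's index--zero construction under the (uniformly controlled) variation of $w_\lambda$, and matching the bounded-$\lambda$ regime with the large-$\lambda$ one --- while the bookkeeping of the chain--rule error terms $R_{\mu,\nu}$ is, by comparison, routine.
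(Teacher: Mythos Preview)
Your proposal is correct and follows the same overall strategy as the paper's proof: reduce the shifted Riemann--Hilbert problem to a singular integral equation, control the solution operator uniformly in $\lambda$ (trivially for bounded $\lambda$, by a Neumann-series/perturbation argument for large $\lambda$ via the decay of $\sigma$), run an induction on $\mu+\nu$ using the chain rule, and finish with the Sobolev embedding $H^p(\RR)\hookrightarrow L^\infty(\RR)$.

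The one genuine difference is the choice of integral equation. You take the jump $J=\chi^+-\chi^-$ as unknown and arrive at $\mathcal A_\lambda J=-\sigma$ with $\mathcal A_\lambda=\tfrac12(\,\cdot\circ w_\lambda+\mathrm{Id})+(C\,\cdot\,)\circ w_\lambda-C$; the paper instead follows \cite{MS9} and takes $\chi^-$ itself as unknown, obtaining the equation $\chi^-=\mathcal K\chi^--g$ with the explicit kernel $f(\xi,\xi',\lambda)=\partial_{\xi'}s/(s(\xi')-s(\xi))-1/(\xi'-\xi)$, $s=\xi+\sigma$. The paper's formulation has the advantage that the differentiated equation is written directly in terms of the kernel $k(\xi,\xi',\lambda)=\bigl(s(\xi)-s(\xi')-\partial_{\xi'}s(\xi')(\xi-\xi')\bigr)/\bigl(s(\xi)-s(\xi')\bigr)$, whose $L^\infty$ decay $|\partial_\lambda^\nu\partial_\xi^{\mu_1}\partial_{\xi'}^{\mu_2}k|\le C/(1+|\lambda|^{\mu_1+\mu_2+\nu})$ follows at once from Hadamard's lemma and Proposition~\ref{P:direct-sigma-asy-0}; this makes the induction bookkeeping short and avoids having to bound the shift operator $\,\cdot\circ w_\lambda$ and the commutator $(C\,\cdot\,)\circ w_\lambda-C$ separately on $H^p$. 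Your route is more structural and closer to the textbook RH formalism, but it requires exactly the operator-theoretic stability argument you flag as the ``main obstacle'' (uniform invertibility of $\mathcal A_\lambda$); the paper sidesteps this by making the smallness of the kernel for $|\lambda|\gg1$ completely explicit, so that $I-\mathcal K$ is manifestly a small perturbation of the identity there and continuity handles bounded $\lambda$.
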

\begin{proof} In \cite{MS9}, the unique solvability of the shifted Riemann-Hilbert problem (\ref{E:shifted-RH}) has been justified by converting it to
the following linear equation  
\beq
\label{E:shifted-RH-fredholm}
\chi^-(\xi,\lambda)-\frac 1{2\pi i}\int_\RR f(\xi,\xi',\lambda)\chi^-(\xi',\lambda)d\xi'+ g(\xi,\lambda)=0,
\eeq
where
\beq \label{E:shifted-RH-fredholm-c-2}
\begin{split}
&f(\xi,\xi',\lambda)=  \frac {\partial_{\xi'}s(\xi',\lambda)}{s(\xi',\lambda)-s(\xi,\lambda)}-\frac 1{\xi'-\xi},\\%\label{E:shifted-RH-fredholm-c}\\
&g(\xi,\lambda)=  -\frac 12\sigma(\xi,\lambda)+\frac 1{2\pi i}\int_\RR \frac {\partial_{\xi'}s(\xi',\lambda)}{s(\xi',\lambda)-s(\xi,\lambda)}\sigma(\xi',\lambda)d\xi',\\
%=&\, -\frac 12\vec\sigma^\sharp(\xi,x_2,\lambda)+\mathcal C_s\vec\sigma^\sharp(\xi,x_2,\lambda),\\
&s(\xi,\lambda)=\, \xi+\sigma(\xi,\lambda).%\label{E:shifted-RH-fredholm-c-3}
\end{split}
\eeq Hence it yields to showing the uniform estimate \eqref{E:nonlinear-scattering-data-asym-0-revised-1-plus} without the compactly supported condition.

Taking derivatives of (\ref{E:shifted-RH-fredholm})  and applying Proposition \ref{P:direct-sigma-asy-0},  we obtain
\begin{eqnarray}
&&{\partial_\lambda^\nu}\partial_\xi^{\mu}\chi^-(\xi,\lambda)\label{E:general-k}\\
&=&\frac 1{2\pi i}\int_\RR \frac {{\partial_\lambda^\nu}\left(\partial_\xi+\partial_{\xi'}\right)^{\mu}\left[k(\xi,\xi',\lambda)\chi^-(\xi',\lambda)\right]}{\xi'-\xi}d\xi'\nonumber\\
&-&\frac 1{2\pi i}\int_\RR \frac {{\partial_\lambda^\nu}\left(\partial_\xi+\partial_{\xi'}\right)^{\mu}\left[k(\xi,\xi',\lambda)\sigma(\xi',\lambda)\right]}{\xi'-\xi}d\xi'\nonumber\\
&+&\frac 1{2\pi i}\int_\RR \frac{{\partial_\lambda^\nu}\partial_{\xi'}^{\mu}\sigma(\xi',\lambda)}{\xi'-\xi}d\xi'+\frac 12{\partial_\lambda^\nu}\partial_\xi^{\mu}\sigma\nonumber
\end{eqnarray}where
\begin{equation}\label{E:k-definition}
k(\xi,\xi',\lambda)=\frac{s(\xi,\lambda)-s(\xi',\lambda)-\partial_{\xi'}s(\xi',\lambda)(\xi-\xi')}{s(\xi,\lambda)-s(\xi',\lambda)}.
\end{equation}Applying Proposition \ref{P:direct-sigma-asy-0}, Hadamard's lemma, and an induction argument, one can derive
{\begin{equation}\label{E:k-hardmard}
\begin{split}
|\partial_\lambda^\nu\partial_\xi^{\mu_1}\partial_{\xi'}^{\mu_2} k(\xi,\xi',\lambda)|_{L^\infty}\le \frac C{1+|\lambda|^{\mu_1+\mu_2+\nu}},
\end{split}
\end{equation}
and then}
\begin{equation}\label{E:hadamard-chi-sigma}
\begin{split}
|{\partial_\lambda^\nu}\left(\partial_\xi+\partial_{\xi'}\right)^{\mu}k\chi^--k{\partial_\lambda^\nu}\partial_{\xi'}^{\mu}\chi^-|_{L^p(\mathbb R,d\xi)}&\le \frac C{1+|\lambda|^{2+\mu+\nu-\frac 1p}},\\
|{\partial_\lambda^\nu}\left(\partial_\xi+\partial_{\xi'}\right)^{\mu}k\sigma%-k\corr{\partial_\lambda^\nu}\partial_{\xi'}^{\mu}\sigma\sharp
|_{L^p(\mathbb R,d\xi)}&\le \frac C{1+|\lambda|^{2+\mu+\nu-\frac 1p}}.
\end{split}
\end{equation}Plugging (\ref{E:hadamard-chi-sigma}) into (\ref{E:general-k}), we obtain
\[
|{\partial_\lambda^\nu}\partial_\xi^{\mu}\chi^--\mathcal K\left({\partial_\lambda^\nu}\partial_\xi^{\mu}\chi^-\right)|_{L^p(\mathbb R,d\xi)}\le \frac C{1+|\lambda|^{2+\mu+\nu-\frac 1p}},
\]where
\[
\mathcal K\psi(\xi,\lambda)=\frac 1{2\pi i}\int_{\RR}f(\xi,\xi',\lambda)\psi(\xi',\lambda)d\xi'.
\]
Hence the boundedness of the Hilbert transform on $L^p$ for $p>1$, 
\bea
|{\partial_\lambda^\nu}\partial_\xi^{\mu}\chi^-(\xi,\lambda)|_{L^p(\RR,d\xi)}&\le &\frac C{1+|\lambda|^{2+\mu+\nu-\frac 1p}},\quad|\lambda|\gg 1.\label{E:chi-xi-xi-2-V-plus}
\eea So
\bea
|{\partial_\lambda^\nu}\partial_\xi^{\mu}\chi^-(\xi,\lambda)|_{ L^\infty}&\le&\frac {C}{1+|\lambda|^{2+\mu+\nu-\frac 1p}},\quad |\lambda|\gg 1,
\eea from the Sobolev imbedding theorem. Finally \eqref{E:nonlinear-scattering-data-asym-0-revised-1-plus} is achieved for $\forall \lambda\in\RR$ by continuity. %Estimate (\ref{E:nonlinear-scattering-data-asym-0-revised-1-plus}) then follows \eqref{E:nonlinear-scattering-data-asym-0-revised-1-plus} successively.
\end{proof}

\begin{theorem}\label{T:existence-cpx-pavlov}
If $v\in\mathfrak S$, then for fixed $x,\,y\in\RR$, there 
exists a unique continuous  $\Phi(x,y,\lambda)$ such that $\Phi(x,y,\lambda)$ is holomorphic for $\lambda\in \mathbb C^\pm$ and has continuous limits on both sides of $\lambda\in \RR$  satisfying
\bea
\Phi^{-}(x,y,\lambda)
&=&\varphi(x,y,\lambda)+\chi^-(\varphi(x,y,\lambda),\lambda),\label{E:psi-phi+-reduction}\\%\\
\Phi^+(x,y,\lambda)&=&\overline{\Phi^-(x,y,\lambda)}.\label{E:psi-phi-reduction}%&=&\varphi'(x,y,\lambda)+\chi^-(\varphi'(x,y,\lambda),\lambda),\nonumber
\eea
Moreover,
 $\Phi(x,y,\lambda)$ is a {\textit{\textbf{complex eigenfunction}}}, i.e.  for $\forall\lambda\in\CC^\pm$ fixed,
\begin{gather}
\left(\partial_y+(\lambda+v_x)\right)\partial_{x}\Phi=0,\quad\textit{for $\forall x,\,y\in\mathbb R$, }\label{E:lax-complex-reduction}\\
\Phi(x,y,\lambda)-(x-\lambda y)\to 0,\quad\textit{as $x,\,y\to\infty$,}\nonumber
\end{gather}
and
\begin{gather}
%&&\vec\Phi(x_1,x_2,z,\lambda)-\vec\xi\to 0\quad  \textit{in $L^2(\RR,d\lambda)$ uniformly  as $ |\lambda|\rightarrow\infty$,} \label{eq:L1_1}\\
\Phi(x,y,\lambda)={\overline {\Phi(x,y,\bar\lambda)}}, \label{eq:L1_1-sym-reduction}\\
|\partial_\lambda^\nu\partial_x^k\partial_y^h\left(\Phi^\pm-x+\lambda y\right)|_{L^\infty}\le \frac {C}{1+|\lambda|^{2+\nu+k-\frac 1p}}. \label{eq:L1_1-sym-estimate}
\end{gather}

\end{theorem}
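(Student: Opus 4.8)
The plan is to adapt the complex‑eigenfunction construction of \cite{GSW} for all the structural assertions — where the compact‑support hypothesis was never used — and to supply the new decay estimate \eqref{eq:L1_1-sym-estimate} from the $L^p$‑bounds of Propositions \ref{P:direct-sigma-asy-0} and \ref{P:shifted-RH-0-xi-lambda}. First I would build $\Phi$ in the open half‑planes. For $\lambda=\lambda_1+i\lambda_2\in\CC^-$ put $z:=x-\bar\lambda y$, an $\RR$‑linear isomorphism $\RR^2\to\CC$ with $\bar z=x-\lambda y$. A short computation rewrites the Lax operator as $\partial_y+(\lambda+v_x)\partial_x=(2i\lambda_2+v_x)\partial_z+v_x\,\partial_{\bar z}$, so \eqref{E:lax-complex-reduction} becomes the Beltrami‑type equation $\partial_z\Phi=\mu\,\partial_{\bar z}\Phi$ with $\mu=-v_x/(2i\lambda_2+v_x)$ and $|\mu|_{L^\infty}\le|v_x|_{L^\infty}\big(|v_x|_{L^\infty}^2+4\lambda_2^2\big)^{-1/2}<1$, uniformly on compact subsets of $\CC^-$. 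By the theory of generalized analytic functions this equation has a unique solution with $\Phi-\bar z\to0$ at infinity; since $\mu$ is smooth in $z$ and holomorphic in $\lambda\in\CC^-$, the solution is holomorphic in $\lambda\in\CC^-$, solves \eqref{E:lax-complex-reduction}, and — as $|\mu|$ stays $<1$ in $L^\infty$ while $\lambda_2\to0^-$ — has a continuous limit $\Phi^-$ on $\RR$. Setting $\Phi(x,y,\lambda):=\overline{\Phi(x,y,\bar\lambda)}$ for $\lambda\in\CC^+$ extends $\Phi$ holomorphically to $\CC^+$, produces the reflection symmetry \eqref{eq:L1_1-sym-reduction} and the boundary relation $\Phi^+=\overline{\Phi^-}$ of \eqref{E:psi-phi-reduction}; uniqueness of $\Phi$ follows from uniqueness of the Beltrami solution in each half‑plane together with this reflection.

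Next I would identify $\Phi^-$. On $\RR$ the limit $\Phi^-$ solves the scalar transport equation $\partial_y\Phi^-+(\lambda+v_x)\partial_x\Phi^-=0$ with $\Phi^--(x-\lambda y)\to0$ as $y\to-\infty$; the real eigenfunction $\varphi=\varphi_-$ solves exactly this problem, so $\Phi^--\varphi$ is constant along the characteristics $dx/dy=\lambda+v_x$, whence $\Phi^-(x,y,\lambda)=\varphi(x,y,\lambda)+g(\varphi(x,y,\lambda),\lambda)$ for some $g(\,\cdot\,,\lambda)$ vanishing at infinity. It remains to see $g=\chi^-$. The shifted Riemann–Hilbert problem \eqref{E:shifted-RH}, uniquely solvable thanks to the positivity $0<C_1<1+\partial_\xi\sigma<C_2$ of Lemma \ref{L:sigma}, is equivalent to the gluing $\xi+\chi^-(\xi,\lambda)=s(\xi,\lambda)+\chi^+(s(\xi,\lambda),\lambda)$ with $s=\xi+\sigma$; inserting the scattering relation $\varphi_+=\varphi_-+\sigma(\varphi_-,\lambda)$ shows that $\varphi+\chi^-(\varphi,\lambda)$ equals both $\varphi_-+\chi^-(\varphi_-,\lambda)$ and $\varphi_++\chi^+(\varphi_+,\lambda)$, hence is at once — by the $\xi$‑analyticity of $\chi$ and, after conjugation, its Schwarz symmetry — the boundary value of a $\CC^-$‑holomorphic and of a $\CC^+$‑holomorphic function with the correct normalization. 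By the uniqueness already established it coincides with $\Phi^-$, giving \eqref{E:psi-phi+-reduction}.

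For the estimate \eqref{eq:L1_1-sym-estimate} it suffices to control $\Phi^--x+\lambda y=(\varphi-\xi)+\chi^-(\varphi,\lambda)$ with $\xi=x-\lambda y$, since $\Phi^+-x+\lambda y$ is its conjugate at real $\lambda$. For $\varphi-\xi=\int_y^{-\infty}v_x\big(h(y'';\xi,y,\lambda)+\lambda y'',\,y''\big)\,dy''$ I would rerun the argument of Proposition \ref{P:direct-sigma-asy-0}: the change of variable $u=h(y'';\xi,y,\lambda)+\lambda y''$ together with $\int_\RR v_x(u,\cdot)\,du=\int_\RR v_{xx}(u,\cdot)\,du=0$ for $v\in\mathfrak S$, combined with the uniform bounds \eqref{E:h-mu-x}, yields $|\partial_\lambda^\nu\partial_x^k\partial_y^h(\varphi-\xi)|_{L^\infty}\le C\big(1+|\lambda|\big)^{-(2+\nu+k-1/p)}$. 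For $\chi^-(\varphi,\lambda)$ I would apply the higher‑order chain and Leibniz rules: each term is a product of a derivative $(\partial_\xi^i\partial_\lambda^j\chi^-)(\varphi,\lambda)$, bounded by Proposition \ref{P:shifted-RH-0-xi-lambda}, with derivatives of $\varphi$, and every derivative of $\varphi$ beyond the first contributes the additional negative power of $\lambda$ just obtained; the exponents add up to the asserted decay. The interior bound on $\CC^\pm$, if needed, then follows from the maximum principle applied to $\Phi-(x-\lambda y)$.

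The step I expect to be the main obstacle is the boundary identification in the second paragraph together with the degenerate limit $\lambda_2\to0^-$ of the Beltrami construction: one must verify that the jump of the sectionally holomorphic $\Phi$ across $\RR$ is conjugated, through the nonlinear spectral change of variable $\xi\mapsto\varphi(x,y,\lambda)$, into precisely the shift $\xi\mapsto\xi+\sigma(\xi,\lambda)$ of \eqref{E:shifted-RH}, and that this persists through the degeneration $|\mu|\to1$ at $\mathrm{Im}\,\lambda=0$; once this is in place, the remainder is the book‑keeping of the composition estimates.
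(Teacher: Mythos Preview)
Your Beltrami construction in the first paragraph contains a false claim that undermines the whole argument. With $\mu=-v_x/(2i\lambda_2+v_x)$ one has $|\mu|=|v_x|/\sqrt{v_x^2+4\lambda_2^2}$, so at any point where $v_x\neq0$ we get $|\mu|\to1$ as $\lambda_2\to0^-$; thus $|\mu|_{L^\infty}\to1$, not ``stays $<1$'', and the standard Beltrami theory gives no control on the limit $\Phi^-$. Your identification step is also circular: $\chi$ is holomorphic in the \emph{spectral} variable $\xi$, not in $\lambda$, so the phrase ``by the $\xi$-analyticity of $\chi$ \ldots\ the boundary value of a $\CC^-$-holomorphic function'' does not produce $\lambda$-holomorphy of $\varphi+\chi^-(\varphi,\lambda)$; that is precisely the statement to be proved. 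You yourself flag this degeneration and identification as the main obstacle, but you do not actually overcome it.

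The paper's proof resolves both issues in one stroke by a different Beltrami equation, obtained after a three-stage change of variables $\mathcal F_3\circ\mathcal F_2\circ\mathcal F_1$. The first map straightens the real characteristics via $x_1=\varphi_\pm(x,y,\lambda_R)$; the second rescales $y_2=\lambda_I y_1$; and---crucially---the third sets $z_3=x_2-iy_2+\chi(x_2-iy_2,\lambda_R)$, building the shifted Riemann--Hilbert solution $\chi$ into the coordinate itself. In these variables the Lax operator becomes $\partial_{\bar z_3}+q\,\partial_{z_3}$ with a coefficient $q$ that the paper shows satisfies $|q|_{L^p}=O(|\lambda_I|^{1/p})$ (see \eqref{eq:L1_36-q}--\eqref{eq:L1_36}); hence $\Phi\to z_3$ uniformly as $\lambda_I\to0$, and unwinding the coordinate maps gives $\Phi^-=\varphi+\chi^-(\varphi,\lambda_R)$ directly. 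In short, the missing idea is to absorb $\chi$ into the change of variables \emph{before} writing the Beltrami equation, so that the coefficient vanishes rather than degenerates at the real axis and the conjunction formula falls out of the limit rather than requiring a separate identification.
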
 
\begin{proof}
We refer to Theorem 3.1 in \cite{GSW} for a detailed proof of this theorem except for \eqref{E:psi-phi+-reduction}. Note that the whole proof there does not require any small data assumption nor a compact support condition. 

Moreover, the conjunction formula \eqref{E:psi-phi+-reduction} for complex and real eigenfunctions has been justified by changes of variables in the proof of Theorem 3.2 in \cite{GSW} under a compact support restriction. However, the same proof can be refined and holds for $v(x,y)\in\mathfrak G$ as well. Indeed, denote $\lambda=\lambda_R+i\lambda_I$, the first change of variables $\mathcal F_1:(x,y)\to (x_1,y_1)$ is defined by
\beq
\label{eq:L1_24}
\left\{
\begin{array}{ll}
x_1=\lim\limits_{y'\rightarrow-\infty} h(y'; x-\lambda_R y,y,\lambda_R)
=\varphi_-(x,y,\lambda_R),&  \ \ y <0\\
x_1=\lim\limits_{y'\rightarrow+\infty} h(y';x-\lambda_R y,y,\lambda_R)
=\varphi_+(x,y,\lambda_R),& %\mbox{for}
\ \  y >0\\
y_1 = y & 
\end{array}
\right.
\eeq
 In the new variables,
\beq
\label{eq:L1_25}
L = \partial_y+(\lambda+v_x)\partial_x=\partial_{y_1}+ i\lambda_I\kappa(x_1,y_1) \partial_{x_1} ,
\eeq
where
\beq\label{E:kappa}
\kappa(x_1,y_1)=\frac{\partial\varphi_\pm}{\partial x}(x,y,\lambda_R)|_{(x,y)=\mathcal F_1^{-1}(x_1,y_1)},\ \ y\ne 0.
\eeq By \eqref{E:positive-h}, there exists a pair of positive constants $ C_1$,  
$ C_2$ such that:
\beq
\label{eq:L1_26}
0< C_1 \le\kappa(x_1,y_1) \le  C_2. 
\eeq

Secondly, taking $\lambda_I<0$ from now on without loss of generality, we introduce
$\mathcal F_2:(x_1,y_1)\rightarrow(x_2,y_2)$
\beq
\label{eq:L1_27}
\left\{
\begin{array}{l}
x_2=x_1.\\
y_2 = \lambda_I y_1 ,\\
z_2=x_2-iy_2
\end{array}
\right.
\eeq
In the new variables, \eqref{eq:L1_25} turns into 
\beq
\label{eq:L1_28}
\begin{split}
L =&\lambda_I(\partial_{y_2}+i  \kappa\left(x_2,\frac{y_2}{\lambda_I}\right) \partial_{x_2})\\
=&\lambda_I\left[1+\kappa(x_2,\frac{y_2}{\lambda_I})\right](\partial_{\overline z_2}+r(z_3,\bar z_3, \lambda) \partial_{z_2}).
\end{split}
\eeq
where
\beq
\label{eq:L1_33-new}
r(z_2,\bar z_2,\lambda)=\frac {-1+\kappa(x_2,\frac{y_2}{\lambda_I})}{1+\kappa(x_2,\frac{y_2}{\lambda_I})}|_{(x_2,y_2)=\mathcal F_3^{-1}(z_3,\bar z_3)}<1
\eeq
by \eqref{E:positive-h}.

The last change of variables is 
$\mathcal F_3:(x_2,y_2)\rightarrow z_3$, defined  by
\beq
\label{eq:L1_29}
z_3=x_2-iy_2+\chi(x_2-iy_2,\lambda_R),
\eeq
where $\chi(\xi,\lambda_R)$ is the solution of the shifted Riemann-Hilbert 
problem (\ref{E:shifted-RH}). 
Consequently, \eqref{eq:L1_28} or \eqref{E:lax-complex-reduction} takes the form 
\beq
\label{eq:L1_32}
[\partial_{\bar z_3} + q(z_3,\bar z_3, \lambda) \partial_{z_3}] \Phi = 0
\eeq
where
\beq
\label{eq:L1_33}
\begin{split}
&|q(z_3,\bar z_3,\lambda)|\\
=&|\frac{1+ {\partial\overline\chi}/{\partial\overline z_2}}{1+ {\partial\chi}/{\partial  z_2}}|\,|\frac {-1+\kappa(x_2,\frac{y_2}{\lambda_I})}{1+\kappa(x_2,\frac{y_2}{\lambda_I})}|_{(x_2,y_2)=\mathcal F_3^{-1}(z_3,\bar z_3)}\\
=&|\frac {-1+\kappa(x_2,\frac{y_2}{\lambda_I})}{1+\kappa(x_2,\frac{y_2}{\lambda_I})}|_{(x_2,y_2)=\mathcal F_3^{-1}(z_3,\bar z_3)}\\
<&1
\end{split}
\eeq
by solvability of \eqref{E:shifted-RH} and \eqref{eq:L1_33-new}. It is then natural to consider Beltrami equation (\ref{eq:L1_32}) in the space 
$L^{2+\epsilon}(dz_3d\bar z_3)\cap L^{2-\epsilon}(dz_3d\bar z_3)$ where $\epsilon$ is 
sufficiently small. 
We can write \cite{V62}
\beq
\label{eq:L1_34}
\Phi(z_3,\bar z_3,\lambda) = z_3 + 
\partial^{-1}_{\bar z_3} \alpha(z_3,\bar z_3,\lambda)
\eeq
where
\beq
\label{eq:L1_35}
[1+q(z_3,\bar z_3,\lambda)\partial_{z_3}\partial^{-1}_{\bar z_3}]
\alpha(z_3,\bar z_3,\lambda)+q(z_3,\bar z_3,\lambda)=0.
\eeq

Taking into account \eqref{E:positive-h}, \eqref{E:kappa}, \eqref{eq:L1_29}, (\ref{eq:L1_33}), $v\in\mathfrak G$, Proposition \ref{P:shifted-RH-0-xi-lambda}, and the Fubini theorem, we see that 
\bea
&&|q(z_3,\bar z_3,\lambda) |_{L^{p}}^p \nonumber\\
&<&|{-1+\kappa(x_2,\frac{y_2}{\lambda_I})}|_{(x_2,y_2)=\mathcal F_3^{-1}(z_3,\bar z_3)}|_{L^{p}}^p \nonumber\\
&<&\iint dz_{3,I}dz_{3,R}\nonumber\\
&&|\int_{\frac {y_2}{\lambda_I}}^\infty\frac {C\, dy''}{(1+|h(y'',x-\lambda \frac {y_2}{\lambda_I}, \frac {y_2}{\lambda_I},\lambda)+\lambda y''|^k)(1+|y''|^h)}|^p_{(x_2,y_2)=\mathcal F_3^{-1}(z_3,\bar z_3)}\nonumber\\
%&<&C\int_{\mathbb R} dz_{3,I}\int_{\frac {y_2}{\lambda_I}}^\infty\frac {dy''}{(1+|y''|)^{hp}}\nonumber\\
%&&\cdot\int_{\mathbb R}\frac {dz_{3,R}}{1+|h(y'',x-\lambda \frac {y_2}{\lambda_I}, \frac {y_2}{\lambda_I},\lambda)+\lambda y''|)^{kp}} |_{(x_2,y_2)=\mathcal F_3^{-1}(z_3,\bar z_3)}\nonumber\\
&<&\int_{\mathbb R} dz_{3,I}\int_{\frac {y_2}{\lambda_I}}^\infty\frac {dy''}{1+|y''|^{hp}}\int_{\mathbb R}\frac {C'\,dz_{3,R}}{1+|x-\lambda [\frac {y_2}{\lambda_I}- y'']|^{kp}} |_{(x_2,y_2)=\mathcal F_3^{-1}(z_3,\bar z_3)}\nonumber\\
&<&C''\int_{\mathbb R} dy_{2}\int_{\frac {y_2}{\lambda_I}}^\infty\frac {dy''}{1+|y''|^{hp}}\int_{\mathbb R}\frac {dx}{1+|x-\lambda [\frac {y_2}{\lambda_I}- y'']|^{kp}} \nonumber\\
%&<&C''\int_{\mathbb R} dz_{3,I}\int_{\frac {y_2}{\lambda_I}}^\infty\frac {dy''}{(1+|y''|)^{hp}}|_{(x_2,y_2)=\mathcal F_3^{-1}(z_3,\bar z_3)}\nonumber\\
%&<&C'''\int_{\mathbb R} dy_{2}\int_{\frac {y_2}{\lambda_I}}^\infty\frac {dy''}{(1+|y''|)^{hp}}|_{(x_2,y_2)=\mathcal F_3^{-1}(z_3,\bar z_3)}\nonumber\\
&=& O(|\lambda_I|)\label{eq:L1_36-q} ,
\eea and hence
\beq
\label{eq:L1_36}
|\alpha(z_3,\bar z_3,\lambda) |_{L^{p}} = O(|\lambda_I|^{\frac 1p}),\ \ 
|p-2|<\epsilon.
\eeq
Using the estimates from \cite{V62} we see, that 
$$
\|\Phi(z_3,\bar z_3,\lambda) - z_3 \|_{L^{\infty}(dz_3d\bar z_3)}=O(|\lambda_I|^{\frac 1p}),
$$
and $\Phi(z_3,\bar z_3,\lambda)$ uniformly converges to $z_3$. So \eqref{E:psi-phi+-reduction} is obtained by composing $\mathcal F_j$, $j=1,\,2,\,3$ and taking $\lambda_I\to 0$.

\end{proof}

%%%%%%%%%%%%%%%%%%%%%%%%%%%%%%%%%%%%%%%%%%%%%%%%%%%%%%
\section{The forward problem III: the nonlinear Riemann Hilbert problem}\label{S:NRH-dp}
%%%%%%%%%%%%%%%%%%%%%%%%%%%%%%%%%%%%%%%%%%%%%%%%%%%%%%%%%%%%%%%%%%%%%%%%%%%%%%%%%%%%%%%%%%%%%%%%%%%%%%%%%%%%%%%%%%%%%%%%%%%%%%%%%%%%%%%%%%%%%%%%%%%%%%%%%%%%%%%%%%%%%%%

Theorem \ref{T:existence-cpx-pavlov} yields 
\begin{equation}\label{E:jump-cx}
\Phi^+(x,y,\lambda)-\Phi^-(x,y,\lambda)=-2i\chi^-_I(\varphi(x,y,\lambda),\lambda),\quad\lambda\in\mathbb R,
\end{equation}
where $\chi^-=\chi^-_R+\chi^-_I$. So the Plemelji formula and \eqref{E:lax-complex-reduction} imply
\begin{gather}
\Phi(x,y,\lambda)=x-\lambda y-\frac 1\pi\int_{\mathbb R}\frac{\chi^-_I(\varphi(x,y,\zeta),\zeta)}{\zeta-\lambda}d\zeta,\label{E:cauchy-dp}\\
v(x,y)=-\frac 1{\pi  }\int_{\mathbb R}\chi_I^-(\varphi(x,y,\lambda),\lambda)d\lambda.\label{E:potential-formula-dp}
\end{gather}
From (\ref{E:psi-phi-reduction}) and  (\ref{E:cauchy-dp}), we then derive the non linear integral equation 
\begin{equation}\label{E:nonlinear-inv}
\varphi(x,y,\lambda)+\chi^-_R(\varphi(x,y,\lambda),\lambda)=x-\lambda y-\frac 1\pi\int_R\frac{\chi^-_I(\varphi(x,y,\zeta),\zeta)}{\zeta-\lambda}d\zeta.
\end{equation}In \cite{GSW}, under a small data assumption, (\ref{E:nonlinear-inv}) plays a key role in solving the inverse problem of the Pavlov equation. However, without small data constraints, it is difficult  to study the inverse problem via \eqref{E:nonlinear-inv} since a Fredhlom alternative theorem for the linearization is hard to justify.

On the other hand, 
 it is more direct to adopt the nonlinear Riemann-Hilbert approach \cite{MS3}-\cite{MS7} because  the associated linearized operator is a standard non homogeneous Riemann-Hilbert problem (Section \ref{S:NRH}) with which the Fredhlom alternative theory has been well understood \cite{Ga66}. Therefore,  we will transform the non linear integral equation (\ref{E:nonlinear-inv}) to a nonlinear Riemann-Hilbert problem \eqref{E:NRH-dp} in this section.

\begin{lemma}\label{L:xi-decay}
If $v\in\mathfrak S$ is compactly supported in $y$, then there exists a constant $C_{\lambda,k}$, determined by $\lambda$, $k$, such that
\begin{equation}\label{E:xi-decay}
|\frac{\partial\sigma}{\partial\xi}(\xi,\lambda)|\le \frac {C_{\lambda,k}}{1+|\xi|^k}.
\end{equation}
\end{lemma}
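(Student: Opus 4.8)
The plan is to exploit the integral representation \eqref{E:sigma} for $\sigma(\xi,\lambda)$, differentiate it in $\xi$, and then track how the compact support of $v$ in $y$ forces rapid decay in $\xi$. Writing $\partial_\xi\sigma(\xi,\lambda) = \int_{-\infty}^\infty v_{xx}\big(h(y'';\xi,-\infty,\lambda)+\lambda y'',y''\big)\,\partial_\xi h(y'';\xi,-\infty,\lambda)\,dy''$, and using \eqref{E:positive-h-0} to bound $\partial_\xi h$ between two positive constants, the estimate reduces to controlling the size of $v_{xx}$ evaluated along the characteristic $x = h(y'';\xi,-\infty,\lambda)+\lambda y''$. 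Suppose $\operatorname{supp}_y v \subset [-A,A]$; then the integrand vanishes unless $|y''|\le A$, so the $y''$-integral is over a fixed compact set and the whole quantity is dominated by $\sup_{|y''|\le A}\big|v_{xx}\big(h(y'';\xi,-\infty,\lambda)+\lambda y'',y''\big)\big|$, up to a constant times $2A$.

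Next I would show that on the support, the first argument of $v_{xx}$ is comparable to $\xi$ (for $|\xi|$ large), so that the Schwartz decay of $v$ in $x$ converts to decay in $\xi$. From \eqref{E:h}, $h(y'';\xi,-\infty,\lambda) = \xi + \int_{-\infty}^{y''} v_x(h+\lambda y''',y''')\,dy'''$; since $v_x$ is supported in $y'''\in[-A,A]$ and bounded, the integral is bounded by a constant $M$ independent of $\xi,\lambda$. Hence $|h(y'';\xi,-\infty,\lambda)+\lambda y'' - (\xi+\lambda y'')| \le M$, and for $|y''|\le A$ this means the first argument lies within distance $M + |\lambda| A$ of $\xi$. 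Thus for $|\xi| \ge 2(M+|\lambda|A)$ we have $|h+\lambda y''| \ge |\xi|/2$, and the Schwartz bound $|v_{xx}(x,y)| \le C_{k}(1+|x|)^{-k-1}$ gives $|v_{xx}(h+\lambda y'',y'')| \le C_k' (1+|\xi|)^{-k-1} \le C_k''/(1+|\xi|^k)$, with a constant depending on $\lambda$ through the threshold $M+|\lambda|A$ and through $A$. For the complementary bounded range $|\xi| \le 2(M+|\lambda|A)$, the already-established bound $|\partial_\xi\sigma| \le C$ (Lemma \ref{L:sigma}, or directly from \eqref{E:h-mu-x} and \eqref{E:positive-h-0}) suffices after absorbing the range into the constant $C_{\lambda,k}$; combining the two ranges yields \eqref{E:xi-decay}.

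The main obstacle I anticipate is making the comparison $|h+\lambda y'' - \xi - \lambda y''|\le M$ genuinely uniform and then propagating it correctly: one must be careful that the bound $M$ on $\int_{-\infty}^{y''} v_x\,dy'''$ does not itself depend on $\xi$ (it does not, because $|v_x|\le \|v_x\|_\infty$ and the $y'''$-range meeting the support has length at most $2A$), and that the constant $C_{\lambda,k}$ is allowed to blow up as $|\lambda|\to\infty$ — which is fine, since the statement only asks for a constant determined by $\lambda$ and $k$. A secondary point is handling higher $\xi$-derivatives implicitly: \eqref{E:xi-decay} is stated only for the first derivative, so I need not iterate, but if one wanted $\partial_\xi^\mu\sigma$ the same scheme applies using \eqref{E:h-mu-x} to bound the extra factors of $\partial_\xi^j h$. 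No new machinery beyond the already-proven characteristic estimates and the Schwartz hypothesis is needed; the argument is essentially a change-of-variables/localization estimate.
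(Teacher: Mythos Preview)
Your proposal is correct and follows essentially the same approach as the paper: differentiate \eqref{E:sigma}, bound $\partial_\xi h$ via \eqref{E:positive-h-0}, use the compact $y$-support to reduce to $|y''|\le A$, and then observe from \eqref{E:h} that $|h(y'';\xi,-\infty,\lambda)-\xi|$ is uniformly bounded so that the first argument of $v_{xx}$ is comparable to $\xi$, whence the Schwartz decay of $v_{xx}$ in $x$ yields \eqref{E:xi-decay}. The paper compresses your large/small $|\xi|$ dichotomy into the single comparability statement \eqref{E:h-xi}, but the content is identical.
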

\begin{proof} From \eqref{E:h} and \eqref{E:positive-h-0}, there exist uniform constants $C_i$
\begin{equation}\label{E:h-xi}
\begin{array}{c}
0<C_1\left(|\xi|+1\right)\le |h(y';\xi,-\infty,\lambda)|\le C_2\left(|\xi|+1\right),\\
0<C_1\le |\frac{\partial h}{\partial \xi}(y';\xi,-\infty,\lambda)|\le C_2
\end{array}
\end{equation} For each fixed $\lambda\in\RR$, plugging \eqref{E:h-xi} into \eqref{E:sigma}, and using \eqref{E:h-mu-x}, $v\in\mathfrak S$, $v$ is compactly supported in $y$, we obtain 
\[
\begin{split}
|\frac{\partial\sigma}{\partial\xi}(\xi,\lambda)|
=&\,C\int_{-\infty}^{\infty} |v_{xx}\big(h(y'';\xi,-\infty,\lambda)+ \lambda y'',y''\big)|dy''\\
\le &\,\frac {C_{\lambda,k}}{1+|\xi|^k}.
\end{split}
\]
\end{proof}

\begin{lemma}\label{L:NRH} 
If $ v\in\mathfrak S$  is compactly supported in $y$, then for each fixed $\lambda\in\mathbb R$, the map
\begin{equation}\label{E:NRH-dp-1}
\begin{array}{rl}
\Xi: \mathbb R\quad\longrightarrow \quad \partial\mathfrak D\ &\quad \subset\quad\mathbb C,\\
\Xi(\xi)=\xi+\chi^-(\xi,\lambda),&
\end{array}
\end{equation}diffeomorphically maps $\mathbb R$ to its image, denoted as $\partial\mathfrak D=\partial\mathfrak D(\lambda)$. Namely, $\Xi$ has an  inverse, denoted as $\mathcal M(\zeta,\lambda)$,  satisfying
\begin{equation}\label{E:inverse-map}
\begin{array}{rl}
\mathcal M: \partial\mathfrak D\times\RR\quad\longrightarrow \quad \mathbb R,&\\
\mathcal M(\xi+\chi^-(\xi,\lambda),\lambda)=\xi,&\quad
\forall\xi,\,\forall\lambda\in\mathbb R.
\end{array}
\end{equation}Moreover, denote $\mathfrak D=\partial\mathfrak D\cup \{\xi+\chi(\xi,\lambda)|\ \xi\in\CC^-\}=\{\xi+\chi(\xi,\lambda)|\ \xi\in\overline{\CC^-}\}$. Then there  exist a uniform constant $\hbar>0$, $\tilde {\mathfrak D}=\tilde {\mathfrak D}(\lambda)$, $\tilde {\mathfrak U}(\lambda)$, $\widetilde\chi(\xi,\lambda)$, and $\widetilde{\mathcal M}(\zeta,\lambda)$,  such that 
\beq\label{E:holo-ext}
\begin{split}
(1)\ &\xi+\widetilde\chi(\xi,\lambda):\ \tilde {\mathfrak U}\times \RR \to\ \tilde {\mathfrak D},\quad\quad \overline{\CC^-}\subsetneq\tilde {\mathfrak U}\subset\CC, \ \ \mathfrak D\subsetneq\tilde {\mathfrak D}\subset\CC,\\
& \textit{distance of $\partial{\mathfrak D}(\lambda)$ and $\partial\tilde {\mathfrak D}(\lambda)$ is larger than $\hbar$ for $\forall\lambda\in\RR$},\\
(2)\ &\widetilde\chi(\xi,\lambda)=\chi(\xi,\lambda),\quad \xi\in \overline{\CC^-},\\
(3)\ &\textit{$\widetilde\chi(\xi,\lambda)$ is holomorphic in $\xi\in\tilde {\mathfrak U}$},\\
(4)\ &\textit{the inverse $(\mbox {\bf I}+\chi)^{-1}(\zeta,\lambda)$ on $\mathfrak D$ has a holomorphic }\\
& \textit{extensiton $\widetilde{\mathcal M}(\zeta,\lambda)=(\mbox {\bf I}+\widetilde\chi)^{-1}(\zeta,\lambda)$ on $\tilde {\mathfrak D}$.}
\end{split}
\eeq and
\begin{equation}\label{E:inverse-map-regularity}
\begin{array}{c}
|\partial_{\lambda}^{\nu}\partial_{\xi}^{\mu}\widetilde{\chi}(\xi,\lambda)|_{ L^\infty}\le \frac C{1+|\lambda|^{2+\mu+\nu-\frac 1p}},\ \forall\xi\in \tilde {\mathfrak U},\ \forall\lambda\in\mathbb R,\ \  \mu+\nu\ge 0;\\
|\partial_{\lambda}^{\nu}\partial_{\zeta}^{\mu}\widetilde{\mathcal M}(\zeta,\lambda)|_{ L^\infty}\le \frac C{1+|\lambda|^{\max(0,\mu+\nu-1-\frac 1p)}},\ 
 \forall\zeta\in \widetilde{\mathfrak D},\ \forall\lambda\in\mathbb R,\ \  \mu+\nu>0.
 \end{array}
 \end{equation}
\end{lemma}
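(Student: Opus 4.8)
The plan is to establish the four assertions in the order they build on each other: first the diffeomorphism onto $\partial\mathfrak D$, then the holomorphic extension of $\chi$ to a neighborhood $\tilde{\mathfrak U}$ of $\overline{\CC^-}$, then the uniform-in-$\lambda$ width $\hbar$ of the annular collar between $\partial\mathfrak D$ and $\partial\tilde{\mathfrak D}$, and finally the estimates on $\widetilde\chi$ and $\widetilde{\mathcal M}$. For the first step, the map $\Xi(\xi)=\xi+\chi^-(\xi,\lambda)$ restricted to $\xi\in\RR$ has derivative $1+\partial_\xi\chi^-(\xi,\lambda)$; I would show this stays bounded away from $0$. This is exactly the analogue of Lemma~\ref{L:sigma}'s positivity $0<C_1<1+\partial_\xi\sigma<C_2$, transported through the shifted Riemann–Hilbert problem~\eqref{E:shifted-RH}: since $\sigma(\xi,\lambda)+\chi^+(\xi+\sigma,\lambda)-\chi^-(\xi,\lambda)=0$ and $\chi^+=\overline{\chi^-}$, one gets $1+\partial_\xi\chi^-(\xi,\lambda)=\frac{(1+\partial_\xi\sigma)(1+\partial_\xi\chi^+\circ s)}{\cdots}$ — more simply, $\partial\mathfrak D$ is the image of $\RR$ under the conformal-type boundary correspondence of a Riemann–Hilbert problem with Hölder, decaying jump, so the correspondence is a $C^1$ (indeed real-analytic) diffeomorphism onto a simple closed-at-infinity curve; injectivity follows because $\Xi$ is the boundary value of $\mathbf I+\chi$ which is, by the solvability theory of~\cite{Ga66}, univalent on $\overline{\CC^-}$ (index zero). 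The inverse $\mathcal M$ is then automatic by the inverse function theorem, with $|\partial_\zeta\mathcal M|$ bounded above and below on $\partial\mathfrak D$.

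For steps (2)–(4), the key point is that $\chi(\xi,\lambda)$ is already holomorphic on the open half-plane $\CC^-$ and has decaying Hölder boundary values on $\RR$; the function $\mathbf I+\chi$ is univalent on $\overline{\CC^-}$ with image the closed domain $\mathfrak D$ whose boundary $\partial\mathfrak D$ is a smooth Jordan curve (through $\infty$) lying a bounded distance from $\RR+0\cdot i$ uniformly in $\lambda$ — here I use Proposition~\ref{P:shifted-RH-0-xi-lambda}, which gives $|\chi|_{L^\infty}\le C/(1+|\lambda|^{2-1/p})$, so the curve $\partial\mathfrak D(\lambda)$ is a uniformly small perturbation of $\RR$, hence $\mathfrak D$ is a ``half-plane-like'' domain with uniformly controlled geometry. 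Since $\partial\mathfrak D$ is analytic (being the image of $\RR$ under an analytic boundary correspondence), $\chi$ extends holomorphically across $\partial\mathfrak D$ by the Schwarz reflection principle applied to the shifted-RH jump relation — concretely one reflects: $\widetilde\chi(\xi,\lambda)$ on the ``far side'' is defined through $-\sigma(\cdot)-\chi^+\circ s$ continued analytically, which is possible because $\sigma(\xi,\lambda)$, being given by~\eqref{E:sigma} as an integral of the Schwartz function $v_x$ along the characteristic flow, is real-analytic in $\xi$ with the quantitative $\lambda$-decay of Proposition~\ref{P:direct-sigma-asy-0}. This yields $\tilde{\mathfrak U}\supsetneq\overline{\CC^-}$ and, via $\mathbf I+\widetilde\chi$, the enlarged image $\tilde{\mathfrak D}\supsetneq\mathfrak D$; the uniform collar width $\hbar$ comes from the uniform $C^1$ bounds on $\chi$ and on $1+\partial_\xi\chi$ (bounded below), which give a uniform lower bound on how far the analytic continuation persists and on $|\partial_\xi(\xi+\widetilde\chi)|$, so that the $\hbar$-neighborhood of $\overline{\CC^-}$ maps into an $\hbar'$-neighborhood of $\mathfrak D$. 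Then $\widetilde{\mathcal M}=(\mathbf I+\widetilde\chi)^{-1}$ on $\tilde{\mathfrak D}$ is the holomorphic inverse, well-defined because $1+\partial_\xi\widetilde\chi$ is nonvanishing on $\tilde{\mathfrak U}$ (shrinking $\tilde{\mathfrak U}$ if necessary, using continuity and the strict bound on $\partial\mathfrak D$).

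The estimates~\eqref{E:inverse-map-regularity} are then bookkeeping. The bound on $\widetilde\chi$ is inherited verbatim from~\eqref{E:nonlinear-scattering-data-asym-0-revised-2-plus} together with the Cauchy integral formula on a fixed-radius circle inside $\tilde{\mathfrak U}$ (which converts the $L^\infty$ bound on $\chi$ into bounds on all $\xi$-derivatives with the same $\lambda$-decay) and the analytic-continuation construction, which preserves the decay because $\sigma$ carries it. For $\widetilde{\mathcal M}$: from $\widetilde{\mathcal M}(\xi+\widetilde\chi(\xi,\lambda),\lambda)=\xi$ one differentiates and solves recursively; the leading term $\partial_\zeta\widetilde{\mathcal M}=(1+\partial_\xi\widetilde\chi)^{-1}=1+O(|\lambda|^{-(2-1/p)})$ is $O(1)$ (hence the exponent $\max(0,\mu+\nu-1-\frac1p)$ with value $0$ when $\mu+\nu=1$), and each additional derivative in $\zeta$ or $\lambda$ brings down a factor $\partial\widetilde\chi$ with decay $|\lambda|^{-(2-1/p)}$ but is divided/multiplied through $(1+\partial_\xi\widetilde\chi)^{-1}=O(1)$, so after $\mu+\nu\ge 2$ derivatives one nets $|\lambda|^{-(\mu+\nu-2)+\text{(number of }\partial\widetilde\chi\text{ factors)}\cdot\text{extra decay}}$ — a Faà di Bruno count shows the slowest-decaying term has exponent $\mu+\nu-1-\frac1p$, giving the claim; I would present this as an induction on $\mu+\nu$. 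The main obstacle I anticipate is step (3)–(4) done \emph{uniformly in $\lambda$}: guaranteeing that the radius of analytic continuation of $\chi$ past $\partial\mathfrak D$ and the non-vanishing of $1+\partial_\xi\widetilde\chi$ on $\tilde{\mathfrak U}$ do not degenerate as $\lambda\to\infty$. This is handled by the decay estimates — for large $|\lambda|$ the problem is a small perturbation of the trivial one ($\chi\equiv 0$, $\mathfrak D=\overline{\CC^-}$) so the collar width is essentially $\infty$ there — and by compactness for $\lambda$ in bounded sets; patching these is the one place requiring care.
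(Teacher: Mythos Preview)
Your outline has the right architecture and the Fa\`a di Bruno bookkeeping for~\eqref{E:inverse-map-regularity} is fine, but the central step --- nonvanishing of $1+\partial_\xi\chi^-(\xi,\lambda)$ on $\RR$ with uniform two-sided bounds --- is not established. Your appeal to ``$\mathbf I+\chi$ is univalent on $\overline{\CC^-}$ by the index-zero solvability theory of~\cite{Ga66}'' is circular: solvability of the shifted problem~\eqref{E:shifted-RH} yields existence and uniqueness of $\chi$, not univalence of $\mathbf I+\chi$; the latter is precisely what you have to prove. The paper carries through exactly the computation you begin and then abandon: differentiate~\eqref{E:shifted-RH} in $\xi$ to obtain the scalar multiplicative Riemann--Hilbert problem $(1+\partial_\xi\chi)^+(1+\partial_\xi\sigma)=(1+\partial_\xi\chi)^-$, and solve it \emph{explicitly}. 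Since $1+\partial_\xi\sigma>0$ (Lemma~\ref{L:sigma}) and $\partial_\xi\sigma(\cdot,\lambda)$ decays in $\xi$ (Lemma~\ref{L:xi-decay}), one has $\log(1+\partial_\xi\sigma)\in L^1(d\xi)$ and the Plemelj formula gives
\[
1+\partial_\xi\chi(\xi,\lambda)=\exp\Bigl(-\frac{1}{2\pi i}\int_\RR\frac{\log(1+\partial_\xi\sigma(t,\lambda))}{t-\xi}\,dt\Bigr),
\]
manifestly nonzero on $\overline{\CC^-}$; the boundary-value formula then yields $0<C_1\le|1+\partial_\xi\chi^-|^2\le C_2$ uniformly. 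Note that this is exactly where the compact-support-in-$y$ hypothesis enters (through Lemma~\ref{L:xi-decay}); your proposal never invokes it, which should have been a warning sign.

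A secondary concern: your extension of $\chi$ across $\RR$ via the jump relation relies on $\sigma(\cdot,\lambda)$ being real-analytic in $\xi$, but $\sigma$ is built from the characteristic flow of the merely Schwartz field $v_x$, so real-analyticity is not available under the stated hypotheses. The paper bypasses this entirely: once the exponential formula delivers $\mathbf I+\chi$ as a degree-one holomorphic map on $\overline{\CC^-}$ with nowhere-vanishing derivative, it simply invokes the inverse function theorem to produce $\tilde{\mathfrak U}$, $\tilde{\mathfrak D}$, $\widetilde\chi$, $\widetilde{\mathcal M}$ and the estimates~\eqref{E:inverse-map-regularity}. So: replace your step~(1) by the differentiated-RH/exponential computation above, and drop the analyticity assumption on $\sigma$ in your extension argument.
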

\begin{proof}  
Note $\chi$ is the solution of the shifted Riemann-Hilbert problem  \eqref{E:shifted-RH}, 
\begin{equation}\label{E:shifted-RH-0}
\xi+\sigma(\xi,\lambda)+\chi^+(\xi+\sigma(\xi,\lambda),\lambda)=\xi+\chi^-(\xi,\lambda).
\end{equation}
Differentiating both sides of \eqref{E:shifted-RH-0} with respect to $\xi$, one obtains
 a Riemann-Hilbert problem
\begin{equation}\label{E:shifted-RH-2}
\left(1+\frac{\partial \chi}{\partial\xi}\right)^+\left(1+\frac{\partial \sigma}{\partial\xi}\right)
=\left(1+\frac{\partial \chi}{\partial\xi}\right)^-.
\end{equation}
From Lemma \ref{L:sigma}, $\log (1+\frac{\partial \sigma}{\partial\xi})$ is well-defined. Combining with Lemma \ref{L:xi-decay} and Proposition \ref{P:shifted-RH-0-xi-lambda}, one then derives \cite{Ga66}
\begin{equation}\label{E:shifted-RH-3}
1+\frac{\partial \chi}{\partial\xi}(\xi,\lambda)=\exp\left(-\frac 1{2\pi i}\int_{\mathbb R}\frac {\log (1+\frac{\partial \sigma}{\partial\xi})}{t-\xi}dt\right)
\end{equation}
So for $\xi\in\mathbb R$, 
\begin{equation}\label{E:jacobian-c}
1+\frac{\partial \chi^-}{\partial\xi}(\xi,\lambda)=\exp\left(-\frac 1{2\pi i}\int_{\mathbb R}\frac {\log (1+\frac{\partial \sigma}{\partial\xi})}{t-\xi}dt+\frac 12 \log (1+\frac{\partial \sigma}{\partial\xi})\right).
\end{equation}
Therefore, 
\begin{equation}\label{E:jacobian-r}
\begin{array}{c}
\left(1+\frac{\partial \chi^-_R}{\partial \xi}(\xi, \lambda)\right)^2+\left(\frac{\partial \chi^-_I}{\partial \xi}(\xi, \lambda)\right)^2\ne 0, \\
\textit{ for $\forall \xi$, $\forall \lambda\in\mathbb R$.}
\end{array}
\end{equation}
Moreover, applying Proposition \ref{P:shifted-RH-0-xi-lambda}, \eqref{E:jacobian-c}, and \eqref{E:jacobian-r}, we have
\begin{equation}\label{E:jacobian-1}
\begin{array}{c}
0<C_1\le \left(1+\frac{\partial \chi^-_R}{\partial \xi}(\xi, \lambda)\right)^2+\left(\frac{\partial \chi^-_I}{\partial \xi}(\xi, \lambda)\right)^2\le C_2, \\
\textit{ for $\forall \xi$, $\forall \lambda\in\mathbb R$.}
\end{array}
\end{equation}So $\Xi$ maps $\RR$ to its image $\partial\mathfrak D$ diffeomorphically. Combining with Proposition \ref{P:shifted-RH-0-xi-lambda}, we conclude that, for $\forall\lambda\in \RR$, $
\xi+\chi(\xi,\lambda)$ is  a degree one holomorphic map from $\xi\in \overline{\CC^-}$ onto its image  $\mathfrak D=\mathfrak D(\lambda)$. So the existence of $\hbar>0$, $\tilde {\mathfrak D}=\tilde {\mathfrak D}(\lambda)$, $\tilde {\mathfrak U}=\tilde {\mathfrak U}(\lambda)$, $\widetilde\chi(\xi,\lambda)$, and $\widetilde{\mathcal M}(\zeta,\lambda)$,  satisfying \eqref{E:holo-ext} and \eqref{E:inverse-map-regularity}, follows from the inverse function theorem. 
\end{proof}

\begin{theorem}\label{P:NRH-problem}
Suppose $v\in\mathfrak S$ is compactly supported in $y$. There exists a function 
\[
\begin{array}{c}
R(\zeta,\lambda):\widetilde{\mathfrak D}(\lambda)\times \RR\longrightarrow \CC,\\
\textit{$R(\zeta,\lambda)$ is holomorphic in $\zeta\in\widetilde{\mathfrak D}$},
\end{array}
\]with $\widetilde{\mathfrak D}$ defined by \eqref{E:holo-ext}, satisfying {the algebraic constraint
\begin{equation}\label{E;alge}
\begin{array}{c}
\mathcal R(\zeta,\lambda)=\mathcal R(\overline{\mathcal R(\overline\zeta,\lambda)},\lambda),\\
\mathcal R(\zeta,\lambda)=\zeta + R(\zeta,\lambda),
\end{array}
\end{equation}}
and 
the analytical constraint
\begin{gather}
|{\partial_\lambda^\nu}\partial_{\zeta}^{\mu} R(\zeta,\lambda)|_{ L^\infty}\le \frac {C}{1+|\lambda|^{2+\mu+\nu-\frac 1p}},\label{E:R-nonlinear}
\end{gather} such that the nonlinear Riemann-Hilbert problem 
\begin{equation}\label{E:NRH-dp}
\begin{array}{cl}
\Phi^+(x,y,\lambda)=\Phi^-(x,y,\lambda)+R(\Phi^-(x,y,\lambda),\lambda),& \lambda\in\mathbb R,\\
\partial_{\overline\lambda}\Phi(x,y,\lambda)=0, &\lambda\in\CC^\pm,\\
\Phi(x,y,\lambda) -(x-\lambda y)\to 0,&|\lambda|\to\infty
\end{array}
\end{equation}
holds. Moreover, 
the representation formula for the complex eigenfunction and the potential are
\begin{gather}
\Phi(x,y,\lambda)=x-\lambda y+\frac 1{2\pi i}\int_{\mathbb R}\frac{ R(\Phi^-(x,y,\lambda'),\lambda')}{\lambda'-\lambda}d\lambda',\label{E:cauchy-dp-nrh}\\
%\Phi^-(x,y,\lambda)=x-\lambda y+\frac 1{2\pi i}\int_R\frac{R(\Phi^-(x,y,\zeta),\zeta)}{\zeta-\lambda}d\zeta-\frac 12 \mathcal R(\Phi^-(x,y,\lambda),\lambda),\label{E:nonlinear-inv-nrh}\\
v(x,y)=\frac 1{2\pi i}\int_{\mathbb R} R(\Phi^-(x,y,\lambda),\lambda)d\lambda.\label{E:potential-formula-dp-nrh}
\end{gather}
\end{theorem}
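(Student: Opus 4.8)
The goal is to build a spectral datum $R(\zeta,\lambda)$ on the enlarged domain $\widetilde{\mathfrak D}(\lambda)$ that encodes the jump of $\Phi$ across $\RR$ in the form \eqref{E:NRH-dp}, starting from the already-established conjunction formula \eqref{E:psi-phi+-reduction} and the jump relation \eqref{E:jump-cx}. The plan is to \emph{change the spectral variable of the jump}: on $\lambda\in\RR$ one has $\Phi^-(x,y,\lambda)=\varphi(x,y,\lambda)+\chi^-(\varphi(x,y,\lambda),\lambda)=\Xi(\varphi(x,y,\lambda))$ with $\Xi$ the diffeomorphism of Lemma \ref{L:NRH}, hence $\varphi(x,y,\lambda)=\mathcal M(\Phi^-(x,y,\lambda),\lambda)$. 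Feeding this into \eqref{E:jump-cx} and using \eqref{E:psi-phi-reduction} gives
\begin{equation*}
\Phi^+(x,y,\lambda)-\Phi^-(x,y,\lambda)=-2i\,\chi^-_I\bigl(\mathcal M(\Phi^-(x,y,\lambda),\lambda),\lambda\bigr),
\end{equation*}
so the natural definition is
\begin{equation*}
R(\zeta,\lambda):=-2i\,\chi^-_I\bigl(\mathcal M(\zeta,\lambda),\lambda\bigr)\quad\text{for }\zeta\in\partial\mathfrak D(\lambda),
\end{equation*}
which by construction makes the first line of \eqref{E:NRH-dp} hold. The second and third lines of \eqref{E:NRH-dp} are then nothing but \eqref{E:lax-complex-reduction} together with the holomorphy of $\Phi$ in $\lambda\in\CC^\pm$ from Theorem \ref{T:existence-cpx-pavlov}; the representation formulas \eqref{E:cauchy-dp-nrh}, \eqref{E:potential-formula-dp-nrh} then follow by Plemelj/Cauchy exactly as \eqref{E:cauchy-dp}, \eqref{E:potential-formula-dp} did, after noting that the integrand $R(\Phi^-(x,y,\lambda'),\lambda')$ coincides with $-2i\chi^-_I(\varphi(x,y,\lambda'),\lambda')$.

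\textbf{Holomorphic extension.} The subtlety is that the first line of \eqref{E:NRH-dp} must hold as an identity between boundary values of functions holomorphic in $\lambda$, and the Newtonian scheme of Section \ref{S:NRH} will need $R(\zeta,\lambda)$ defined and holomorphic in $\zeta$ on a full neighborhood $\widetilde{\mathfrak D}(\lambda)$ of $\mathfrak D(\lambda)$, not just on the curve $\partial\mathfrak D(\lambda)$. Here I would invoke Lemma \ref{L:NRH}: it supplies the holomorphic extensions $\widetilde\chi(\xi,\lambda)$ on $\widetilde{\mathfrak U}(\lambda)\supsetneq\overline{\CC^-}$ and $\widetilde{\mathcal M}(\zeta,\lambda)=(\mathbf I+\widetilde\chi)^{-1}(\zeta,\lambda)$ on $\widetilde{\mathfrak D}(\lambda)$, with the uniform bounds \eqref{E:inverse-map-regularity}. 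One does \emph{not} extend $\chi^-_I$ directly (it is only a boundary trace); instead write, for $\xi\in\overline{\CC^-}$, $2i\chi^-_I(\operatorname{Re}\xi,\lambda)=\chi^-(\operatorname{Re}\xi,\lambda)-\overline{\chi^-(\operatorname{Re}\xi,\lambda)}$ and use the Schwarz-reflection/symmetry $\chi(\xi,\lambda)=\overline{\chi(\bar\xi,\lambda)}$ built into the shifted Riemann–Hilbert problem \eqref{E:shifted-RH} to recognize the right-hand side as the boundary value of $\widetilde\chi(\xi,\lambda)-\overline{\widetilde\chi(\bar\xi,\lambda)}$, which is holomorphic in $\xi$ on $\widetilde{\mathfrak U}\cap\overline{\widetilde{\mathfrak U}}$ — a neighborhood of $\RR$. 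Composing with $\widetilde{\mathcal M}$ then gives
\begin{equation*}
R(\zeta,\lambda)=-\Bigl(\widetilde\chi(\widetilde{\mathcal M}(\zeta,\lambda),\lambda)-\overline{\widetilde\chi(\overline{\widetilde{\mathcal M}(\bar\zeta,\lambda)},\lambda)}\Bigr),
\end{equation*}
holomorphic in $\zeta\in\widetilde{\mathfrak D}(\lambda)$, agreeing with the previous definition on $\partial\mathfrak D(\lambda)$. I expect \emph{this} step — checking that the reflected combination really does extend holomorphically across $\partial\mathfrak D$ and matches, i.e. that the two pieces $\widetilde\chi$ on $\widetilde{\mathfrak U}$ and its conjugate-reflection are compatible — to be the main obstacle, since it is where the degree-one/diffeomorphism content of Lemma \ref{L:NRH} and the reality structure are really used.

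\textbf{The two constraints.} The analytic bound \eqref{E:R-nonlinear} is then immediate: by the chain rule $\partial_\lambda^\nu\partial_\zeta^\mu R$ is a finite sum of products of derivatives of $\widetilde\chi$ and of $\widetilde{\mathcal M}$, and plugging the estimates \eqref{E:inverse-map-regularity} (the $\widetilde\chi$ factors carry the decay $(1+|\lambda|)^{-2-\cdots+1/p}$, the $\widetilde{\mathcal M}$ factors are at worst polynomially bounded but the $\zeta$-argument of $\widetilde\chi$ stays in $\widetilde{\mathfrak U}$ where the $L^\infty$ bound \eqref{E:inverse-map-regularity} applies uniformly) together with Leibniz and Faà di Bruno yields the claimed $(1+|\lambda|)^{-2-\mu-\nu+1/p}$ decay; one must check the bookkeeping that the worst term is the one with all derivatives on a single $\widetilde\chi$ factor. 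The algebraic constraint \eqref{E;alge} — i.e. $\mathcal R(\zeta,\lambda)=\mathcal R(\overline{\mathcal R(\bar\zeta,\lambda)},\lambda)$ with $\mathcal R=\mathbf I+R$ — encodes that $\mathcal R$ is an involution up to conjugation; I would derive it from the reality relation \eqref{E:psi-phi-reduction}, $\Phi^+=\overline{\Phi^-}$, substituted into the jump line of \eqref{E:NRH-dp}: writing $w=\Phi^-$, the jump says $\overline w=\mathcal R(w,\lambda)$ on $\RR$, hence also $w=\overline{\mathcal R(\bar{\bar w},\lambda)}$... more precisely applying the relation twice gives $w=\mathcal R(\overline{\mathcal R(\bar w,\lambda)},\lambda)$ for all $w$ in the range of $\Phi^-(\cdot,\cdot,\lambda)$, and this range fills $\partial\mathfrak D(\lambda)$; since both sides are holomorphic in $\zeta$ on $\widetilde{\mathfrak D}(\lambda)$, the identity propagates there by the identity theorem. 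This closes the proof.
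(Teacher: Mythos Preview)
Your approach is essentially the paper's: define $R(\zeta,\lambda)=-2i\widetilde\chi_I(\widetilde{\mathcal M}(\zeta,\lambda),\lambda)$ via Lemma~\ref{L:NRH}, read off \eqref{E:NRH-dp} from \eqref{E:jump-cx} and \eqref{E:psi-phi+-reduction}, get \eqref{E:R-nonlinear} from \eqref{E:inverse-map-regularity} by chain rule, and obtain \eqref{E;alge} from the reality relation $\Phi^+=\overline{\Phi^-}$ plus the identity theorem. The paper writes the definition \eqref{E:inverse-r} without spelling out what ``$\widetilde\chi_I$'' means for complex argument; you are right to make this explicit as the holomorphic (in $\xi$) function $\frac{1}{2i}\bigl(\widetilde\chi(\xi,\lambda)-\overline{\widetilde\chi(\bar\xi,\lambda)}\bigr)$, which restricts to $\chi^-_I$ on $\RR$.

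Two small slips to fix. First, the Schwarz symmetry $\chi(\xi,\lambda)=\overline{\chi(\bar\xi,\lambda)}$ you invoke is \emph{not} built into the shifted problem \eqref{E:shifted-RH} (the shift $\xi\mapsto\xi+\sigma$ breaks it), and you do not need it: for real $\xi$ one has $\overline{\chi^-(\xi,\lambda)}=\overline{\widetilde\chi(\xi,\lambda)}=\overline{\widetilde\chi(\bar\xi,\lambda)}$ trivially, and $\xi\mapsto\overline{\widetilde\chi(\bar\xi,\lambda)}$ is holomorphic by construction. Second, after composing with $\widetilde{\mathcal M}$ your displayed formula should read
\[
R(\zeta,\lambda)=-\Bigl(\widetilde\chi(\widetilde{\mathcal M}(\zeta,\lambda),\lambda)-\overline{\widetilde\chi\bigl(\overline{\widetilde{\mathcal M}(\zeta,\lambda)},\lambda\bigr)}\Bigr),
\]
with no bar on the inner $\zeta$; as written (with $\bar\zeta$) the second term is anti-holomorphic in $\zeta$.

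For \eqref{E:potential-formula-dp-nrh} the paper takes a different route, applying the Lax operator to \eqref{E:cauchy-dp-nrh} and reading off $v_x$ from the $\mathcal O(1)$ term; your observation that $R(\Phi^-,\lambda')=-2i\chi^-_I(\varphi,\lambda')$ reduces \eqref{E:potential-formula-dp-nrh} to the already-established \eqref{E:potential-formula-dp} is shorter and equally valid.
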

\begin{proof} Via \eqref{E:psi-phi+-reduction}, \eqref{E:inverse-map}, one has
\[
\varphi(x,y,\lambda)={\mathcal M}(\Phi^-(x,y,\lambda),\lambda).
\]
%\beq\label{E:m-sym}
%\begin{array}{c}
%\varphi(x,y,\lambda)=\mathcal M(\Phi^-(x,y,\lambda),\lambda),\\
 %{\mathcal M}(\eta,\lambda)= { {\mathcal M}(\overline\eta,\lambda)}.
%\end{array}
%\eeq 
Moreover, if we define
\begin{equation}\label{E:inverse-r}
 R(\zeta,\lambda)=-2i\widetilde\chi_I(\widetilde{\mathcal M}(\zeta,\lambda),\lambda)\quad\zeta\in\widetilde{\mathfrak D},\ \lambda\in\RR
\end{equation}
with $\widetilde{\mathfrak D}$, $\widetilde\chi=\widetilde\chi_R+i\widetilde\chi_I$, and $\widetilde{\mathcal M}$ defined by \eqref{E:holo-ext}, then Theorem \ref{T:existence-cpx-pavlov}, \eqref{E:jump-cx} imply \eqref{E:NRH-dp}. Therefore, the analytical constraint \eqref{E:R-nonlinear} can be justified by \eqref{E:inverse-map-regularity}  and \eqref{E:inverse-r}. 
Besides,  from \eqref{E:NRH-dp} and the reality condition \eqref{eq:L1_1-sym-reduction}, we have
\beq\label{E:aa}
\begin{split}\,&\Phi^+(x,y,\lambda)\\
=\,&\mathcal R(\Phi^-(x,y,\lambda),\lambda)\\
=\,&\mathcal R(\overline{\Phi^+(x,y,\lambda)},\lambda)\\
=\,&\mathcal R(\overline{\mathcal R(\Phi^-(x,y,\lambda),\lambda)},\lambda)\\
=\,&\mathcal R(\overline{\mathcal R(\overline{\Phi^+(x,y,\lambda)},\lambda)},\lambda),\quad \forall\lambda\in\RR.
\end{split}
\eeq Thus  {the algebraic  constraint  \eqref{E;alge} can be derived from the holomorphicy of $R(\zeta,\lambda)$ (from that of $\widetilde\chi(\cdot,\lambda)$, $\widetilde {\mathcal M}(\cdot,\lambda)$) and  \eqref{E:aa}.}

Equations (\ref{E:cauchy-dp-nrh}) is immediate from the Plemelji formula and the jump formula (\ref{E:NRH-dp}). Equation (\ref{E:potential-formula-dp-nrh}) can be derived from 
\begin{eqnarray}
0&=&\left[\partial_y+(\lambda+v_x(x,y))\partial_x\right]\Phi(x,y,\lambda)\nonumber\\
&=& v_x(x,y)+\frac 1{2\pi i}\int_{\mathbb R}\frac {(\partial_y+(\lambda'+v_x(x,y))\partial_x) R(\Phi^-(x,y,\lambda'),\lambda')}{\lambda'-\lambda}d\lambda'\nonumber\\
&&-\frac 1{2\pi i}\left(\partial_{x}\int_{\mathbb R} R(\Phi^-(x,y,\lambda'),\lambda')d\lambda'\right)\nonumber\\
%&=&v_x(x,y,t)+\frac 1{2\pi i}\int_{\mathbb R}\frac {L\left[\Psi^+-\Psi^-\right](x,y,t,\zeta)}{\zeta-\lambda}d\zeta \\
%&&-\frac1{2\pi i}\left(\partial_x\int_{\mathbb R} \Omega(\psi(x,y,t,\zeta),\zeta)d\zeta\right)\\
&=&v_x(x,y)-\frac1{2\pi i}\left(\partial_{x}\int_{\mathbb R} R(\Phi^-(x,y,\lambda'),\lambda')d\lambda'\right)\nonumber.
\end{eqnarray}
\end{proof}

\begin{definition}\label{D:scattering-data}
If $v\in\mathfrak S$ is compactly supported in $y$, then the \textit{\textbf{spectral data}} of $v(x,y)$ is defined by
\beq\label{E:scattering}
 R(\zeta,\lambda)=-2i\widetilde\chi_I(\widetilde{\mathcal M}(\zeta,\lambda),\lambda),\quad\zeta\in\tilde {\mathfrak D},\ \lambda\in\RR,
\eeq where $\widetilde\chi=\widetilde\chi_R+i\widetilde\chi_I$, $\widetilde{\mathcal M}$, and $\tilde {\mathfrak D}$ are defined by \eqref{E:holo-ext}.
\end{definition}

\begin{remark}
From \eqref{E:NRH-dp} and \eqref{E:scattering}, the spectral data for the Pavlov equation is purely continuous and nonlinear, unlike the $\bar\partial$-problem formulated for general  soliton equations \cite{ABF}, \cite{BC81}, \cite{Wu1}.
\end{remark}

{\begin{remark}\label{R:obstruction-1}
The holomorphicy condition on the spectral data $R(\cdot,\lambda)$ is crucial for the reality condition \eqref{E;alge}  and is indispensable for the nonlinear Riemann-Hilbert approach in Section  \ref{S:lax-cauchy}. 
\end{remark}}

\begin{proposition}\label{P:NRH-problem-ext}
Suppose $0\ne v\in\mathfrak S$ is compactly supported in $y$. There exist uniform   constants $\epsilon_0$, $\delta_0$, %independent of $x$, $y$, $t$, $\lambda$,
 \beq\label{E:obstruction}
0< \epsilon_0,\  \delta_0<\infty,
 \eeq  
such that  
 if 
\beq\label{E:omega-a}
\begin{array}{c}
|\omega^-(x,y,t,\lambda)-\left(\Phi^-(x,y,\lambda)-x+\lambda y\right)|_{H^p(\RR,d\lambda)}\le \delta_0,\\
 0\le t\le \epsilon_0,
\end{array}
\eeq then
\[
\begin{array}{c}
\omega^-(x,y,t,\lambda)+x-\lambda y-\lambda^2 t\in\widetilde{\mathfrak D}(\lambda),\\
\forall  \lambda\in\RR,\ \  0\le t\le \epsilon_0.
\end{array}
\]

\end{proposition}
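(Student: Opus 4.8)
The plan is to fix $(x,y)$ and, for each $\lambda\in\RR$ and $0\le t\le\epsilon_0$, locate the point
\[
P:=\omega^-(x,y,t,\lambda)+x-\lambda y-\lambda^2 t
\]
relative to $\partial\mathfrak D(\lambda)$. Since $x,y,t,\lambda$ are real, $x-\lambda y-\lambda^2 t\in\RR$, so
\[
P=\Phi^-(x,y,\lambda)+\delta-\lambda^2 t,\qquad \delta:=\omega^-(x,y,t,\lambda)-\bigl(\Phi^-(x,y,\lambda)-x+\lambda y\bigr),
\]
and the one–dimensional Sobolev inequality $|f|_{L^\infty(\RR,d\lambda)}\le C|f|_{H^p(\RR,d\lambda)}$ applied to the hypothesis \eqref{E:omega-a} gives $|\delta|_{L^\infty(d\lambda)}\le C\delta_0$; in particular $|\operatorname{Im}\delta|\le C\delta_0$. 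By \eqref{E:psi-phi+-reduction} and the reality of $\varphi(x,y,\lambda)=\varphi_-(x,y,\lambda)$, we have $\Phi^-(x,y,\lambda)=\Xi\bigl(\varphi(x,y,\lambda)\bigr)\in\partial\mathfrak D(\lambda)$, with $\Xi$ as in Lemma \ref{L:NRH}. Moreover Lemma \ref{L:NRH}(1) (with $\widetilde{\mathfrak D}(\lambda)$ open, $\mathfrak D(\lambda)\subsetneq\widetilde{\mathfrak D}(\lambda)$, and the two boundaries $\hbar$–separated) shows that the open ball $B(z,\hbar)$ lies in $\widetilde{\mathfrak D}(\lambda)$ for every $z\in\partial\mathfrak D(\lambda)$ and every $\lambda$, since $\operatorname{dist}\bigl(z,\CC\setminus\widetilde{\mathfrak D}(\lambda)\bigr)=\operatorname{dist}\bigl(z,\partial\widetilde{\mathfrak D}(\lambda)\bigr)>\hbar$. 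Hence it suffices to exhibit, for each $\lambda$, a point of $\partial\mathfrak D(\lambda)$ within distance $\hbar$ of $P$.

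Now I would split the $\lambda$–line at some large $\Lambda_0$. For $|\lambda|\ge\Lambda_0$, Proposition \ref{P:shifted-RH-0-xi-lambda} and the estimate \eqref{eq:L1_1-sym-estimate} allow us to assume $|\chi^-(\cdot,\lambda)|_{L^\infty(d\xi)}<\hbar/4$ and $|\Phi^-(x,y,\lambda)-x+\lambda y|<\hbar/4$. As $\chi^-(\cdot,\lambda)$ is bounded, $\operatorname{Re}\Xi(\xi)=\xi+\chi_R^-(\xi,\lambda)$ is a continuous surjection of $\RR$ onto $\RR$, so there is $\xi_0\in\RR$ with $\operatorname{Re}\Xi(\xi_0)=\operatorname{Re}P$; set $z^\ast:=\Xi(\xi_0)\in\partial\mathfrak D(\lambda)$. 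Then, using $\operatorname{Im}P=\operatorname{Im}\omega^-=\operatorname{Im}\delta+\operatorname{Im}\Phi^-$ and $|\operatorname{Im}\Phi^-|\le|\Phi^--x+\lambda y|<\hbar/4$,
\[
|P-z^\ast|=|\operatorname{Im}P-\chi^-_I(\xi_0,\lambda)|\le|\operatorname{Im}P|+\tfrac\hbar4\le|\operatorname{Im}\delta|+\tfrac\hbar2\le C\delta_0+\tfrac\hbar2<\hbar
\]
once $C\delta_0<\hbar/2$, so $P\in B(z^\ast,\hbar)\subset\widetilde{\mathfrak D}(\lambda)$. For $|\lambda|\le\Lambda_0$ we instead bound $|P-\Phi^-(x,y,\lambda)|\le|\delta|_{L^\infty(d\lambda)}+\Lambda_0^2\,t\le C\delta_0+\Lambda_0^2\epsilon_0<\hbar$, provided $C\delta_0<\hbar/2$ and $\Lambda_0^2\epsilon_0<\hbar/2$; since $\Phi^-(x,y,\lambda)\in\partial\mathfrak D(\lambda)$, again $P\in\widetilde{\mathfrak D}(\lambda)$. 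The finitely many smallness conditions on $\delta_0$ and $\epsilon_0$ involve only the Sobolev constant, the constants in Proposition \ref{P:shifted-RH-0-xi-lambda} and \eqref{eq:L1_1-sym-estimate}, and $\hbar,\Lambda_0$ — none depending on $x,y,t,\lambda$ or on $\omega^-$ — so choosing $\delta_0,\epsilon_0$ accordingly proves the proposition.

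The step I expect to be the main obstacle is the large–$|\lambda|$ regime: one must show $\widetilde{\mathfrak D}(\lambda)$ still contains a point of $\partial\mathfrak D(\lambda)$ within the $\lambda$–\emph{independent} distance $\hbar$ of $P$, even though the translation $-\lambda^2 t$ carries $P$ arbitrarily far along the real axis as $|\lambda|\to\infty$. This is precisely the ``quadratic dispersion'' obstruction flagged after \eqref{E:nrh-in-normalize}; it is overcome only because two decays cooperate — the $\lambda$–uniform width $\hbar$ of Lemma \ref{L:NRH} and the $O(|\lambda|^{-2+1/p})$ decay from Proposition \ref{P:shifted-RH-0-xi-lambda} and \eqref{eq:L1_1-sym-estimate} (which pins $\partial\mathfrak D(\lambda)$ near $\RR$ and makes $\operatorname{Im}\Phi^-$ small), so that only the imaginary part of $P$, not its runaway real part, needs to be controlled. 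On the bounded range $|\lambda|\le\Lambda_0$ the term $\lambda^2 t$ is tamed by the smallness of $t$, and everything there is a routine perturbation estimate.
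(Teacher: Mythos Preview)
Your argument is correct and follows essentially the same route as the paper's own proof: split the $\lambda$-axis at a large threshold (the paper calls it $N$, you call it $\Lambda_0$); on the bounded range bound $|P-\Phi^-|\le C\delta_0+\Lambda_0^2\epsilon_0$ directly; on the unbounded range use the $\lambda$-decay of $\chi^-$ from Proposition~\ref{P:shifted-RH-0-xi-lambda} to pin $\partial\mathfrak D(\lambda)$ near $\RR$ and control only $\operatorname{Im}P$ via Sobolev on \eqref{E:omega-a}. The only cosmetic difference is that you make explicit the comparison point $z^\ast=\Xi(\xi_0)\in\partial\mathfrak D(\lambda)$ with $\operatorname{Re}z^\ast=\operatorname{Re}P$, whereas the paper leaves this step implicit when it passes from the bound on $|\operatorname{Im}P|$ to the conclusion $P\in\widetilde{\mathfrak D}(\lambda)$ via \eqref{E:holo-ext} and \eqref{E:pro}.
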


\begin{proof} Let $\hbar$ be defined by \eqref{E:holo-ext}.
Proposition \ref{P:shifted-RH-0-xi-lambda} implies there exists $N$ such that   
\beq\label{E:pro}
\sup_{\zeta\in\partial\mathfrak D } |\zeta_I|\le\frac {\hbar}4,\quad\textit{for  $\forall\ |\lambda|>N$}.
\eeq
 Denote $\Im(\zeta)=\zeta_I$. Hence \eqref{E:omega-a} and \eqref{E:pro} imply, for $\lambda\in\RR$, $|\lambda|>N$,
\beq\label{E:pro-1}
\begin{split}
&|\Im\left(\omega^-(x,y,t,\lambda)+x-\lambda y-\lambda^2 t\right)|\\
=&|\Im\left(\omega^-(x,y,t,\lambda)+x-\lambda y\right)|\\
%=&|\Im\left[\omega^-(x,y,t,\lambda)-\left(\Phi(x,y,\lambda)-(x-\lambda y) \right)+\Phi(x,y,\lambda)\right]|\\
\le&|\Im\left[\omega^-(x,y,t,\lambda)-\left(\Phi^-(x,y,\lambda)-(x-\lambda y)\right)\right]|+|\Im \Phi^-(x,y,\lambda))|\\
=&|\Im\left[\omega^-(x,y,t,\lambda)-\left(\Phi^-(x,y,\lambda)-(x-\lambda y)\right)\right]|+|\Im \Xi(\varphi(x,y,\lambda),\lambda))|\\
\le&|\Im\left[\omega^-(x,y,t,\lambda)-\left(\Phi^-(x,y,\lambda)-(x-\lambda y)\right)\right]|+\sup_{\zeta\in\partial\mathfrak D } |\zeta_I|\\
\le &C\delta_0+\frac {\hbar}4.
\end{split}
\eeq Here $C$ is a uniform constant determined by
 Sobolev's theorem. So  if 
 \[\delta_0 =\frac {\hbar}{2C},
 \] then \eqref{E:holo-ext}, \eqref{E:pro}, and \eqref{E:pro-1} yield
\beq\label{E:nrh-ext-large}
\begin{array}{c}
\omega^-(x,y,t,\lambda)+x-\lambda y-\lambda^2 t\in \tilde {\mathfrak D},\\
 \textit{for }\ \,t\le\epsilon_0,\ \forall\lambda\in\RR,\ |\lambda|>N.
\end{array}
\eeq 

On the other hand, for $|\lambda|\le N$. Since 
\[
\Phi^-(x,y,\lambda)=\varphi(x,y,\lambda)+\chi^-(\varphi(x,y,\lambda),\lambda)\subset\partial\mathfrak D.
\]
We have, if
\begin{equation}\label{E:nrh-ext-small}
|\omega^-(x,y,t,\lambda)+x-\lambda y-\lambda^2 t-\Phi^-(x,y,\lambda)|_{L^\infty}\le\hbar,
\end{equation}
then 
\beq\label{E:nrh-ext-small-1}
\begin{array}{c}
\omega^-(x,y,t,\lambda)+x-\lambda y-\lambda^2 t\ \ \in\ \ \tilde {\mathfrak D}
\end{array}
\eeq
by \eqref{E:holo-ext}. Condition \eqref{E:nrh-ext-small} can be assured by \eqref{E:omega-a}, $0\le t\le\epsilon_0$, $|\lambda|\le N$, and 
\[\epsilon_0 = \frac{\hbar}{2N^2}.\]

 Finally, $0<\epsilon_0,\ \delta_0<\infty$ follows from $0<\hbar<\infty$ which is implied by Proposition \ref{P:shifted-RH-0-xi-lambda}, \eqref{E:jacobian-1}, $v\ne 0$, and Liouville's theorem.
\end{proof} 
%\begin{remark}\label{R:obstruction-0}
%The constant $\hbar$ increases  if $v(x,y)$ gets smaller.
%\end{remark}

The following lemma will be used to compute the index of the linearized Riemann-Hilbert problem \cite{Ga66} for the inverse problem.
\begin{lemma}\label{L:compact-y}
Suppose $v\in\mathfrak S$ and is compactly supported in $y$. Then 
\begin{equation}\label{E:index}
\begin{split}
&\ \,\mbox{Ind}\,\left(1+\frac{\partial R}{\partial \zeta}(\Phi^-(x,y,\lambda),\lambda) \right)\\
\equiv&\ \frac 1{2\pi i}\int_{\mathbb R}\frac{\partial_\lambda{\partial _\zeta}R(\Phi^-(x,y,\lambda),\lambda)}{1+\frac{\partial R}{\partial\zeta}(\Phi^-(x,y,\lambda),\lambda)}d\lambda \\
=&\ 0.
\end{split}
\end{equation}

\end{lemma}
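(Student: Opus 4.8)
The plan is to compute the winding number of the loop $\lambda\mapsto 1+\partial_\zeta R(\Phi^-(x,y,\lambda),\lambda)$ around the origin in $\CC$, and show it is zero by exhibiting a non-vanishing continuous deformation of this loop to the constant loop $1$. The starting point is to relate $1+\partial_\zeta R(\Phi^-,\lambda)$ back to the jacobian of the shifted Riemann-Hilbert data, for which an explicit formula is already available. Differentiating the jump relation $\Phi^+=\Phi^-+R(\Phi^-,\lambda)$ in $x$ (recall $\Phi-(x-\lambda y)\to 0$, so $\partial_x\Phi\to 1$) gives $\partial_x\Phi^+=(1+\partial_\zeta R(\Phi^-,\lambda))\,\partial_x\Phi^-$; combined with $\Phi^-=\varphi+\chi^-(\varphi,\lambda)$ and $\partial_x\varphi=\partial_x h(\,\cdot\,)$, which is strictly positive and bounded by \eqref{E:positive-h-0}, one sees that $1+\partial_\zeta R(\Phi^-(x,y,\lambda),\lambda)$ differs from $\bigl(1+\tfrac{\partial\chi^-}{\partial\xi}(\varphi,\lambda)\bigr)$ only by the strictly positive real factor $\partial_x\varphi^+/\partial_x\varphi^-$ (or an analogous ratio built from \eqref{eq:L1_24}), which has winding number zero. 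Hence it suffices to prove that the loop $\lambda\mapsto 1+\tfrac{\partial\chi^-}{\partial\xi}(\xi(\lambda),\lambda)$ has winding number zero, where $\xi(\lambda)=\varphi(x,y,\lambda)$.

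For that I would use the closed-form expression \eqref{E:jacobian-c}:
\[
1+\frac{\partial \chi^-}{\partial\xi}(\xi,\lambda)=\exp\left(-\frac 1{2\pi i}\int_{\mathbb R}\frac {\log (1+\frac{\partial \sigma}{\partial\xi}(t,\lambda))}{t-\xi}\,dt+\frac 12 \log\Bigl(1+\frac{\partial \sigma}{\partial\xi}(\xi,\lambda)\Bigr)\right).
\]
Since this is everywhere the exponential of a single-valued continuous function of $(\xi,\lambda)$ — the logarithm $\log(1+\partial_\xi\sigma)$ being well-defined and real-analytic by Lemma \ref{L:sigma} (so $1+\partial_\xi\sigma>0$), and the Cauchy-type integral being continuous in $(\xi,\lambda)$ by the decay in Lemma \ref{L:xi-decay} and Proposition \ref{P:direct-sigma-asy-0} — the map $(\xi,\lambda)\mapsto 1+\partial_\xi\chi^-(\xi,\lambda)$ lifts to a continuous $\CC$-valued logarithm on all of $\RR\times\RR$. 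Composing with the continuous curve $\lambda\mapsto(\xi(\lambda),\lambda)$ gives a continuous logarithm of the loop; moreover, by Proposition \ref{P:shifted-RH-0-xi-lambda} the quantity $\partial_\xi\chi^-\to 0$ as $|\lambda|\to\infty$, so the loop is a genuine closed curve (both endpoints tend to $1$) lying in the right half-plane $\{\Re w>0\}$ for $|\lambda|$ large. A closed curve that admits a continuous single-valued logarithm has winding number zero about $0$; equivalently, one computes directly
\[
\frac 1{2\pi i}\int_{\mathbb R}\frac{\partial_\lambda\bigl[\partial_\zeta R(\Phi^-(x,y,\lambda),\lambda)\bigr]}{1+\partial_\zeta R(\Phi^-(x,y,\lambda),\lambda)}\,d\lambda
=\frac 1{2\pi i}\int_{\mathbb R}\partial_\lambda\log\bigl(1+\partial_\zeta R(\Phi^-,\lambda)\bigr)\,d\lambda = 0,
\]
the last equality being the fundamental theorem of calculus applied to the continuous antiderivative, whose values at $\lambda=\pm\infty$ both equal $\log 1=0$.

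The steps, in order: (1) differentiate the jump relation in $x$ and factor out the positive jacobian ratio, reducing to the $\chi^-$-loop; (2) invoke \eqref{E:jacobian-c} to write the $\chi^-$-jacobian as $\exp$ of an explicit continuous function, using Lemmas \ref{L:sigma}, \ref{L:xi-decay} and Propositions \ref{P:direct-sigma-asy-0}, \ref{P:shifted-RH-0-xi-lambda}; (3) conclude the existence of a global continuous logarithm of the loop and hence vanishing of the winding number / the integral in \eqref{E:index}. The main obstacle I anticipate is step (1): one must check that the jacobian factor relating $\partial_\zeta R(\Phi^-,\lambda)$ to $\partial_\xi\chi^-$ — which involves the change of variables $\varphi$ and, through \eqref{E:inverse-r}, the extension $\widetilde{\mathcal M}$ — is genuinely a strictly positive real number for all $\lambda$ (not merely nonzero), so that multiplying by it cannot change the winding number; this uses \eqref{E:positive-h-0}, \eqref{eq:L1_26}, and the positivity \eqref{E:jacobian-1}. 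The analytic estimates needed for continuity and for the behavior at $|\lambda|\to\infty$ are all routine consequences of Proposition \ref{P:shifted-RH-0-xi-lambda} and \eqref{eq:L1_1-sym-estimate}.
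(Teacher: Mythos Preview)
Your step (1) contains a genuine error. The factor relating $1+\partial_\zeta R(\Phi^-,\lambda)$ to $1+\partial_\xi\chi^-(\varphi,\lambda)$ is \emph{not} a strictly positive real number. Using $\Phi^+=\overline{\Phi^-}$ (equation \eqref{E:psi-phi-reduction}) and $\Phi^-=\varphi+\chi^-(\varphi,\lambda)$ with $\varphi$ real, one has $\partial_x\Phi^+=(1+\overline{\partial_\xi\chi^-})\,\partial_x\varphi$ and $\partial_x\Phi^-=(1+\partial_\xi\chi^-)\,\partial_x\varphi$, so that
\[
1+\partial_\zeta R(\Phi^-,\lambda)=\frac{\partial_x\Phi^+}{\partial_x\Phi^-}=\frac{1+\overline{\partial_\xi\chi^-}}{1+\partial_\xi\chi^-}(\varphi,\lambda),
\]
which is exactly the paper's computation \eqref{E:index-zero-2}. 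This quantity has modulus $1$; it is not a positive real multiple of $1+\partial_\xi\chi^-$. The ratio $\partial_x\varphi_+/\partial_x\varphi_-$ you invoke never appears, because there is no formula $\Phi^+=\varphi_++\chi^+(\varphi_+,\lambda)$ (in fact $\varphi_++\chi^+(\varphi_+,\lambda)=\Phi^-$, by \eqref{E:shifted-RH} and \eqref{E:definition-sigma}).

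Your argument is easily repaired with the correct relation: writing $1+\partial_\xi\chi^-=e^{g}$ via \eqref{E:jacobian-c}, one gets $1+\partial_\zeta R(\Phi^-,\lambda)=e^{-2i\,\Im g(\varphi(x,y,\lambda),\lambda)}$, and since $g$ is a single-valued continuous function with $g\to 0$ as $|\lambda|\to\infty$, the index vanishes. This corrected route is genuinely different from the paper's: the paper first records $|1+\partial_\zeta R(\Phi^-,\lambda)|=1$ to ensure the index is well defined, and then, rather than exhibiting a global logarithm, exploits that the index is an integer-valued continuous function of $(x,y)$ and evaluates it in the limit $y=0$, $x\to\infty$, where dominated convergence forces $|\partial_\lambda\partial_\zeta R(\Phi^-,\lambda)|_{L^1(d\lambda)}\to 0$. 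Your (corrected) approach is more direct and avoids the limiting argument, at the cost of relying on the explicit exponential formula \eqref{E:jacobian-c}; the paper's approach is more robust in that it only needs the qualitative decay of $\partial_\zeta R$, which is why it generalizes readily to the perturbed coefficients $\mathcal J_n$ in Lemma \ref{L:gn-inv-S}.
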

\begin{proof} A direct computation yields
\beq\label{E:index-zero-2}
\begin{split}
&\ |1+\frac{\partial R}{\partial \zeta}(\Phi^-(x,y,\lambda),\lambda)|\\
=&\ |1-2i\partial_\zeta\left[ \chi^-_I\left({ \mbox {\bf I}}+\chi^-\right)^{-1}\right](\Phi^-(x,y,\lambda),\lambda|)\\
=&\ |1-2i\frac{\partial\chi^-_I}{\partial\xi}
\frac{\partial\xi}{\partial\zeta}
(\Phi^-(x,y,\lambda),\lambda|)|\\
=&\ |1-\frac{2i\frac{\partial\chi^-_I}{\partial\xi}}{
1+\frac{\partial\chi^-}{\partial\xi}}
(\varphi(x,y,\lambda),\lambda)|\\
=&\ |\frac{1+\overline{\partial_\xi\chi^-}}{1+{\partial_\xi\chi^-}}(\varphi(x,y,\lambda),\lambda)|\\
=&\ 1.
\end{split}
\eeq  Therefore $\mbox{Ind}\,\left(1+\frac{\partial R}{\partial \zeta}(\Phi^-(x,y,\lambda),\lambda) \right)$  is well defined for $x$, $y,\,\lambda\in\RR$. 

Furthermore, noting that the index function \cite{Ga66} is continuous and integer valued and using \eqref{E:index-zero-2}, it 
suffices to show 
\begin{equation}\label{E:index-s-y}
|\partial_\lambda\partial_\zeta R(\Phi^-(x,y,\lambda),\lambda)|_{L^1(\mathbb R,d\lambda)}\ll 1,\ \textit{for $y=0$, and $x\to \infty$} .
\end{equation}
To verify \eqref{E:index-s-y}, we first note 
\begin{equation}\label{E:dc}
\textit{$\partial_\lambda\partial_\zeta R(\Phi^-(x,y,\lambda),\lambda)$ are uniformly bounded by an $L^1(\mathbb R,d\lambda)$ function}.
\end{equation} from  \eqref{E:R-nonlinear}. Moreover, for arbitrary positive constant $M$, $|\lambda|<M$,   as $x\to\infty$, 
\begin{equation}\label{E:denominator}
\begin{split}
&\ |\partial_\lambda\partial_\zeta R(\Phi^-(x,0,\lambda),\lambda)|\\
\le&\ C|\partial_\lambda\partial_\xi\chi^-_I(\varphi(x,0,\lambda),\lambda)|\\
= &\ C|\partial_\lambda\partial_\xi\chi^-_I(x+\varphi_0(x,0,\lambda),\lambda)| \\
\to &\ 0
\end{split}
 \end{equation}by \eqref{E:h}, \eqref{E:phi}, Lemma \ref{L:NRH}, and Proposition \ref{P:shifted-RH-0-xi-lambda}.  
Therefore \eqref{E:index-s-y} is justified by \eqref{E:dc}, \eqref{E:denominator}, the Lebesque's Dominated Convergence Theorem, Proposition \ref{P:shifted-RH-0-xi-lambda}, and taking $M$ sufficiently large.
\end{proof}

The following lemma is from \cite{GSW} (cf. Proposition 3.5 and its proof therein) and will be used in Theorem \ref{T:inverse-problem-P-1-S}. 
\begin{lemma}\label{L:decay}
Suppose $v\in\mathfrak S$ and is compactly supported. Then for fixed $t>0$ and $\omega(x,y,t,\lambda)\in L^\infty(\RR,d\lambda)$,  as $\lambda\to\infty$,
\beq\label{E:decay}
|\partial_\xi^\mu\chi^-(\omega(x,y,t,\lambda)+x-\lambda y-\lambda^2 t,\lambda)|\le  \mathcal O\left(\frac{1}{1+|\lambda|^{2\mu+3-\frac 1p}}\right).
\eeq
\end{lemma}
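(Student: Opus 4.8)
The goal is to prove Lemma~\ref{L:decay}: for $v\in\mathfrak S$ compactly supported, fixed $t>0$, and $\omega(x,y,t,\lambda)\in L^\infty(\RR,d\lambda)$, one has
$|\partial_\xi^\mu\chi^-(\omega(x,y,t,\lambda)+x-\lambda y-\lambda^2 t,\lambda)|=\mathcal O\!\left(1/(1+|\lambda|^{2\mu+3-1/p})\right)$ as $\lambda\to\infty$. The plan is to reduce this to the already-established $\lambda$-asymptotics of $\chi^-$ together with a large-argument (large-$\xi$) decay estimate for $\chi^-(\xi,\lambda)$ that exploits the compact support of $v$ in $y$.

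First I would make the argument of $\chi^-$ explicit. Write $\zeta=\zeta(\lambda)=\omega(x,y,t,\lambda)+x-\lambda y-\lambda^2 t$. Since $\omega\in L^\infty(\RR,d\lambda)$ and $x,y,t$ are fixed with $t>0$, we have $|\zeta(\lambda)|\ge |\lambda|^2 t - C|\lambda| - C'\gtrsim t|\lambda|^2$ for $|\lambda|$ large; thus the argument $\zeta$ of $\chi^-$ grows quadratically in $\lambda$. Next I would obtain a decay estimate of $\chi^-(\xi,\lambda)$ (and its $\xi$-derivatives) in $\xi$: from Lemma~\ref{L:xi-decay}, for $v$ compactly supported in $y$, $|\partial_\xi\sigma(\xi,\lambda)|\le C_{\lambda,k}/(1+|\xi|^k)$, and this decay propagates through the solution formula \eqref{E:jacobian-c} for $1+\partial_\xi\chi^-$ and through the Fredholm representation \eqref{E:shifted-RH-fredholm}–\eqref{E:shifted-RH-fredholm-c-2} of $\chi^-$ itself (the kernel $f$ and the inhomogeneity $g$ inherit the $|\xi|^{-k}$ decay of $\sigma$), so that $|\partial_\xi^\mu\chi^-(\xi,\lambda)|\le C_{\lambda,k}/(1+|\xi|^k)$ for every $k$. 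This is essentially Proposition~3.5 of \cite{GSW}; I would cite that and adapt its proof to the non-compactly-supported-in-$x$ case using the $L^p$-Hilbert transform estimates from Section~\ref{S:forward-problem-2}.

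The main step is then to combine the two decays with the right bookkeeping in $\lambda$. One has two regimes for the constant $C_{\lambda,k}$ in the $\xi$-decay: tracking its $\lambda$-dependence through \eqref{E:sigma}, \eqref{E:h-xi}, and \eqref{E:h-mu-x} (using $v\in\mathfrak S$ so that $v_{xx}$ decays in its first argument, and $h(y'';\xi,-\infty,\lambda)+\lambda y''$ is comparable to $\xi+\lambda y''$) shows that the effective bound behaves like $|\partial_\xi^\mu\chi^-(\xi,\lambda)|\le C\,(1+|\lambda|)^{N}/(1+|\xi|^k)$ for suitable fixed $N$ and arbitrary $k$; meanwhile \eqref{E:nonlinear-scattering-data-asym-0-revised-1-plus} gives $|\partial_\xi^\mu\chi^-(\xi,\lambda)|\le C/(1+|\lambda|^{2+\mu-1/p})$ uniformly in $\xi$. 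Substituting $|\xi|\gtrsim t|\lambda|^2$ into the first estimate converts each power of $|\xi|^{-1}$ into $|\lambda|^{-2}$, so choosing $k$ large enough to beat $N$ and to produce the exponent $2\mu+3-1/p$ — concretely, using the product of the two bounds or just the first with $k$ chosen so that $2k-N\ge 2\mu+3-1/p$ — yields the claimed rate. I would also need to handle the chain rule: $\partial_\xi^\mu$ applied to $\chi^-(\zeta(\lambda),\lambda)$ with $\zeta$ itself depending on $\lambda$ is really a composition, but since the lemma differentiates in the $\xi$-slot only (the superscript $\mu$ on $\partial_\xi$ refers to the first argument of $\chi^-$), no extra $\lambda$-derivatives of $\omega$ enter and the estimate is a clean substitution.

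The hard part will be making the dependence of $C_{\lambda,k}$ on $\lambda$ genuinely uniform and polynomially bounded: one must check that the implicit-function-theorem change of variables behind \eqref{E:h-xi} and the Fredholm inversion in \eqref{E:shifted-RH-fredholm} do not generate constants that blow up faster than a fixed power of $|\lambda|$ as $|\lambda|\to\infty$ while simultaneously forcing $|\xi|$ large. Here I would lean on the $\hat\lambda=1/\lambda$, $\hat\xi=\xi/\lambda$ rescaling already used in the proof of Proposition~\ref{P:direct-sigma-asy-0}, under which $\sigma$ and hence $\chi^-$ become regular at $\hat\lambda=0$, so that the combined decay in $(\hat\xi,\hat\lambda)$ translates back into exactly the stated decay in $\lambda$ after substituting $|\xi|\sim |\lambda|^2 t$. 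Since this is precisely the mechanism of Proposition~3.5 in \cite{GSW}, the cleanest route is to state that the cited proof goes through verbatim once Proposition~\ref{P:shifted-RH-0-xi-lambda} and Lemma~\ref{L:xi-decay} are in hand, and to spell out only the substitution $|\xi|\gtrsim t|\lambda|^2$ that produces the exponent $2\mu+3-1/p$.
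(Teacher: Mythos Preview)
The paper does not give its own proof of this lemma; it simply attributes the result to \cite{GSW} (Proposition~3.5 there) and states it for later use. Your proposal is consistent with that: you correctly identify the two ingredients --- the uniform $\lambda$-decay of $\partial_\xi^\mu\chi^-$ from Proposition~\ref{P:shifted-RH-0-xi-lambda} and the $\xi$-decay inherited from the compact support of $v$ via Lemma~\ref{L:xi-decay} and the Fredholm representation --- and the mechanism that the substitution $|\xi|\gtrsim t|\lambda|^2$ converts $\xi$-decay into additional $\lambda$-decay, which is exactly how the argument in \cite{GSW} proceeds. Your closing remark, that one should simply invoke \cite{GSW} once Proposition~\ref{P:shifted-RH-0-xi-lambda} and Lemma~\ref{L:xi-decay} are available, matches the paper's own treatment verbatim.

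One small point worth tightening if you flesh this out: your ``choose $k$ large enough'' argument would in fact yield \emph{any} polynomial decay rate in $\lambda$, which is stronger than the stated $\mathcal O(|\lambda|^{-(2\mu+3-1/p)})$; the specific exponent in the lemma reflects the particular combination of the baseline rate $2+\mu-1/p$ from \eqref{E:nonlinear-scattering-data-asym-0-revised-1-plus} with the $\hat\xi=\xi/\lambda$ rescaling of Proposition~\ref{P:direct-sigma-asy-0} evaluated at $\hat\xi\sim\lambda t$, rather than an optimized choice of $k$. This does not affect correctness, since the lemma only asserts an upper bound.
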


%%%%%%%%%%%%%%%%%%%%%%%%%%%%%%%%%%%%%%%%%%%%%%%%%%%%%%%%%%%%%%%%%%%%%%%%%%%%%%%%%%%%%%%%%
\section{The inverse problem I: the nonlinear Riemann-Hilbert problem}\label{S:NRH}
%%%%%%%%%%%%%%%%%%%%%%%%%%%%%%%%%%%%%%%%%%%%%%%%%%%%%%%%%%%%%%%%%%%%%%%%%%%%%%%%%%%%%%%%%%%%%%%%%%%%%%%%%%%%%%%%%%%%%%%%%%%%%%%%%%%%%%%%%%%%%%%%%%%%%%%%%%%%%%%%%%%%%%%
We turn to the inverse problem of the Pavlov equation. %We first try to construct a complex eigenfunction via the Riemann-Hilbert problem for a given spectral data $\chi(\xi,\lambda)$. 
Recall that for $t=0$, we have
\begin{equation}\label{E:Manakov-Santini-0}
\begin{split}
\Phi^+(x,y,\lambda)=&\ \Phi^-(x,y,\lambda)+R(\Phi^-(x,y,\lambda),\lambda) \\
\equiv &\ \mathcal R(\Phi^-(x,y,\lambda),\lambda).
\end{split}
\end{equation} Our goal is to solve the nonlinear Riemann-Hilbert problem
\begin{equation}\label{E:Manakov-Santini-NRH}
\begin{array}{cl}
\Psi^+(x,y,t,\lambda)%=\Psi^-(x,y,t,\lambda)+R(\Psi^-(x,y,t,\lambda),\lambda) \\
=\mathcal R(\Psi^-(x,y,t,\lambda),\lambda),&\quad \lambda\in\RR, \ t>0,\\
\partial_{\bar\lambda}\Psi(x,y,t,\lambda)= 0,&\quad \lambda\in\CC^\pm,\ t>0,\\
\Psi(x,y,0,\lambda)=\Phi(x,y,\lambda).&
\end{array}
\end{equation}
To tackle the nonlinear Riemann-Hilbert problem, we will adopt a nonlinear Newtonian iteration approach \cite{C06}.  More precisely,  first
normalize 
\begin{equation}\label{E:renormalization}
\Psi(x,y,t,\lambda)=\omega(x,y,z,t)+x-\lambda y-\lambda^2t,
\end{equation} and let
\beq
\begin{split}
&{\mathcal R}(\omega^-+x-\lambda y-\lambda^2t,\lambda)- (\omega^++x-\lambda y-\lambda^2 t)\\
=&R(\omega^-+x-\lambda y-\lambda^2t,\lambda)+\omega^-- \omega^+\\
=\ \,&{\mathcal G}( \omega^+,\omega^-)
.
\end{split}\label{E:FP-ite}
\eeq
Define $\omega_n^\pm(x,y,t,\lambda)$ recursively by
\begin{equation}\label{E:recursive-formula}
{ \textbf{G}}_n\delta\Omega_{n+1}={ \textbf{G}}_n\delta\Omega_n-\mathcal G(\Omega_n),\quad\textit{for $n\ge 0$, }
\end{equation}
with
\bea
&&\Omega_{n}=(\omega_n^+,\omega_n^-),\label{E:recursive-small-ome-1}\\
&&\omega_0^\pm(x,y,t,\lambda)\equiv\Phi^\pm(x,y,\lambda)-(x-\lambda y),\label{E:recursive-small-ome-2}\\
%&&\delta\omega_n=\omega_n-\omega_0,\label{E:recursive-small-ome-3}\\
&&\delta\Omega_n=\Omega_n-\Omega_{0}=(\delta\omega^+_n,\delta\omega^-_n).\label{E:recursive-small-ome-4}
\eea
and $\textbf{G}_n$ is the linearization of $\mathcal G$ at $\Omega_n$, i.e.,
\begin{equation}\label{E:linearization-G}
\begin{split}
&\textbf{G}_n\delta\Omega\\
=&-\delta\omega^++\frac{\partial\mathcal R}{\partial\zeta}|_{\omega^-_n+x-\lambda y-\lambda^2 t}\delta \omega^-\\
=&-\delta\omega^++\delta \omega^-+\frac{\partial R}{\partial\zeta}(\omega_n^-+x-\lambda y-\lambda^2 t,\lambda)\delta \omega^-\\
\equiv&-\delta\omega^++\mathcal J_{n}\delta \omega^-.
\end{split}
\eeq

Thus  \eqref{E:recursive-formula} can be written as the non homogeneous Riemann-Hilbert problem
\begin{equation}\label{E:linearization-NRH}
\begin{split}
\delta\omega_{n+1}^+
=&\ \  \mathcal J_n\ \delta\omega_{n+1}^--\left({\textbf G}_n\delta\Omega_n-\mathcal G(\Omega_n)\right).
\end{split}
\end{equation}

Owing to Proposition \ref{P:NRH-problem-ext}, we will justify 
\beq\label{E:assumption}
\delta\omega_n^-\le \delta_0\ \ \ \textit{for }\ \ 
0\le t\le \epsilon_1\le\epsilon_0
\eeq to make the above algorithm eligible. Hence  from (\ref{E:FP-ite})-(\ref{E:linearization-G}),  it is easy to see that if  $\{\Omega_n\}$ converge to $\Omega$ in $ H^p(\RR,d\lambda)$, then  ${\mathcal G}(\Omega)$ is well-defined and ${\mathcal G}(\Omega)=0$. Combining with the Cauchy integral formula, consequently, we will obtain a solution for the nonlinear Riemann-Hilbert problem (\ref{E:Manakov-Santini-NRH}). %Note the convergence means the whole sequence, not only a subsequence, converges. 

\begin{lemma} {\rm{\bf(Estimate of ${\textrm{\bf  G}}_{n}^{-1}$)}} \label{L:gn-inv-S} Suppose $p>1$ and $v\in\mathfrak S$ is compactly supported in $y$. If 
\eqref{E:assumption} is valid, $ g(\lambda)\in H^p(\RR,d\lambda)$, then the non homogeneous Riemann-Hilbert problem
\beq\label{E:linearize-RH-S}
\begin{array}{cl}
 f^+=\mathcal J_n f^-+ g,&\quad\lambda\in\RR,\\
 \partial_{\bar\lambda}f=0,&\quad\lambda\in\CC^\pm.
 \end{array}
\eeq
admits a unique solution $ f$ such that 
\begin{equation}\label{E:gn-inv-1-S} 
| f^\pm|_{H^p(\RR,d\lambda)}  
\le C_{\omega_0}| g |_{H^p(\RR,d\lambda)} .
\end{equation}Here  $C_{\omega_0}$ is a constant determined by  $|\omega_0^\pm|_{H^p(\mathbb R,d\lambda)}$.
\end{lemma}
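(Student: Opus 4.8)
The plan is to solve the non-homogeneous Riemann--Hilbert problem \eqref{E:linearize-RH-S} explicitly via the classical factorization method and then read off the estimate \eqref{E:gn-inv-1-S}. First I would set $a_n(\lambda)=\mathcal J_n(\lambda)=1+\partial_\zeta R(\omega_n^-+x-\lambda y-\lambda^2 t,\lambda)$ and observe that, thanks to \eqref{E:assumption} (so that the argument of $R$ lies in $\widetilde{\mathfrak D}(\lambda)$ by Proposition \ref{P:NRH-problem-ext}) together with the identity $|1+\partial_\zeta R(\Phi^-,\lambda)|=1$ from \eqref{E:index-zero-2} and the decay \eqref{E:R-nonlinear}, the coefficient $a_n$ is a continuous, nowhere-vanishing function on $\RR$ with $a_n(\lambda)\to 1$ as $|\lambda|\to\infty$ and with a well-defined continuous logarithm. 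The key structural input is that its winding number is zero: this follows from Lemma \ref{L:compact-y} (the index computation) after noting that the index is stable under the small perturbation $\omega_n^-\mapsto\Phi^--x+\lambda y$ controlled by \eqref{E:assumption} and the continuity/integer-valuedness of the index functional \cite{Ga66}. Thus $\log a_n\in L^p$-type class and $\mathrm{Ind}\,a_n=0$.

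Next I would perform the Gakhov factorization: set
\[
X_n(\lambda)=\exp\!\left(\frac{1}{2\pi i}\int_\RR\frac{\log a_n(t)}{t-\lambda}\,dt\right),\qquad \lambda\in\CC^\pm,
\]
so that $X_n^\pm$ are nonvanishing, bounded above and below uniformly (again by \eqref{E:index-zero-2}, \eqref{E:R-nonlinear}, and boundedness of the Hilbert transform on $L^p$, $p>1$), holomorphic in $\CC^\pm$, tend to $1$ at infinity, and satisfy $a_n=X_n^+/X_n^-$ on $\RR$. Dividing \eqref{E:linearize-RH-S} by $X_n^+$ turns it into the additive jump problem $f^+/X_n^+-f^-/X_n^-=g/X_n^+$, whose unique solution vanishing at infinity is the Cauchy integral
\[
f(\lambda)=\frac{X_n(\lambda)}{2\pi i}\int_\RR\frac{g(t)}{X_n^+(t)}\,\frac{dt}{t-\lambda},\qquad \lambda\in\CC^\pm,
\]
with boundary values $f^\pm$ given by the Plemelj formula. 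Uniqueness follows because any difference of two solutions, divided by $X_n$, is an entire function vanishing at infinity, hence zero by Liouville.

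For the estimate \eqref{E:gn-inv-1-S} I would use: boundedness of $X_n^\pm$ and $1/X_n^\pm$ in $L^\infty$; boundedness of the Cauchy/Hilbert transform on $L^p(\RR,d\lambda)$ for $p>1$; and, to control the $\partial_\lambda$-derivative, differentiate the explicit formula, using that $\partial_\lambda\log a_n$ lies in the appropriate $L^p$ class (from \eqref{E:R-nonlinear}, the chain rule, and the $H^p$-bound on $\omega_n^-$ which by \eqref{E:assumption} and \eqref{E:recursive-small-ome-2} is controlled by $|\omega_0^\pm|_{H^p}$) and that differentiating the Cauchy integral of $g$ reproduces a Cauchy integral of $\partial_t g$ plus boundary terms handled by Sobolev embedding $H^p\hookrightarrow L^\infty$. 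Collecting these bounds gives $|f^\pm|_{H^p}\le C_{\omega_0}|g|_{H^p}$ with $C_{\omega_0}$ depending only on $|\omega_0^\pm|_{H^p(\RR,d\lambda)}$, as claimed.

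The main obstacle I anticipate is bookkeeping the dependence of all constants solely on $|\omega_0^\pm|_{H^p}$ and not on $n$: this requires that the index-zero property and the two-sided bounds on $X_n^\pm$ be uniform in $n$, $x$, $y$, and $t\le\epsilon_1$, which in turn rests on the uniform smallness \eqref{E:assumption} keeping $\omega_n^-+x-\lambda y-\lambda^2t$ inside $\widetilde{\mathfrak D}(\lambda)$ (Proposition \ref{P:NRH-problem-ext}) together with the uniform estimate \eqref{E:R-nonlinear} on $R$ over $\widetilde{\mathfrak D}$. A secondary technical point is justifying that $\log a_n$ and $\partial_\lambda\log a_n$ have enough decay/integrability at $|\lambda|\to\infty$ for the singular-integral operators to act on $L^p$; this is exactly what the $1+|\lambda|^{-(2+\mu+\nu-1/p)}$ decay in \eqref{E:R-nonlinear} and in \eqref{eq:L1_1-sym-estimate} is designed to provide.
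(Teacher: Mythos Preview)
Your proposal is correct and follows essentially the same approach as the paper: verify $\mathrm{Ind}\,\mathcal J_n=0$ by reducing to Lemma~\ref{L:compact-y} via the smallness \eqref{E:assumption}, construct the Gakhov factorization $X^+=\mathcal J_n X^-$ (the paper's \eqref{E:non-homo}), and write the solution as $f=X\cdot\frac{1}{2\pi i}\int_\RR\frac{g/X^+}{\zeta-\lambda}\,d\zeta$ (the paper's \eqref{E:solution-diff}), from which the $H^p$ estimate follows by boundedness of the Hilbert transform and the two-sided bounds on $X^\pm$. The paper's proof is terser, but your added detail on the uniform-in-$n$ constants and the $\partial_\lambda$-bookkeeping is exactly what is implicitly being used.
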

\begin{proof} Using \eqref{E:index-zero-2} and \eqref{E:assumption} (shrink $\hbar$ if necessary) and adapting the proof of Lemma \ref{L:compact-y},  one can prove $\textit{Ind}\,\mathcal J_n$ is well defined and $\textit{Ind}\,\mathcal J_n=0$.  Hence one has the unique solvability of the Riemann-Hilbert problem 
{\begin{equation}\label{E:non-homo}
\begin{array}{cl}
X^+(\lambda)=\mathcal J_nX^-(\lambda),&\lambda\in\RR,\\
\partial_{\overline\lambda}X(\lambda)=0, &\lambda\in\CC^\pm,\\
\textit{$X^\pm\to 1$},&|\lambda|\to\infty,\\
{}\\
\partial_\lambda X^\pm,\,\partial_\lambda {X^\pm}^{-1}\in H^p(\RR,d\lambda),\\
X^\pm,\,{X^\pm}^{-1}\in L^\infty,
\end{array}
\end{equation}}
and of the non homogeneous Riemann-Hilbert problem \eqref{E:linearize-RH-S}  \cite{Ga66}
\beq\label{E:solution-diff} 
f=X\left(\frac{1}{2\pi i}\int_{\RR}\frac{\frac g{X^+}}{\zeta-\lambda}d\zeta\right).
\eeq
So \eqref{E:gn-inv-1-S} follows.
\end{proof}

%We remark that the $\partial_\xi^2$-estimates in Theorem \ref{T:scattering-data-revised} or Theorem \ref{L:shifted-RH-0-xi-lambda} is necessary for deriving (\ref{E:second}). Thus we obtain local solvability for the nonlinear Riemann-Hilbert problem with small initial data.

\begin{lemma} {\bf (Small time estimate and regularities)}\label{L:local-small} 
If $p>2$ and $v\in\mathfrak S$ is compactly supported in $y$, then
\begin{equation}\label{E:basic-small}
\begin{split}
&|\mathcal G(\omega_0^\pm)|_{H^p}<C_{\omega_0}(1+|y|)t, \ \textit{for $0\le t\le \epsilon_0$}.%,\\
%&\rho(0)=0,\textit{ $\rho(t)$ is uniformly continous at $0$}.
\end{split}
\end{equation}
Moreover, if \eqref{E:assumption} is valid,
 then
\begin{gather}
|\partial_\zeta^\mu{\mathcal R}(\omega^-+x-\lambda y-\lambda^2 t,\lambda)|_{H^p(\RR,d\lambda)}<C_{\omega_0}(1+|y|), \ \textit{ $0\le t\le \epsilon_1$}.\label{E:basic-linear}
\end{gather}Here $C_{\omega_0}$ is determined by $|\omega_0^\pm(x,y,\lambda)|_{H^p(\RR,d\lambda)}$.
\end{lemma}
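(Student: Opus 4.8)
The plan is to prove the two estimates \eqref{E:basic-small} and \eqref{E:basic-linear} by reducing everything to the already–established pointwise and $L^p$ bounds on $R$ (the analytical constraint \eqref{E:R-nonlinear}) and the regularity of $\omega_0^\pm=\Phi^\pm-(x-\lambda y)$ coming from Theorem \ref{T:existence-cpx-pavlov}, combined with the inclusion $\omega^-+x-\lambda y-\lambda^2 t\in\widetilde{\mathfrak D}(\lambda)$ furnished by Proposition \ref{P:NRH-problem-ext} under the hypothesis \eqref{E:assumption}. For \eqref{E:basic-small}, I would write $\mathcal G(\omega_0^\pm)=R(\omega_0^-+x-\lambda y-\lambda^2 t,\lambda)+\omega_0^--\omega_0^+$ from \eqref{E:FP-ite}, and then use the $t=0$ identity $\omega_0^+-\omega_0^-=R(\omega_0^-+x-\lambda y,\lambda)$ (that is, \eqref{E:Manakov-Santini-0}/\eqref{E:NRH-dp}) to rewrite $\mathcal G(\omega_0^\pm)=R(\omega_0^-+x-\lambda y-\lambda^2 t,\lambda)-R(\omega_0^-+x-\lambda y,\lambda)$. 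The right side is then a difference of values of the holomorphic function $R(\cdot,\lambda)$ at two points of $\widetilde{\mathfrak D}(\lambda)$ separated by $\lambda^2 t$; by the mean value theorem (or the fundamental theorem of calculus along the segment joining them, which stays in $\widetilde{\mathfrak D}$ for $t\le\epsilon_0$) this is bounded by $\lambda^2 t\,\sup_{\zeta}|\partial_\zeta R(\zeta,\lambda)|$. Invoking \eqref{E:R-nonlinear} with $\mu=1$ gives a pointwise bound $\le C t\,\lambda^2/(1+|\lambda|^{3-1/p})\le Ct/(1+|\lambda|^{1-1/p})$, which is in $L^p(d\lambda)$ once $p>2$; the $\partial_\lambda$–derivative is handled the same way using $\mu+\nu\le 2$ in \eqref{E:R-nonlinear} together with $\partial_\lambda$ falling on the explicit factor $-\lambda^2 t$ and on $\omega_0^-$ (whose $\lambda$–derivatives are controlled by $|\omega_0^\pm|_{H^p}$ via Theorem \ref{T:existence-cpx-pavlov}), producing the extra factor $(1+|y|)$ from differentiating $x-\lambda y$. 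This yields $|\mathcal G(\omega_0^\pm)|_{H^p}\le C_{\omega_0}(1+|y|)t$.

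For \eqref{E:basic-linear}, I would expand $\partial_\zeta^\mu\mathcal R(\omega^-+x-\lambda y-\lambda^2 t,\lambda)=\partial_\zeta^\mu\zeta|_{\zeta=\omega^-+\cdots}+\partial_\zeta^\mu R(\omega^-+x-\lambda y-\lambda^2 t,\lambda)$, so the $\mu=0$ term contributes $\omega^-+x-\lambda y-\lambda^2 t$, whose $H^p(\RR,d\lambda)$ norm is controlled by $|\delta\omega^-|_{H^p}+|\omega_0^-|_{H^p}+$ (the $x-\lambda y-\lambda^2 t$ part, which after differentiating in $\lambda$ gives $(1+|y|+t)$-type contributions, bounded by $C_{\omega_0}(1+|y|)$ for $t\le\epsilon_1\le\epsilon_0$), using \eqref{E:assumption} to absorb $|\delta\omega^-|_{H^p}\le\delta_0$. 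For the $R$–term I apply the chain rule: $\partial_\lambda$ and $\partial_\zeta$ acting on $R(\omega^-+x-\lambda y-\lambda^2 t,\lambda)$ bring down at most factors of $\partial_\lambda(\omega^-+x-\lambda y-\lambda^2 t)=\partial_\lambda\delta\omega^-+\partial_\lambda\omega_0^--y-2\lambda t$ and the explicit $\partial_\lambda R$, each of which is estimated by \eqref{E:R-nonlinear} (giving decay $|\lambda|^{-(2+\mu+\nu-1/p)}$, more than enough to lie in $L^p$) times the $H^p$–controlled quantities, and the worst growth comes from the $-y$ term, yielding the $(1+|y|)$ factor. Keeping track of which arguments are in $\widetilde{\mathfrak D}(\lambda)$ is exactly what Proposition \ref{P:NRH-problem-ext} guarantees under \eqref{E:assumption}.

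The main obstacle I anticipate is the bookkeeping of the $\lambda$–derivatives under the chain rule while maintaining $H^p$ (not merely $L^\infty$) control: one must check that every product of a ``bad'' polynomially–growing factor (coming from $\partial_\lambda(x-\lambda y-\lambda^2 t)=-y-2\lambda t$) against the decaying $\partial^\nu_\lambda\partial^\mu_\zeta R$ factor from \eqref{E:R-nonlinear} still produces an $L^p(\RR,d\lambda)$ function, and that the $\lambda$–derivatives of $\delta\omega^-$ and $\omega_0^-$ enter only linearly (so that $|\delta\omega^-|_{H^p}\le\delta_0$ and $|\omega_0^-|_{H^p}<\infty$ suffice), never as a product of two such derivatives at the top order — this is where the choice $p>2$ is used, since it makes the slowest-decaying surviving term, of order $|\lambda|^{-(1-1/p)}$, square-summable-enough to be in $L^p$. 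A secondary point is verifying that the segment joining $\omega_0^-+x-\lambda y$ to $\omega_0^-+x-\lambda y-\lambda^2 t$ remains inside the holomorphy domain $\widetilde{\mathfrak D}(\lambda)$ for all $0\le t\le\epsilon_0$; this follows by the same argument as in Proposition \ref{P:NRH-problem-ext}, using \eqref{E:pro} for large $|\lambda|$ and the $\epsilon_0=\hbar/(2N^2)$ choice for bounded $|\lambda|$, so I would simply cite that proposition.
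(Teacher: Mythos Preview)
Your approach is essentially the paper's: both rewrite $\mathcal G(\omega_0^\pm)=R(\zeta_t,\lambda)-R(\zeta_0,\lambda)$ with $\zeta_t=\omega_0^-+x-\lambda y-\lambda^2 t$ via \eqref{E:NRH-dp}, apply the mean value theorem in $\zeta$ together with the decay \eqref{E:R-nonlinear}, and handle the $\partial_\lambda$-part by the chain rule (the factor $(1+|y|)$ arising from $\partial_\lambda(x-\lambda y)=-y$, and $p>2$ making $|\lambda|^{-(1-1/p)}\in L^p(d\lambda)$); the second estimate \eqref{E:basic-linear} is likewise obtained in the paper by the direct chain-rule computation you outline. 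One small caveat: your handling of the $\mu=0$ case of \eqref{E:basic-linear} is off---$x-\lambda y-\lambda^2 t$ is not in $L^p(\RR,d\lambda)$---but this is an artifact of the statement itself (the paper's own proof bounds only the $\partial_\zeta^\mu R$ part, so \eqref{E:basic-linear} is to be read with $\mu\ge 1$, or with $\mathcal R$ replaced by $R$).
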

\begin{proof} 

 Formula (\ref{E:Manakov-Santini-0}) and (\ref{E:FP-ite}) imply 
\begin{equation}\label{E:initial}
\mathcal R(\omega_0^-+x-\lambda y,\lambda)-(\omega_0^++x-\lambda y)=0.
\end{equation}
Hence
	\begin{eqnarray}
	&&|\mathcal G(\omega_0^\pm)|_{H^p}\nonumber\\
	&=&|\mathcal R(\omega_0^-+x-\lambda y-\lambda^2 t,\lambda)-(\omega_0^++x-\lambda y-\lambda^2 t)|_{H^p}\nonumber\\
	&\le &|\mathcal R(\omega_0^-+x-\lambda y,\lambda)-(\omega_0^++x-\lambda y)|_{H^p}\nonumber\\
	&&+|\mathcal R(\omega_0^-+x-\lambda y-\lambda^2 t,\lambda)-\mathcal R(\omega_0^-+x-\lambda y,\lambda)+\lambda^2 t|_{H^p}\nonumber\\
	&= &| R(\omega_0^-+x-\lambda y-\lambda^2 t,\lambda)- R(\omega_0^-+x-\lambda y,\lambda)|_{H^p}\nonumber%\\
	%&= &|\mathcal R(\varphi(x,y,\lambda),\lambda)-\mathcal R(\varphi(\tilde x-\lambda^2 t,y,\lambda),\lambda)|_{\tilde V}\nonumber\\
	%&\le& C(|\int_{\mathbb R}\frac{\Omega(\xi_0,\zeta)-\Omega(\xi_t,\zeta)}{\zeta-\lambda}d\zeta|_{\tilde V}+|\chi^-_{R}(\xi_0,\lambda)-\chi^-_R(\xi_t,\lambda)|_{\tilde V})\nonumber\\
	%&\le& C(|\frac{\partial R}{\partial\zeta}\lambda^2 t|_{\tilde V}\nonumber\\
		%&\le& C(|\chi^-(\xi_0,\lambda)-\chi^-(\xi_t,\lambda)|_{\tilde V}
		.\label{E:g-0-estimate}
	\end{eqnarray} 
	Let
	 $\zeta_t=\omega_0^-+x-\lambda y-\lambda^2 t$. 
	Therefore, to prove \eqref{E:basic-small}, it yields to showing
	\bea
	|R(\zeta_0,\lambda)-R(\zeta_t,\lambda)|_{L^p} &\le & C(1+|y|)t,\label{E:small-1}\\
	|\partial_\lambda R(\zeta_0,\lambda)-\partial_\lambda R(\zeta_t,\lambda)|_{L^p}
	&\le & C(1+|y|)t.\label{E:small-2}
  \eea

Theorem \ref{P:NRH-problem} and the mean value theorem (or inequality) imply 
  \bea
  &&|\partial_\lambda R(\zeta_0,\lambda)-\partial_\lambda R(\zeta_t,\lambda)|_{L^p}\nonumber\\
&\le & | \partial_\zeta R(\zeta,\lambda)|_{\zeta=\xi_0}\frac{\partial \zeta_0}{\partial\lambda}-\partial_\zeta R(\zeta,\lambda)|_{\zeta=\zeta_t}\frac{\partial \zeta_t}{\partial\lambda}|_{L^p}\nonumber\\
&&+|\partial_\lambda R(\zeta,\lambda)|_{\zeta=\xi_0}-\partial_\lambda R(\zeta,\lambda)|_{\zeta=\zeta_t}|_{L^p}\nonumber\\	
&\le & | \partial_\zeta R(\zeta,\lambda)|_{\zeta=\xi_0}\frac{\partial \zeta_0}{\partial\lambda}-\partial_\zeta R(\zeta,\lambda)|_{\zeta=\xi_0}\frac{\partial \zeta_t}{\partial\lambda}|_{L^p}\nonumber\\
&&+|\partial_\zeta R(\zeta,\lambda)|_{\zeta=\xi_0}\frac{\partial \zeta_t}{\partial\lambda}-\partial_\zeta R(\zeta,\lambda)|_{\zeta=\zeta_t}\frac{\partial \zeta_t}{\partial\lambda}|_{L^p}\nonumber\\
&&+|\partial_\lambda R(\zeta,\lambda)|_{\zeta=\xi_0}-\partial_\lambda R(\zeta,\lambda)|_{\zeta=\zeta_t}|_{L^p}\nonumber\\
&\le & | \partial_\zeta R(\zeta,\lambda)|_{\zeta=\xi_0}2\lambda t|_{L^p}+C_{\omega_0}|\sup_{0\le t'\le 1}\partial_\zeta^2 R(\zeta_{t'},\lambda)t\lambda^2(1+|y|+2|\lambda| t)|_{L^p}\nonumber\\
&&+|\sup_{0\le t'\le 1}\partial_\lambda\partial_\zeta R(\zeta_{t'},\lambda)\lambda^2 t|_{L^p}\nonumber\\
&\le & C_{\omega_0}t\left|\frac{1+|y|}{1+|\lambda|^{1-\frac 1p}}\right|_{L^p}\nonumber\\
&\le& C_{\omega_0}(1+|y|)t,\label{E:small-2-1}
\eea
where the constant $C_{\omega_0}$ is determined by $|\omega_0(x,y,\cdot)|_{H^p}$. Note we have used  \eqref{E:assumption} valid for $\omega_0$) and $p>2$ in the above derivation. Thus (\ref{E:small-2})  is proved. Estimate \eqref{E:small-1} can be proved similarly.

 Similarly, \eqref{E:basic-linear} can be derived by
	\begin{eqnarray}
	&&|\partial_\zeta^\mu{\mathcal R}(\omega^-+x-\lambda y-\lambda^2 t,\lambda)|_{H^p}\nonumber\\
&\le & C_{\omega_0}(\sum_{k=0}^1|\partial_\zeta^{\mu+k} R(\omega^-+x-\lambda y-\lambda^2 t,\lambda)(1+|y|+|\lambda|t)^k|_{L^p}\nonumber\\
&&+|\partial_\zeta^\mu \partial_\lambda R(\omega^-+x-\lambda y-\lambda^2 t,\lambda)|_{L^p})\nonumber\\
	&\le& C_{\omega_0}\left|\frac {1+|y|+t}{1+|\lambda|^{2+\mu-\frac 1p}}\right|_{L^p}\nonumber\\
	&\le& C_{\omega_0}(1+|y|+t).\nonumber
	\end{eqnarray}

\end{proof}

\begin{lemma} \label{L:quadratic} {\rm{\bf(Uniform boundedness)} }
 Suppose  $p>2$ and $v\in\mathfrak S$ is compactly supported in $y$. There exist  uniform constants $\epsilon_1=\epsilon_1(y,p,\omega_0^\pm)\le \epsilon_0$ and $ \delta_1=\delta_1(y,p, \omega_0^\pm)\le \delta_0$ such that 
\begin{equation}\label{E:uniform}\begin{split}
&\hskip.5in|\delta \Omega_n|_{H^p}=|\delta \omega_n^+|_{H^p}+|\delta \omega_n^-|_{H^p}\le\delta_1,\quad \ \forall n,\quad\textit{for  $0\le t\le \epsilon_1$. }
\end{split}
\end{equation}
\end{lemma}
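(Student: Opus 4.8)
The plan is to run a Newton-type induction on $n$. By Lemma~\ref{L:gn-inv-S}, solving the non homogeneous Riemann-Hilbert problem \eqref{E:linearization-NRH} for $\delta\Omega_{n+1}$ costs only the uniform constant $C_{\omega_0}$, so
\[
|\delta\Omega_{n+1}|_{H^p}\ \le\ C_{\omega_0}\,\bigl|\mathbf G_n\delta\Omega_n-\mathcal G(\Omega_n)\bigr|_{H^p},
\]
and it suffices to show that the Newton residual $\mathbf G_n\delta\Omega_n-\mathcal G(\Omega_n)$ is quadratically small in $\delta\Omega_n$ up to the $O\bigl((1+|y|)t\bigr)$ defect produced by the dispersion term. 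The base case $n=0$ is immediate from $\delta\Omega_0=\Omega_0-\Omega_0=0$. For the inductive step, assume $|\delta\Omega_n|_{H^p}\le\delta_1$ with $\delta_1\le\delta_0$; then every convex combination $(1-s)\omega_n^-+s\,\omega_0^-$, $0\le s\le 1$, differs from $\omega_0^-=\Phi^--x+\lambda y$ by at most $\delta_1\le\delta_0$ in $H^p$, so \eqref{E:assumption} and hence, by Proposition~\ref{P:NRH-problem-ext}, membership in $\widetilde{\mathfrak D}(\lambda)$ hold along the whole segment, and Lemma~\ref{L:gn-inv-S} applies at step $n$.

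To bound the residual, recall from \eqref{E:FP-ite} that $\mathcal G$ is affine in $\omega^+$ and depends on $\omega^-$ only through $R(\omega^-+x-\lambda y-\lambda^2 t,\lambda)$, so $\mathbf G_n=D\mathcal G(\Omega_n)$ and the second variation of $\mathcal G$ at any admissible state is multiplication by $\partial_\zeta^2 R(\omega^-+x-\lambda y-\lambda^2 t,\lambda)$ in the $\omega^-$-slot. A second-order Taylor expansion of $\mathcal G$ from $\Omega_n$ to $\Omega_0$ then gives
\[
\mathbf G_n\delta\Omega_n-\mathcal G(\Omega_n)
= -\,\mathcal G(\Omega_0)+\int_0^1 (1-s)\,\partial_\zeta^2 R\bigl((1-s)\omega_n^-+s\,\omega_0^-+x-\lambda y-\lambda^2 t,\lambda\bigr)\,(\delta\omega_n^-)^2\,ds .
\]
The first term is controlled by \eqref{E:basic-small}: $|\mathcal G(\Omega_0)|_{H^p}=|\mathcal G(\omega_0^\pm)|_{H^p}\le C_{\omega_0}(1+|y|)t$ for $0\le t\le\epsilon_0$. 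For the integral, \eqref{E:basic-linear} with $\mu=2$ gives $|\partial_\zeta^2 R((1-s)\omega_n^-+s\,\omega_0^-+x-\lambda y-\lambda^2 t,\lambda)|_{H^p}\le C_{\omega_0}(1+|y|)$ uniformly in $s$, and the Banach-algebra property of $H^p(\RR,d\lambda)$ for $p>2$ (which follows from $W^{1,p}(\RR)\hookrightarrow L^\infty$) then yields $\int_0^1(1-s)\,|\partial_\zeta^2 R(\cdots)(\delta\omega_n^-)^2|_{H^p}\,ds\le C_{\omega_0}(1+|y|)\,|\delta\Omega_n|_{H^p}^2$. Combining,
\[
|\delta\Omega_{n+1}|_{H^p}\ \le\ A\,(1+|y|)\bigl(\,t+|\delta\Omega_n|_{H^p}^2\,\bigr),\qquad A=A(p,\omega_0^\pm).
\]

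It remains to close the induction by a standard choice of the two parameters. First choose $\delta_1=\delta_1(y,p,\omega_0^\pm)\le\delta_0$ so small that $A(1+|y|)\delta_1\le\tfrac12$; then choose $\epsilon_1=\epsilon_1(y,p,\omega_0^\pm)\le\epsilon_0$ so small that $A(1+|y|)\epsilon_1\le\tfrac12\delta_1$. Under the inductive hypothesis $|\delta\Omega_n|_{H^p}\le\delta_1$ and for $0\le t\le\epsilon_1$ the last display gives $|\delta\Omega_{n+1}|_{H^p}\le A(1+|y|)\epsilon_1+A(1+|y|)\delta_1^2\le\tfrac12\delta_1+\tfrac12\delta_1=\delta_1$, which completes the induction and proves \eqref{E:uniform}. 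The main obstacle is precisely the quadratic bound on the residual: one must keep the entire Taylor segment inside the holomorphy domain $\widetilde{\mathfrak D}(\lambda)$ (this is exactly what Proposition~\ref{P:NRH-problem-ext} provides once $\delta_1\le\delta_0$), secure the uniform $H^p$-bound on $\partial_\zeta^2 R$ along that segment from \eqref{E:basic-linear}, and use that the inversion constant $C_{\omega_0}$ in Lemma~\ref{L:gn-inv-S} does not depend on $n$; the dispersion term is harmless here, entering only through the additive defect $\mathcal G(\Omega_0)=O((1+|y|)t)$, which is what forces $\epsilon_1$ to be small and ultimately confines the existence statement to short time.
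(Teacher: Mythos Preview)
Your proof is correct and follows essentially the same approach as the paper: invert the linearized Riemann--Hilbert problem via Lemma~\ref{L:gn-inv-S}, split the Newton residual as $-\mathcal G(\Omega_0)$ plus a quadratic remainder in $\delta\Omega_n$, bound $\mathcal G(\Omega_0)$ by Lemma~\ref{L:local-small}, and close the induction by choosing $\delta_1,\epsilon_1$ appropriately. The only cosmetic differences are that you write the quadratic remainder via the second-order Taylor integral formula and control the product $\partial_\zeta^2 R\cdot(\delta\omega_n^-)^2$ through the Banach-algebra structure of $H^p$, whereas the paper uses a first-order mean value identity and bounds $\partial_\zeta^2\mathcal R$ in $L^\infty$ over $\widetilde{\mathfrak D}$; both routes yield the same estimate $|\delta\Omega_{n+1}|_{H^p}\le C_\ast(t+\delta_1^2)$ with $C_\ast=C_{\omega_0}(1+|y|)$.
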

\begin{proof} If
\begin{equation}\label{E:bounded-assumption}
t\le\epsilon_1\le \epsilon_0,\quad|\delta\Omega_n|_{H^p}\le \delta_1\le \delta_0,
\end{equation}then   Lemma \ref{L:gn-inv-S},  \ref{L:local-small}, Theorem \ref{P:NRH-problem}, and the mean value inequality imply
\begin{eqnarray}
&&|\delta \Omega_{n+1}|_{H^p}\nonumber\\
&\le &  C_{\omega_0}|- {\mathcal G}(\Omega_n^\pm)+{\textrm{\bf  G}}_{n}\delta \Omega_n |_{H^p}\nonumber\\
&=& C_{\omega_0}|-\mathcal G(\Omega_0)+ \mathcal G(\Omega_0) - {\mathcal G}(\Omega_n)+{\textrm{\bf  G}}_{n}\delta \Omega_n |_{H^p}\nonumber\\
&=&C_{\omega_0}|- {\mathcal G}(\Omega_0)+{\textrm{\bf  G}}_{n}\delta \Omega_n - \left[\int_0^1\partial_\zeta\mathcal G(\theta\Omega_0+(1-\theta)\Omega_n)d\theta \right] \delta\Omega_n|_{H^p}\nonumber\\
%{\textrm{\bf  G}}_{\theta\Omega_0+(1-\theta)\Omega_n}\delta\Omega_n|_{H^p}\nonumber\\
&\le &C_{\omega_0}(| {\mathcal G}(\Omega_0)|_{H^p}+\sup_{\zeta\in\tilde{\mathfrak D}}|\partial_\zeta^2\mathcal R(\zeta,\lambda)|\ |\left(\delta \omega_n^\pm\right)^2|_{H^p})\nonumber\\%\label{E:second}\\
%&\le &C_{\omega_0}(| {\mathcal G}(\Omega_0)|_{H^p}+|\delta \Omega_n|_{H^p}^2)\nonumber\\
&=& C_\ast(t+\delta^2_1)\label{E:c-star}
\end{eqnarray}
where $C_{\omega_0}$ is a constant determined by $|\omega_0^\pm(x,y,\cdot)|_{H^p}$ and $C_\ast=C_{\omega_0}(1+|y| )$. Applying an induction argument, condition (\ref{E:bounded-assumption}) and the lemma can be proved by successively choosing 
\begin{equation}\label{E:tilde-c}
\delta_1=\min\{\frac 1{2C_\ast},\,\delta_0\}, \quad \epsilon_1=\min\{\frac{\delta_1}{2C_\ast},\,\epsilon_0\}.
\end{equation}
\end{proof}
Hence the recursive formula in (\ref{E:FP-ite})-(\ref{E:linearization-G}) are eligible by Proposition \ref{P:NRH-problem-ext} and Lemma \ref{L:quadratic}.  Moreover,

\begin{lemma} \label{L:convergence-0} {\rm{\bf(Convergence)} }
Suppose  $p>2$ and $v\in\mathfrak S$ is compactly supported in $y$. Let $\epsilon_1$ and $\delta_1$ be defined by (\ref{E:tilde-c}).  Then, for $0\le t\le \epsilon_1$,  $\{\omega_n^\pm\}$ converges to some $\omega^\pm(x,y,t,\lambda)$, with $|\omega^\pm-\left(\Phi^\pm-(x-\lambda y)\right)|_{H^p}\le \delta_1$.
\end{lemma}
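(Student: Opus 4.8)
The plan is to prove that $\{\Omega_n\}=\{(\omega_n^+,\omega_n^-)\}$ is a Cauchy sequence in the Banach space $H^p(\RR,d\lambda)\times H^p(\RR,d\lambda)$; completeness of $H^p$ then furnishes the limit $\Omega=(\omega^+,\omega^-)$, and letting $n\to\infty$ in the uniform estimate \eqref{E:uniform} of Lemma \ref{L:quadratic} (closed balls of $H^p$ being closed) immediately yields $|\omega^\pm-(\Phi^\pm-(x-\lambda y))|_{H^p}\le\delta_1$. Throughout we have $0\le t\le\epsilon_1$, so Proposition \ref{P:NRH-problem-ext} and Lemma \ref{L:quadratic} keep $\omega_n^-+x-\lambda y-\lambda^2t\in\widetilde{\mathfrak D}(\lambda)$ and $\mathrm{Ind}\,\mathcal J_n=0$ for every $n$; hence $\mathcal G(\Omega_n)$ and the linearization $\mathbf G_n$ are defined, the iteration \eqref{E:recursive-formula} is legitimate, and Lemma \ref{L:gn-inv-S} may be invoked at each step.

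The heart of the argument is the Newton-type quadratic contraction
\beq\label{E:newton-quad}
|\Omega_{n+1}-\Omega_n|_{H^p}\le C_{\ast\ast}\,|\Omega_n-\Omega_{n-1}|_{H^p}^2,\qquad n\ge 1,\qquad C_{\ast\ast}=C_{\ast\ast}(y,p,\omega_0^\pm).
\eeq
Writing \eqref{E:recursive-formula} at indices $n$ and $n-1$ and re-expressing both via \eqref{E:recursive-small-ome-4} gives
\beq\label{E:newton-rel}
\mathbf G_n\,(\Omega_{n+1}-\Omega_n)=-\mathcal G(\Omega_n),\qquad \mathbf G_{n-1}\,(\Omega_n-\Omega_{n-1})=-\mathcal G(\Omega_{n-1}).
\eeq
The second identity says that the affine Taylor polynomial of $\mathcal G$ at $\Omega_{n-1}$ vanishes at $\Omega_n$; since by \eqref{E:FP-ite} the only nonlinearity of $\mathcal G$ is through $R(\omega^-+x-\lambda y-\lambda^2t,\lambda)$, Taylor's formula with integral remainder yields
\beq\label{E:g-remainder}
\begin{split}
\mathcal G(\Omega_n)&=\mathcal G(\Omega_n)-\mathcal G(\Omega_{n-1})-\mathbf G_{n-1}(\Omega_n-\Omega_{n-1})\\
&=\Big(\int_0^1(1-\theta)\,\partial_\zeta^2 R(\zeta_\theta,\lambda)\,d\theta\Big)\,(\omega_n^--\omega_{n-1}^-)^2,
\end{split}
\eeq
where $\zeta_\theta=\theta(\omega_n^-+x-\lambda y-\lambda^2t)+(1-\theta)(\omega_{n-1}^-+x-\lambda y-\lambda^2t)$ remains in $\widetilde{\mathfrak D}(\lambda)$ because $|\theta\,\delta\omega_n^-+(1-\theta)\,\delta\omega_{n-1}^-|_{H^p}\le\delta_1\le\delta_0$, so that Proposition \ref{P:NRH-problem-ext} applies to it. Estimating \eqref{E:g-remainder} in $H^p$ — using that for $p>2$ the space $H^p(\RR,d\lambda)$ embeds in $L^\infty$ and is closed under products, together with the decay of the $\zeta$- and $\lambda$-derivatives of $R$ from \eqref{E:R-nonlinear}, which absorbs the linear growth $1+|y|+|\lambda|t$ of $\partial_\lambda\zeta_\theta$ exactly as in \eqref{E:c-star} and \eqref{E:basic-linear} — one obtains $|\mathcal G(\Omega_n)|_{H^p}\le C_{\omega_0}(1+|y|)\,|\Omega_n-\Omega_{n-1}|_{H^p}^2$, with $\mathcal G(\Omega_n)\in H^p$ since its $\lambda$-polynomial part cancels as in the proof of Lemma \ref{L:local-small}. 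By \eqref{E:linearization-G} the first identity of \eqref{E:newton-rel} says precisely that $f:=\Omega_{n+1}-\Omega_n$ solves $f^+=\mathcal J_n f^-+\mathcal G(\Omega_n)$ with $\partial_{\bar\lambda}f=0$ and $f^\pm\in H^p$; by the uniqueness in Lemma \ref{L:gn-inv-S}, $f$ is the solution furnished there, so \eqref{E:gn-inv-1-S} applies and gives \eqref{E:newton-quad}.

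Closing the iteration is routine. The $n=0$ instance of the estimate in the proof of Lemma \ref{L:quadratic} (with $\delta\Omega_0=0$) gives $|\Omega_1-\Omega_0|_{H^p}=|\delta\Omega_1|_{H^p}\le C_\ast t\le C_\ast\epsilon_1$, which we arrange to be $\le 1/(2C_{\ast\ast})$ by shrinking $\epsilon_1$, and $\delta_1$ if needed, consistently with \eqref{E:tilde-c}. Then \eqref{E:newton-quad} propagates $|\Omega_{n+1}-\Omega_n|_{H^p}\le\tfrac12|\Omega_n-\Omega_{n-1}|_{H^p}$ for all $n\ge1$ — in fact $|\Omega_{n+1}-\Omega_n|_{H^p}\le C_{\ast\ast}^{-1}\bigl(C_{\ast\ast}|\Omega_1-\Omega_0|_{H^p}\bigr)^{2^n}$, a doubly exponential rate — so $\sum_n|\Omega_{n+1}-\Omega_n|_{H^p}<\infty$ and $\{\Omega_n\}$ converges in $H^p\times H^p$ to some $\Omega=(\omega^+,\omega^-)$. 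Since each $\omega_n^\pm$ lies in the closed $\delta_1$-ball of $H^p$ around $\Phi^\pm-(x-\lambda y)$ by \eqref{E:uniform}, so does the limit, which is the claimed bound; together with the Cauchy property this proves the lemma.

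The step I expect to require the most care is the constant bookkeeping rather than anything conceptual: one must check that the $C_{\ast\ast}$ emerging from \eqref{E:g-remainder} and Lemma \ref{L:gn-inv-S} is compatible with the already-fixed $\epsilon_1,\delta_1$ of \eqref{E:tilde-c}, so that \eqref{E:newton-quad} genuinely contracts while the standing hypotheses \eqref{E:assumption}/\eqref{E:uniform} — which are what keep $\mathrm{Ind}\,\mathcal J_n=0$ and $\omega_n^-+x-\lambda y-\lambda^2t\in\widetilde{\mathfrak D}(\lambda)$ at every step — are not destroyed by the extra shrinking. The one genuinely delicate analytic ingredient inside that is the $H^p$-estimate of the quadratic remainder \eqref{E:g-remainder}: its $\partial_\lambda$ produces the term $\partial_\zeta^3 R(\zeta_\theta,\lambda)\,\partial_\lambda\zeta_\theta$, carrying the unbounded factor $1+|y|+|\lambda|t$ from $\partial_\lambda\zeta_\theta$, and it is precisely the decay in \eqref{E:R-nonlinear} together with $p>2$ that keep the corresponding $L^p(\RR,d\lambda)$-norm finite, just as in the proof of Lemma \ref{L:quadratic}.
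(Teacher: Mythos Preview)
Your proposal is correct and follows essentially the same approach as the paper: both derive $\mathbf G_n(\Omega_{n+1}-\Omega_n)=-\mathcal G(\Omega_n)$, identify $\mathcal G(\Omega_n)$ with the quadratic Taylor remainder $\mathcal G(\Omega_n)-\mathcal G(\Omega_{n-1})-\mathbf G_{n-1}(\Omega_n-\Omega_{n-1})$, bound it by $C(1+|y|)|\Omega_n-\Omega_{n-1}|_{H^p}^2$, and invert $\mathbf G_n$ via Lemma~\ref{L:gn-inv-S} to obtain the quadratic contraction. The only cosmetic difference is in closing the iteration: the paper linearizes the quadratic bound using the uniform estimate $|\Delta\Omega_n|_{H^p}\lesssim\delta_1$ from Lemma~\ref{L:quadratic} (and shrinks $\delta_1$ so that $C_\dagger\delta_1<\tfrac12$), whereas you instead shrink $\epsilon_1$ to make $|\Omega_1-\Omega_0|_{H^p}$ small and then run the doubly-exponential Newton rate directly---either route works.
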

\begin{proof} Let 
\[
\Delta\Omega_{n+1}=\Omega_{n+1}-\Omega_n=(\omega_{n+1}^+-\omega_n^+,\ \omega_{n+1}^--\omega_n^-).
\] From (\ref{E:FP-ite})-(\ref{E:linearization-G}), one has
\begin{equation}\label{E:rh-0}
{\textrm{\bf  G}}_n\Delta\Omega_{n+1} 
={\textrm{\bf  G}}_{n-1} \Delta\Omega_{n}-\left(\mathcal G(\Omega_n)-\mathcal G(\Omega_{n-1})\right).
\end{equation}
Similarly, we obtain
\begin{equation}\label{E:conv-1}
| -{\textrm{\bf  G}}_{n-1}\Delta\Omega_n+\mathcal G(\Omega_n)-\mathcal G(\Omega_{n-1})|_{H^p}\le C_{\omega_0}(1+|y|)|\Delta\Omega_n|^2_{H^p}.
\end{equation}
Here $C_{\omega_0}$ is a constant determined by $|\omega_0^\pm(x,y,\cdot)|_{H^p}$. Applying Lemma \ref{L:gn-inv-S} and \ref{L:local-small},  
we derive
\begin{equation}\label{E:c-dagger}
|\Delta\Omega_{n+1}|_{H^p}\le C_\dagger|\Delta\Omega_n|^2_{H^p}
\end{equation}
from (\ref{E:rh-0}), (\ref{E:conv-1}). Here $C_\dagger=C_{\omega_0}'(1+|y|)$  is a  constant independent of $n$. Thus by Lemma \ref{L:quadratic},
\[
|\Delta\Omega_{n+1}|_{H^p}\le C_\dagger\delta_1|\Delta\Omega_n|_{H^p}.
\]
For small $\delta_1$ satisfying $2C_\dagger\delta_1<1$ (redo the proof of Lemma \ref{L:quadratic} if necessary), we obtain
\[
|\Delta\Omega_{n+1}|_{H^p}\le\frac 12|\Delta\Omega_n|_{H^p},
\]
which implies the convergence of $\{\Omega_n\}$.
\end{proof}

\begin{theorem} \label{T:uniqueness} {\bf (Local solvability for the NRH problem)} 
For  $p>2$, $\forall x$, $\forall y\in\mathbb R$, and  $v\in\mathfrak S$ is compactly supported in $y$. Let $\epsilon_1=\epsilon_1(y, p,\omega_0^\pm)$, $\delta_0$ 
be defined by (\ref{E:tilde-c}) and Proposition \ref{P:NRH-problem-ext}. Let
\beq\label{E:time}
\begin{split}
\epsilon_2=\min\{\epsilon_1(y, p,\omega_0^\pm), \epsilon_1(y, 2p,\omega_0^\pm)\}.%,\\
%\delta_3=\max\{\delta_2(y, p, \omega_0^\pm),\delta_2(y, 2p, \omega_0^\pm)\}
\end{split}
\eeq  Then the nonlinear Riemann-Hilbert problem 
\begin{equation}\label{E:Manakov-Santini-NRH-local}
\begin{array}{cl}
\Psi^+(x,y,t,\lambda)= \mathcal R(\Psi^-(x,y,t,\lambda),\lambda),&\ \lambda\in\RR,\  0\le t\le \epsilon_2,\\
\partial_{\bar\lambda}\Psi(x,y,t,\lambda)=0,&\ \lambda\in\CC^\pm,\  0\le t\le  \epsilon_2,\\
\Psi(x,y,0,\lambda)= \Phi(x,y,\lambda)&\\
\end{array}
\end{equation}  has a unique solution  satisfying the reality condition
\begin{equation}\label{E:psi-reality}
\Psi^+(x,y,t,\lambda)=\overline{\Psi^-(x,y,t,\lambda)}
\end{equation} and
\begin{equation}\label{E:reg}
\begin{array}{c}
|\Psi^\pm(x,y,t,\lambda)-\left(\Phi^\pm(x,y,\lambda)-\lambda^2 t\right)|_{H^p}<\delta_0,\\
\partial_t^h\partial_y^k\partial_x^\mu[\Psi^\pm-( x-\lambda y-\lambda^2 t)]\in L^p,\\
 0\le h+k+\mu\le 2,\ h<2.
 \end{array}
\end{equation} 
\end{theorem}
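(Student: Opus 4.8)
The plan is to assemble Theorem \ref{T:uniqueness} directly from the machinery already built in Lemmas \ref{L:gn-inv-S}--\ref{L:convergence-0}. First I would run the Newtonian iteration \eqref{E:recursive-formula}--\eqref{E:linearization-G} at the regularity level $p>2$: Lemma \ref{L:quadratic} gives uniform boundedness $|\delta\Omega_n|_{H^p}\le\delta_1\le\delta_0$ for $0\le t\le\epsilon_1(y,p,\omega_0^\pm)$, and Lemma \ref{L:convergence-0} then gives that $\{\omega_n^\pm\}$ is Cauchy in $H^p(\RR,d\lambda)$ and converges to some $\omega^\pm(x,y,t,\lambda)$ with $|\omega^\pm-(\Phi^\pm-(x-\lambda y))|_{H^p}\le\delta_1$. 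Since $\delta_1\le\delta_0$, Proposition \ref{P:NRH-problem-ext} guarantees $\omega^-+x-\lambda y-\lambda^2 t\in\widetilde{\mathfrak D}(\lambda)$, so $\mathcal G(\Omega)=R(\omega^-+x-\lambda y-\lambda^2 t,\lambda)+\omega^--\omega^+$ is well-defined; passing to the limit in \eqref{E:recursive-formula} and using continuity of $\mathcal G$ and of $\mathbf{G}_n$ in $H^p$ (via \eqref{E:R-nonlinear} and the mean value inequality) forces $\mathbf{G}\delta\Omega=\mathbf{G}\delta\Omega-\mathcal G(\Omega)$, i.e. $\mathcal G(\Omega)=0$. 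That is precisely $\omega^+=R(\omega^-+x-\lambda y-\lambda^2 t,\lambda)+\omega^-$, i.e. the jump relation in \eqref{E:Manakov-Santini-NRH-local}. Setting $\Psi^\pm=\omega^\pm+x-\lambda y-\lambda^2 t$ and extending off the real axis by the Cauchy integral $\Psi(x,y,t,\lambda)=x-\lambda y-\lambda^2 t+\frac1{2\pi i}\int_\RR\frac{\omega^+(x,y,t,\lambda')-\omega^-(x,y,t,\lambda')}{\lambda'-\lambda}d\lambda'$ produces a function holomorphic in $\CC^\pm$ with the prescribed jump and the normalization $\Psi-(x-\lambda y-\lambda^2 t)\to0$; at $t=0$ the iteration is stationary at $\Omega_0$ by \eqref{E:initial}, so $\Psi(x,y,0,\lambda)=\Phi(x,y,\lambda)$.

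For the reality condition \eqref{E:psi-reality} I would exploit the algebraic constraint \eqref{E;alge}: $\mathcal R(\zeta,\lambda)=\mathcal R(\overline{\mathcal R(\overline\zeta,\lambda)},\lambda)$. Given the solution $\Psi^-$, the function $\overline{\Psi^-(x,y,t,\bar\lambda)}$ extended appropriately is another candidate solving the same nonlinear Riemann-Hilbert problem with the same initial data (one checks the jump using \eqref{E;alge} exactly as in \eqref{E:aa}), so by the uniqueness part it must coincide, giving $\Phi^+=\overline{\Phi^-}$ in the limit and hence \eqref{E:psi-reality}. Uniqueness itself I would get by the standard contraction argument: if $\Psi_1,\Psi_2$ are two solutions with the stated regularity and size bound, their difference satisfies a homogeneous version of \eqref{E:linearization-NRH}, and Lemma \ref{L:gn-inv-S} together with the quadratic estimate \eqref{E:c-dagger}-type bound (the Lipschitz bound $|\mathcal G(\Psi_1)-\mathcal G(\Psi_2)|_{H^p}\le C_{\omega_0}(1+|y|)\delta_1|\Psi_1-\Psi_2|_{H^p}$, which uses $\sup_{\zeta\in\widetilde{\mathfrak D}}|\partial_\zeta^2\mathcal R|<\infty$ from \eqref{E:R-nonlinear}) shrinks the difference by a factor $<1$ once $\delta_1$ is small, forcing $\Psi_1=\Psi_2$ on $0\le t\le\epsilon_1$.

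For the higher-regularity claim in \eqref{E:reg}, namely $\partial_t^h\partial_y^k\partial_x^\mu[\Psi^\pm-(x-\lambda y-\lambda^2 t)]\in L^p$ for $h+k+\mu\le 2$, $h<2$, I would rerun the identical iteration at the finer integrability exponent $2p$. This is exactly why $\epsilon_2=\min\{\epsilon_1(y,p,\omega_0^\pm),\epsilon_1(y,2p,\omega_0^\pm)\}$ appears: on $[0,\epsilon_2]$ both the $H^p$ and $H^{2p}$ iterations converge, and differentiating the fixed-point relation $\mathcal G(\Omega)=0$ in $x,y,t$ (the $t$-derivative pulls down the $\lambda^2$ dispersion factor, which is harmless once $t\le\epsilon_2$ and one stays away from the global obstruction flagged in the introduction; the $x$- and $y$-derivatives are controlled by \eqref{E:R-nonlinear} and the chain rule) together with the $H^{2p}$ bounds on $\Psi^\pm$ and the boundedness of the Hilbert transform on $L^p$ yields the claimed membership, using Hölder to absorb products of first derivatives into $L^p$. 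The main obstacle I anticipate is not the convergence — that is handed to us by Lemma \ref{L:convergence-0} — but bookkeeping the mixed derivatives: one must verify that each $\partial_t^h\partial_y^k\partial_x^\mu$ applied to $\mathcal G(\Omega)=0$ produces only terms already estimated (derivatives of $R$ up to order two composed with $\Psi^-$, times polynomials in $\lambda,y,t$ of controlled degree), and that no derivative of order exceeding two, and no second $t$-derivative, is ever needed — which is precisely the restriction $h+k+\mu\le 2$, $h<2$ recorded in \eqref{E:reg}.
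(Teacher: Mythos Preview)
Your existence argument and the reality-via-\eqref{E;alge} step match the paper exactly. Two places where your route diverges from the paper's are worth noting.

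\textbf{Uniqueness.} You propose a contraction: write $\mathbf G_1(\Psi_1-\Psi_2)=-(\text{quadratic remainder})$, invert $\mathbf G_1$ via Lemma~\ref{L:gn-inv-S}, and absorb the remainder as $C\delta_1|\Psi_1-\Psi_2|_{H^p}$ with $C\delta_1<1$. The paper instead forms $\Pi^\pm=\Psi_1^\pm-\Psi_2^\pm$ and observes directly that $\Pi^+=\kappa\,\Pi^-$ with $\kappa=\int_0^1\partial_\zeta\mathcal R(\theta\Psi_1^-+(1-\theta)\Psi_2^-,\lambda)\,d\theta$, a \emph{homogeneous} linear Riemann--Hilbert problem; checking $\mbox{Ind}\,\kappa=0$ by the argument of Lemma~\ref{L:compact-y} forces $\Pi\equiv0$. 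Both work; the paper's version avoids the extra smallness bookkeeping.

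\textbf{Higher regularity.} Your phrase ``rerun the identical iteration at $2p$'' is slightly off-target: the Newton iteration at exponent $2p$ yields $\omega^\pm\in H^{2p}(\RR,d\lambda)$, which is $\lambda$-regularity, not $x,y,t$-derivatives. The paper obtains the $x,y,t$-derivatives by differentiating the jump relation to get explicit \emph{linear} non-homogeneous RH problems for $\omega_x^\pm,\omega_y^\pm,\omega_t^\pm$ (inhomogeneities $\partial_\zeta R,\ \lambda\partial_\zeta R,\ \lambda^2\partial_\zeta R$), solving each via the index-zero theory in both $L^p$ and $L^{2p}$, and then differentiating once more; the second-order inhomogeneities are products of first derivatives, hence in $L^p$ by H\"older from the $L^{2p}$ bounds. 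Your final paragraph does sketch this mechanism correctly, so the idea is there---just be clear that the role of $\epsilon_1(y,2p,\omega_0^\pm)$ is to make the linearized RH solver (Lemma~\ref{L:gn-inv-S}) available at exponent $2p$, which then feeds the first-derivative step, not to produce $x,y,t$-regularity directly.
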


\begin{proof} Therefore, for each $x,y\in\mathbb R$, from (\ref{E:FP-ite})-(\ref{E:linearization-G}), Lemma \ref{L:convergence-0}, Plemelji formula, and
\[
\Psi(x,y,t,\lambda)=x-\lambda y-\lambda^2 t+\frac 1{2\pi i}\int_{\mathbb R}\frac {R(\omega^-+x-\lambda'y-{\lambda'}^2t,\lambda')}{\lambda'-\lambda}d\lambda', \] we obtain a local
solution for the nonlinear Riemann-Hilbert problem \eqref{E:Manakov-Santini-NRH-local}. 

The reality condition in \eqref{E:psi-reality} follows from the algebraic constraint \eqref{E;alge} (adapting \eqref{E:aa} for the extension $\mathcal R$). Suppose 
\beq\label{E:uniq}
\begin{array}{c}
\Psi^+_i(x,y,t,\lambda)= \mathcal R(\Psi^-_i(x,y,t,\lambda),\lambda),\\
\Psi^\pm_i(x,y,0,\lambda)= \Phi^\pm(x,y,\lambda),\\
|\Psi^\pm_i(x,y,t,\lambda)-\left(\Phi^\pm(x,y,\lambda)-\lambda^2 t\right)|_{H^p}<\delta_0
\end{array}
\eeq for $i=1,\,2$. Denote 
\[
\begin{array}{c}
\Pi^\pm(x,y,t,\lambda)=\Psi^\pm_1(x,y,t,\lambda)-\Psi^\pm_2(x,y,t,\lambda),\\
\kappa(x,y,t,\lambda)=\int_0^1\frac{\partial \mathcal R}{\partial\zeta}(\theta\Psi^-_1(x,y,t,\lambda)+(1-\theta)\Psi^-_2(x,y,t,\lambda),\lambda)d\theta.
\end{array}
\] Then
\[
\begin{array}{c}
\Pi^+(x,y,t,\lambda)=\kappa(x,y,t,\lambda)\Pi^-(x,y,t,\lambda),\\
\Pi^\pm(x,y,t,\lambda)\in H^p.
\end{array}
\]Therefore, justifying the index zero condition by  \eqref{E:uniq},  $|\kappa(x,y,0,\lambda)|=|\frac{\partial \mathcal R}{\partial\zeta}(\Phi^-(x,y,\lambda),\lambda)|\ne 0$, and   Lemma \ref{L:compact-y}, one can derive $\Pi^\pm(x,y,t,\lambda)\equiv 0$ and verify the uniqueness.

To complete the proof, we need to consider cases of higher derivatives. By \eqref{E:Manakov-Santini-NRH-local}, we obtain the Riemann-Hilbert problems
\bea
&\Psi_x^+={{\partial \mathcal R}/{\partial\zeta}(\Psi^-,\lambda)}\Psi^-_x,&%\Psi_x\to 1 \ \textit{ as $\lambda\to\infty$},
\label{E:Manakov-Santini-d-x}\\
&\Psi_y^+={{\partial \mathcal R}/{\partial\zeta}(\Psi^-,\lambda)}\Psi^-_y,&%\Psi_y\to -\lambda \ \textit{ as $\lambda\to\infty$},
\label{E:Manakov-Santini-d-y}\\
&\Psi_t^+={{\partial R}/{\partial\zeta}(\Psi^-,\lambda)}\Psi^-_t.&%\Psi_t\to -\lambda^2\ \textit{ as $\lambda\to\infty$}.
\label{E:Manakov-Santini-d-t}
\eea
Using the renormalization $\Psi(x,y,t,\lambda)=\omega(x,y,t,\lambda)+x-\lambda y-\lambda^2 t$, the Riemann-Hilbert problems \eqref{E:Manakov-Santini-d-x}-\eqref{E:Manakov-Santini-d-t} turn into
\bea
&\omega_x^+=\frac{\partial  R}{\partial\zeta}\omega^-_x+\frac{\partial  R}{\partial\zeta},&%\omega_x\to 0 \ \textit{ as $\lambda\to\infty$},
\label{E:Manakov-Santini-d-x-linear}\\
&\omega_y^+=\frac{\partial  R}{\partial\zeta}\omega^-_y-\frac{\partial  R}{\partial\zeta}\lambda,&%\omega_y\to 0 \ \textit{ as $\lambda\to\infty$},
\label{E:Manakov-Santini-d-y-linear}\\
&\omega_t^+=\frac{\partial  R}{\partial\zeta}\omega^-_t-\frac{\partial  R}{\partial\zeta}\lambda^2,&%\omega_t\to 0 \ \textit{ as $\lambda\to\infty$},
\label{E:Manakov-Santini-d-t-linear}
\eea
where $\frac{\partial  R}{\partial\zeta}=\frac{\partial  R}{\partial\zeta}(\omega^-+x-\lambda y-\lambda^2 t,\lambda)$.
They are linear Riemann-Hilbert problem \eqref{E:Manakov-Santini-d-x} with non homogeneous terms
\begin{equation}\label{E:nonhomogeneous-linear}
\frac{\partial  R}{\partial\zeta},\ \frac{\partial  R}{\partial\zeta}\lambda,\ \frac{\partial  R}{\partial\zeta}\lambda^2\in L^p,\quad p>2.
\end{equation}
Therefore, by Lemma \ref{L:compact-y} (cf. \eqref{E:non-homo} and \eqref{E:solution-diff}), one can derive the unique solvability in $L^p$ of \eqref{E:Manakov-Santini-d-x-linear}-\eqref{E:Manakov-Santini-d-t-linear}. 
 
Furthermore, for higher derivatives, we first use the same method to prove
\begin{equation}\label{E:reg-1}
\begin{array}{c}
\partial_t^h\partial_y^k\partial_x^\mu[\Psi^\pm-( x-\lambda y-\lambda^2 t)]\in L^p\cap L^{2p},\\
p>2,\quad 0\le h+k+\mu\le 1,
\end{array}
\end{equation} for  $0\le t\le \epsilon_2$, where $\epsilon_2$ is defined by \eqref{E:time}. On the other hand, taking derivatives of  \eqref{E:Manakov-Santini-d-x-linear}-\eqref{E:Manakov-Santini-d-t-linear}, one obtains
\begin{equation}\label{E:higher-derivatives}
\begin{split}
&\omega_{xx}^+=\frac{\partial  R}{\partial\zeta}\omega^-_{xx}+\frac{\partial^2  R}{\partial\zeta^2}\left(\ (\omega_x^-)^2+2\omega^-_x+1\right),\\
&\omega_{yt}^+=\frac{\partial  R}{\partial\zeta}\omega^-_{yt}+\frac{\partial^2  R}{\partial\zeta^2}\left(\omega_y^-\omega^-_t-\lambda^2\omega^-_y-\lambda\omega^-_t-\lambda^3\right),  \\
&\omega_{xy}^+=\frac{\partial  R}{\partial\zeta}\omega^-_{xy}+\frac{\partial^2  R}{\partial\zeta^2}\left(\omega_x^-\omega^-_y-\lambda\omega^-_x+\omega^-_y-\lambda\right),  \\
&\omega_{xt}^+=  \frac{\partial  R}{\partial\zeta}\omega^-_{xt}+\frac{\partial^2  R}{\partial\zeta^2}\left(\omega_x^-\omega^-_t-\lambda^2\omega^-_x+\omega^-_t-\lambda^2\right), \\
&\omega_{yy}^+=  \frac{\partial  R}{\partial\zeta}\omega^-_{yy}+\frac{\partial^2  R}{\partial\zeta^2}\left(\ (\omega_y^-)^2-2\lambda\omega^-_y+\lambda^2\right).
\end{split}
\end{equation} Theorem \ref{P:NRH-problem} and  \eqref{E:reg-1} imply all non homogeneous terms in the right hand sides of \eqref{E:higher-derivatives} are $L^p$ functions. Therefore, the estimates \eqref{E:reg} can be derived as above.

\end{proof}

\begin{remark} If the condition \eqref{E:reg} is dropped, the uniqueness is no longer true. A counterexample is given by $\Phi^\pm(x-\lambda^2 t, y,\lambda)$ which satisfies \eqref{E:Manakov-Santini-NRH-local} but does not satisfy \eqref{E:reg}. 
\end{remark}

\begin{remark}\label{R:newton} The Newtonian iteration in this section can be elucidated to prove the set 
\[
\begin{split}
S=\ &\{T\in\RR| \ \textit{the nonlinear Riemann-Hilbert problem}\\
&\ \Psi^+(x,y,t,\lambda)= \mathcal R(\Psi^-(x,y,t,\lambda),\lambda),\ \  \lambda\in\RR,\\
&\ \partial_{\overline\lambda}\Psi(x,y,t,\lambda)=0,\ \ \lambda\in\CC^\pm,\\
&\ \Psi(x,y,0,\lambda)= \Phi(x,y,\lambda)\\
&\ \textit{is solved for $0\le t\le T$}.\}
\end{split}
\]is open. On the other hand, the dispersion relation $x-\lambda y-\lambda^2 t$  causes the estimates, in particular,  
\[|R(\omega^\pm(x,y,t,\lambda)+x-\lambda y-\lambda^2 t,\lambda )|_{H^p(\RR,d\lambda)}\] growing  inevitably unbounded  as  $t\gg 1$. That is,  when  $t\gg 1$, one can not exclude the possibilities 
\begin{itemize}
\item the index zero condition on $1+\partial_\zeta R$ breaks;
\item the deformation property of $R(\Phi^-(x,y,\lambda),\lambda)$ fails,
%\item the global solvability of the nonlinear Riemann-Hilbert problem \eqref{E:Manakov-Santini-NRH-local}.
\end{itemize}
since $\hbar<\infty$.  
 Therefore, $S$ is not closed in general. So only  a local solvability for the nonlinear Riemann-Hilbert problem \eqref{E:Manakov-Santini-NRH-local} is achieved and it is not practical to recover the global small data solution obtained in \cite{GSW}  via the Newtonian iteration scheme.

\end{remark}

%%%%%%%%%%%%%%%%%%%%%%%%%%%%%%%%%%%%%%%%%%%%%%%%%%%%%%%%%%%%%%%%%%%%%%%%%%%%%%%%%%%
\section{The inverse problem II: the Lax equation and the Cauchy problem}\label{S:lax-cauchy}
%%%%%%%%%%%%%%%%%%%%%%%%%%%%%%%%%%%%%%%%%%%%%%%%%%%%%%%%%%%%%%%%%%%%%%%%%%%%%%%%%%%%
%We complete the inverse problem by using the same scheme as in the proof of Theorem \ref{P:NRH-problem}.

\begin{theorem}\label{T:eigenfunction-inv-0-S} {\bf (Local solvability of the Lax pair)} 
Suppose $v_0\in\mathfrak S$ is compactly supported in $y$. Let  $\Psi^\pm(x,y,t,\lambda)$ and $\epsilon_2=\epsilon_2(y)$ be the solution of the nonlinear Riemann-Hilbert problem (\ref{E:Manakov-Santini-NRH-local}) obtained in Theorem \ref{T:uniqueness} (replacing $v$ by $v_0$). Define
\begin{gather}
v(x,y,t)=\frac 1{2\pi i}\int_{\mathbb R}R(\Psi^-(x,y,t,\lambda'),\lambda')d\lambda',\label{E:potential-inv-S}\\
%\Phi^\pm(x,y,t,\lambda)=x-\lambda y-\lambda^2 t+\chi^\pm(\varphi(x,y,t,\lambda),\lambda),&&\quad\lambda\in\mathbb R,\nonumber\\
\Psi(x,y,t,\lambda)=x-\lambda y-\lambda^2 t+\frac 1{2\pi i}\int_{\mathbb R}\frac {R(\Psi^-(x,y,t,\lambda'),\lambda')}{\lambda'-\lambda}d\lambda', \label{E:eigenfunction-inv-S}\\
 \forall x,\, y\in\RR,\ \ \lambda\in\mathbb C ^\pm,\ \ 0\le t\le \epsilon_2.\nonumber
\end{gather}
Then for $ 0\le \mu+k\le 1$,  
\begin{gather}
v(x,y,t)=\overline {v(x,y,t)}, \label{E:potential-regularity-S}\\
v(x,y,0)=v_0(x,y),\label{E:initial-v-S}\\
%{}\nonumber\\ 
\partial_y^k\partial_x^\mu v(x,y,t)\in C(\RR\times\RR\times  [0,\epsilon_2])\cap L^\infty(\RR\times\RR\times  [0,\epsilon_2]);\label{E:eigenfunction-inv-1-S}%,\, \varphi_y(x,y,t,\lambda), \, \varphi_{xx}(x,y,t,\lambda), \, \varphi_{xy}(x,y,t,\lambda)
\end{gather} and for  $0\le \mu+k+h \le 2$, $h<2$, and $\lambda\in\CC^\pm$,
\begin{equation}\label{E:eigen-condition-x-S}
 \partial_t^h\partial_y^k\partial_x^\mu[\Psi-(x-\lambda y-\lambda^2 t)]
   \in L^\infty(\RR\times\RR\times [0,\epsilon_2]).
   \end{equation}
Moreover,  for   $\lambda\in\CC^\pm$, the Lax pair
\begin{gather}
L\Psi=\partial_y\Psi+\left(\lambda+v_x\right)\partial_x \Psi=0,\label{E:pavlov-inv-x-S}\\
M\Psi=\partial_t\Psi+\left(\lambda^2+\lambda v_x-v_y\right)\partial_x\Psi=0, \label{E:pavlov-inv-y-S}\\
\textit{ for $\forall x$, $y\in\RR$, $0\le t\le \epsilon_2$}\nonumber
\end{gather}
exist uniquely. 
\end{theorem}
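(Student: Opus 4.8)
### Proof proposal

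The plan is to exploit the fact that the hard analytic work has already been done: Theorem \ref{T:uniqueness} gives us the solution $\Psi^\pm(x,y,t,\lambda)$ of the nonlinear Riemann-Hilbert problem together with the regularity \eqref{E:reg}, and the representation \eqref{E:eigenfunction-inv-S} is nothing but the Plemelj reconstruction of $\Psi$ from its jump $R(\Psi^-,\lambda)$. So the first step is to record that $v(x,y,t)$ defined by \eqref{E:potential-inv-S} is the $\lambda\to\infty$ coefficient in the large-$\lambda$ expansion of $\Psi-(x-\lambda y-\lambda^2 t)$, exactly as in \eqref{E:potential-formula-dp-nrh}; the reality \eqref{E:potential-regularity-S} then follows from the reality condition \eqref{E:psi-reality} (equivalently the algebraic constraint \eqref{E;alge}) by the same computation as in \eqref{E:aa}, and \eqref{E:initial-v-S} follows from $\Psi(x,y,0,\lambda)=\Phi(x,y,\lambda)$ together with \eqref{E:potential-formula-dp-nrh}. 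The regularity statements \eqref{E:eigenfunction-inv-1-S} and \eqref{E:eigen-condition-x-S} are obtained by differentiating \eqref{E:potential-inv-S} and \eqref{E:eigenfunction-inv-S} under the integral sign, using \eqref{E:reg} to control the $\lambda'$-integrals: each $x$-, $y$-, $t$-derivative lands on $R(\Psi^-(x,y,t,\lambda'),\lambda')$ and, via the chain rule, on $\partial_\zeta R$ times polynomial-in-$\lambda'$ factors $1,\lambda',{\lambda'}^2$, which are absorbed by the decay \eqref{E:R-nonlinear} of $\partial_\zeta^\mu R$ (of order $2+\mu-1/p$) provided $p>2$; continuity in $(x,y,t)$ comes from dominated convergence with the same majorant, and the $L^\infty$ bound is uniform because \eqref{E:reg} is.

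The core of the theorem is the verification of the Lax equations \eqref{E:pavlov-inv-x-S} and \eqref{E:pavlov-inv-y-S}. Here I would mimic the computation already sketched at the end of the proof of Theorem \ref{P:NRH-problem}, where it is shown that $[\partial_y+(\lambda+v_x)\partial_x]\Phi=0$ forces $v_x=\partial_x\!\left(\frac{1}{2\pi i}\int_{\mathbb R}R(\Phi^-,\lambda')d\lambda'\right)$. Concretely: apply $L=\partial_y+(\lambda+v_x)\partial_x$ to the representation \eqref{E:eigenfunction-inv-S}. Because $\partial_{\bar\lambda}$ of the Cauchy integral vanishes off $\mathbb R$ and the jump of $\Psi$ across $\mathbb R$ is $R(\Psi^-,\lambda)$, the function $L\Psi$ is itself a function holomorphic in $\lambda\in\CC^\pm$ whose jump across $\mathbb R$ is $L$ applied to $R(\Psi^-,\lambda)=\Psi^+-\Psi^-$, i.e.\ involves $\partial_\zeta R(\Psi^-,\lambda)\cdot L\Psi^-$; together with the normalization $\Psi-(x-\lambda y-\lambda^2 t)\to 0$ one gets a homogeneous nonlinear Riemann-Hilbert problem for $L\Psi$ of exactly the linearized type \eqref{E:non-homo}, whose only $L^p$ solution (by the index-zero statement, Lemma \ref{L:compact-y}, applied to $\mathcal J$ evaluated along $\Psi^-$) is the trivial one — \emph{once} the $\lambda\to\infty$ asymptotics of $L\Psi$ are arranged to vanish, and this vanishing is precisely what pins down $v_x$ as in \eqref{E:potential-inv-S}. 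The computation for $M=\partial_t+(\lambda^2+\lambda v_x-v_y)\partial_x$ is the same, with the extra $\lambda^2$ in the dispersion absorbed because $M$ annihilates $x-\lambda y-\lambda^2 t$ as well; the definition of $v$ is forced to be consistent with both equations, and the compatibility $[L,M]=0$ (needed for the two relations defining $v_x$, $v_y$ to be simultaneously solvable, i.e.\ for $v$ itself to be well defined from its gradient) is where the structure of the Pavlov equation re-enters — this is the content to check carefully.

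I expect the main obstacle to be exactly that last point: showing that the single scalar potential $v(x,y,t)$ given by \eqref{E:potential-inv-S} simultaneously produces the correct $v_x$ forced by \eqref{E:pavlov-inv-x-S} and the correct combination $v_y$ (equivalently $\lambda v_x-v_y$) forced by \eqref{E:pavlov-inv-y-S}, with no contradiction. Concretely one must verify that the $O(1/\lambda)$ term of $\Psi-(x-\lambda y-\lambda^2 t)$, which equals $\frac{1}{2\pi i}\int R(\Psi^-,\lambda')d\lambda'=v$, has $x$-derivative matching what the $L$-equation demands and that the $O(1/\lambda)$ and $O(1/\lambda^2)$ terms jointly encode the $M$-equation; this is where one uses that $R$ is holomorphic in $\zeta$ (Remark \ref{R:obstruction-1}) so that $\partial_x$, $\partial_y$, $\partial_t$ commute cleanly with the nonlinear substitution, and that \eqref{E:reg} provides enough decay to expand the Cauchy integral asymptotically term by term. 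Uniqueness of the Lax pair then follows from the index-zero/unique-solvability argument for the homogeneous Riemann-Hilbert problem, just as uniqueness of $\Psi$ was obtained in Theorem \ref{T:uniqueness}. Everything else — reality, initial condition, the regularity classes — is bookkeeping on the Plemelj formula with the estimate \eqref{E:R-nonlinear}.
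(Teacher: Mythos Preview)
Your proposal is correct and follows the paper's approach: apply $L$ and $M$ to the nonlinear Riemann--Hilbert relation \eqref{E:Manakov-Santini-NRH-local} to obtain the homogeneous linear problems $L\Psi^+=(\partial_\zeta\mathcal R)\,L\Psi^-$ and $M\Psi^+=(\partial_\zeta\mathcal R)\,M\Psi^-$ with $L\Psi^\pm,\,M\Psi^\pm\in L^p$, then invoke the index-zero uniqueness (Lemma~\ref{L:compact-y}, via \eqref{E:non-homo}--\eqref{E:solution-diff}) to conclude $L\Psi=M\Psi=0$; reality comes from \eqref{E:psi-reality}, and the regularity statements \eqref{E:eigenfunction-inv-1-S}, \eqref{E:eigen-condition-x-S} from \eqref{E:reg} together with Sobolev and H\"older, exactly as the paper does.

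Your anticipated obstacle is, however, misframed. There is no compatibility question between two competing determinations of $v$, and $[L,M]=0$ plays no role in this theorem (it is a \emph{consequence} of $L\Psi=M\Psi=0$, invoked only later in Theorem~\ref{T:inverse-problem-P-1-S}). The potential is \emph{defined} once by \eqref{E:potential-inv-S}; $v_x$ and $v_y$ are then simply its partial derivatives, $L$ and $M$ are built from these fixed coefficients, and the only thing to check is that $L\Psi^\pm$ and $M\Psi^\pm$ lie in $L^p$ so that the homogeneous Riemann--Hilbert problem forces them to vanish. For $L\Psi$ the $O(1)$ cancellation at $\lambda\to\infty$ (namely $v_x+\lambda\omega_x\to 0$) holds by the very definition of $v$ as the leading $1/\lambda$ coefficient of $-\omega$. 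For $M\Psi$ the analogous $O(1)$ cancellation is not automatic from the definition of $v$ alone; it follows once $L\Psi=0$ has been established, by substituting $\omega_y^-+\lambda\omega_x^-=-v_x(1+\omega_x^-)$ into the large-$\lambda$ expansion. The paper compresses all of this into a single appeal to \eqref{E:reg} and \eqref{E:eigenfunction-inv-1-S}; your asymptotic-expansion viewpoint is the correct way to unpack it, and your instinct that the $M$-equation needs an extra step is sound---it is just an issue of the \emph{order} in which $L\Psi=0$ and $M\Psi=0$ are established, not a consistency constraint on $v$.
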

\begin{proof} Theorem  \ref{T:uniqueness}, \eqref{E:reg},   (\ref{E:potential-inv-S}), \eqref{E:eigenfunction-inv-S}, Sobolev's theorem, and  H$\ddot{\textrm o}$lder  inequality imply   
 (\ref{E:eigenfunction-inv-1-S}) and   (\ref{E:eigen-condition-x-S}). 

Applying $L$ and $M$ to both sides of \eqref{E:Manakov-Santini-NRH-local}, and using \eqref{E:reg}, \eqref{E:eigenfunction-inv-1-S}, $p>2$,  we obtain
\beq\label{E:lax-liouville-1}
\begin{split}
L\Psi^+=\frac{\partial{\mathcal R}}{\partial \zeta}L\Psi^-,&\quad \textit{$L\Psi^\pm\in L^p$,}\\
M\Psi^+=\frac{\partial \mathcal R}{\partial\zeta}M\Psi^-,&\quad \textit{$M\Psi^\pm\in L^p$.}
\end{split}
\eeq
Applying  Lemma \ref{L:compact-y} (cf. \eqref{E:non-homo} and \eqref{E:solution-diff}), we conclude $L\Psi^\pm=L\Psi=0$ and $M\Psi^\pm=M\Psi=0$.  Hence (\ref{E:pavlov-inv-x-S}) and \eqref{E:pavlov-inv-y-S} are justified.

From \eqref{E:psi-reality}, one can prove $\Psi(x,y,t,\lambda)=\overline{\Psi(x,y,t,\bar\lambda)}$. Together with \eqref{E:pavlov-inv-x-S}, we obtain the reality condition (\ref{E:potential-regularity-S}).

\end{proof}

\begin{theorem}\label{T:inverse-problem-P-1-S} {\bf (Local solvability of the Cauchy problem)} 
Suppose $v_0\in\mathfrak S$ with a compact support. Then there exists $\epsilon_2(y)>0$ such that the Cauchy problem of the Pavlov equation
\begin{equation}\label{E:cauchy-pavlov-S}
\begin{split}
&v_{xt}+v_{yy}=v_yv_{xx}-v_xv_{xy},\quad \forall x,\,y\in\RR,\ 0< t\le \epsilon_2(y),\\
&v(x,y,0)=v_0(x,y) 
\end{split}
\end{equation}admits a unique 
real solution. Moreover, 
\[
\begin{array}{c}
\partial_y^k\partial_x^\mu v\in C(\RR\times\RR\times  [0,\epsilon_2])\cap L^\infty(\RR\times\RR\times  [0,\epsilon_2]),\\
 0\le \mu+k\le 1.
 \end{array}
\]
\end{theorem}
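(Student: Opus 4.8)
The plan is to show that the function $v(x,y,t)$ produced by Theorem~\ref{T:eigenfunction-inv-0-S} already solves the Pavlov equation, the PDE itself arising as the compatibility condition $[L,M]=0$ of the Lax pair, and then to reduce uniqueness to the uniqueness of the nonlinear Riemann--Hilbert problem of Theorem~\ref{T:uniqueness}. First I would record, straight from Theorem~\ref{T:eigenfunction-inv-0-S}, that the $v$ given by \eqref{E:potential-inv-S} is real (\eqref{E:potential-regularity-S}), satisfies $v(\cdot,\cdot,0)=v_0$ (\eqref{E:initial-v-S}) and has the regularity \eqref{E:eigenfunction-inv-1-S}. In addition, differentiating \eqref{E:potential-inv-S} under the integral sign, using the holomorphicity of $R$ and its decay \eqref{E:R-nonlinear}, the regularity \eqref{E:reg} of $\Psi^-$, and Lemma~\ref{L:decay} to justify convergence, one checks that $v_{xt}$, $v_{yy}$, $v_{xy}$, $v_{xx}$ exist as continuous functions on $\RR^2\times[0,\epsilon_2]$; in particular every term of \eqref{E:cauchy-pavlov-S} is well defined, and the coefficients $\lambda+v_x$ and $\lambda^2+\lambda v_x-v_y$ of $L$ and $M$ are differentiable enough in $x,y,t$ for the manipulations below.

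By \eqref{E:pavlov-inv-x-S}--\eqref{E:pavlov-inv-y-S}, for every $\lambda\in\CC^\pm$ the complex eigenfunction $\Psi(x,y,t,\lambda)$ solves both $L\Psi=0$ and $M\Psi=0$ with $L=\partial_y+(\lambda+v_x)\partial_x$ and $M=\partial_t+(\lambda^2+\lambda v_x-v_y)\partial_x$. Applying $M$ to $L\Psi=0$ and $L$ to $M\Psi=0$ and subtracting --- the mixed second derivatives of $\Psi$ that occur being controlled by \eqref{E:eigen-condition-x-S} and the previous paragraph --- gives $[L,M]\Psi=0$. A direct computation of the commutator of these two vector fields yields
\[
[L,M]=-\left(v_{xt}+v_{yy}+v_xv_{xy}-v_yv_{xx}\right)\partial_x ,
\]
so that $\left(v_{xt}+v_{yy}+v_xv_{xy}-v_yv_{xx}\right)\partial_x\Psi=0$ for all $\lambda\in\CC^\pm$. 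From \eqref{E:eigen-condition-x-S} and Lemma~\ref{L:decay}, $\partial_x\Psi(x,y,t,\lambda)=1+o(1)$ as $|\lambda|\to\infty$ for each fixed $(x,y,t)$, so one may choose $\lambda$ with large imaginary part for which $\partial_x\Psi\neq0$ and conclude $v_{xt}+v_{yy}+v_xv_{xy}-v_yv_{xx}=0$ at that point; since $(x,y,t)$ is arbitrary, $v$ solves \eqref{E:cauchy-pavlov-S} with data $v_0$, and the regularity assertion is \eqref{E:eigenfunction-inv-1-S}.

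For uniqueness, given another solution $\tilde v$ of \eqref{E:cauchy-pavlov-S} with the stated regularity, I would run the direct construction of Sections~\ref{S:forward-problem}--\ref{S:NRH-dp} on $\tilde v(\cdot,\cdot,t)$ for each fixed $t$: the real eigenfunction $\tilde\varphi$ from the ODE \eqref{eq:def-h}, the shifted Riemann--Hilbert solution, and the complex eigenfunction $\tilde\Psi$ normalized by $\tilde\Psi-(x-\lambda y-\lambda^2t)\to0$. Because $\tilde v$ obeys the Pavlov equation, $\tilde\Psi$ satisfies both Lax equations, and the second one forces the jump of $\tilde\Psi$ across $\RR$, as a function of $\Psi^-$, to be independent of $t$; at $t=0$ it equals the spectral datum $\mathcal R$ of $v_0$ by Theorem~\ref{P:NRH-problem}. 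Hence $\tilde\Psi$ solves the same nonlinear Riemann--Hilbert problem \eqref{E:Manakov-Santini-NRH-local}, and, with the regularity \eqref{E:reg} inherited from $\tilde v$ via the forward estimates of Propositions~\ref{P:direct-sigma-asy-0} and \ref{P:shifted-RH-0-xi-lambda}, it lies in the uniqueness class of Theorem~\ref{T:uniqueness}; that theorem gives $\tilde\Psi=\Psi$, whence $\tilde v=v$ by \eqref{E:potential-inv-S}. The hard part will be exactly this step: verifying that an \emph{arbitrary} solution with only the stated regularity produces scattering data decaying and smooth enough to make the forward construction of $\tilde\Psi$ legitimate and to place $\tilde\Psi$ in the class where Theorem~\ref{T:uniqueness} applies, and that the $t$-dependence of the jump is genuinely trivial --- which is where the compact support of the data in $y$ (propagating for short time) and the operator $M$ enter. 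The commutator identity and the removal of $\partial_x\Psi$ are routine once the differentiation-under-the-integral bookkeeping of the first paragraph is in place.
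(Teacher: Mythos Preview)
Your existence argument follows the paper's exactly: invoke Theorem~\ref{T:eigenfunction-inv-0-S}, use Lemma~\ref{L:decay} to obtain the second derivatives $v_{xt},v_{yy},v_{xy},v_{xx}$, compute $[L,M]\Psi=0$, and strip off the factor $\partial_x\Psi\neq0$. One correction: you assert these second derivatives are continuous on $\RR^2\times[0,\epsilon_2]$, but Lemma~\ref{L:decay} is stated only for fixed $t>0$, and the paper explicitly remarks after this theorem that the Pavlov equation cannot be verified at $t=0$ by this route; hence the PDE is claimed only on $0<t\le\epsilon_2(y)$, as in the statement.

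On uniqueness the two diverge. The paper's proof is silent --- it derives the PDE from the compatibility of the Lax pair and stops, so the word ``unique'' in the statement is not actually argued there. Your plan to run the direct problem on a competing solution $\tilde v$, produce a $\tilde\Psi$ solving the same nonlinear Riemann--Hilbert problem, and invoke Theorem~\ref{T:uniqueness} is the natural IST strategy, but the gap you flag is genuine and not minor. The forward construction of Sections~\ref{S:forward-problem}--\ref{S:NRH-dp} needs $v\in\mathfrak S$ (and compact support) to get the ODE estimates \eqref{E:h-mu-x}, the $\lambda$-asymptotics of Proposition~\ref{P:direct-sigma-asy-0}, and the $H^p$ bounds placing $\tilde\Psi$ in the uniqueness class \eqref{E:reg}; a solution $\tilde v$ with only $\partial_y^k\partial_x^\mu\tilde v\in C\cap L^\infty$, $k+\mu\le1$, offers none of this a priori. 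Closing this would require either a propagation-of-regularity/support result for the Pavlov equation or an independent PDE uniqueness argument, neither of which is supplied here or in the paper.
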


\begin{proof} Applying Theorem \ref{T:eigenfunction-inv-0-S} and  Lemma \ref{L:decay}, one can derive higher order (non uniform) decay of $R(\Psi(x,y,t,\lambda)$ 
 in $\lambda$ if $0<t$. Hence $v_{xt}$, $v_{yy}$, $v_{xy}$, and $v_{xx}$ exist. So we can compute the compatibility of the Lax pair (\ref{E:pavlov-inv-x-S}) and (\ref{E:pavlov-inv-y-S}) and obtain
\[
\left(v_{xt}+v_{yy}-v_yv_{xx}+v_xv_{xy}\right)\partial_x\Psi\equiv 0,\quad 0< t\le \epsilon_2(y).
\]

\end{proof}

\begin{remark} Unlike Theorem \ref{T:eigenfunction-inv-0-S}  where the Lax pair holds up to $t=0$, we cannot prove the Pavlov equation is valid at $t=0$.
\end{remark}

\begin{remark}
The property $\hbar<\infty$ in \eqref{E:holo-ext} is the obstruction to global solvability of the Lax pair and Cauchy problem with large initial data.
\end{remark}
%%%%%%%%%%%%%%%%%%%%%%%%%%%%%%%%%%%%%%%%%%%%%%%%%%%%%%%%%%%%%%%%%%%%%%%

\end{document}